\documentclass[a4paper,UKenglish,cleveref, autoref, thm-restate]{lipics-v2021}

\usepackage[composition,notion, quotation,xcolor]{knowledge}

\usepackage{mathtools}
\usepackage{xspace}

\usepackage{tcolorbox}

\usepackage{booktabs}%better tables
\usepackage{tikz,xcolor,hyperref}
\usetikzlibrary{positioning, arrows}
\usetikzlibrary{patterns}
\usetikzlibrary{shapes}
\usetikzlibrary{decorations.pathreplacing}
\usetikzlibrary{fadings}
\usetikzlibrary{intersections}

\usepackage{todonotes}
\setuptodonotes{inline}

\usepackage{lineno}
\knowledgenewrobustcmd{\Value}[1]{\cmdkl{\mathsf{Val}_{#1}}}

\knowledgenewrobustcmd{\play}{\cmdkl{\mathsf{play}}}
\knowledgenewrobustcmd{\wt}{\cmdkl{wt}}
\knowledgenewrobustcmd{\Nind}[1]{\cmdkl{N({#1})}}

\newcommand{\Xflawed}{\Xcal_{\text{flaw}}}
\knowledgenewrobustcmd{\slopemax}{\cmdkl{s_{\max{}}}}
\knowledgenewrobustcmd{\wmax}{\cmdkl{w_{\max{}}}}
\knowledgenewrobustcmd{\MM}{\cmdkl{M}}

\newcommand{\Tmax}{\Tcal}

\knowledgenewrobustcmd{\Reg}[1]{\cmdkl{\mathsf{Reg}_{#1}}}
\knowledgenewrobustcmd{\Reglt}[1]{\cmdkl{\mathsf{Reg}_{#1}^<}}

\knowledgenewrobustcmd{\DDelta}{\cmdkl{\Delta}}
\knowledgenewrobustcmd{\kkappa}{\cmdkl{\kappa}}

\newcommand{\Maxsf}{\mathsf{Max}}
\newcommand{\Minsf}{\mathsf{Min}}
\newcommand{\Mamax}{\Maxsf}
\newcommand{\Mimin}{\Minsf}
\newcommand{\Player}{\mathsf{P}}
\knowledgenewrobustcmd{\upclock}[1]{\cmdkl{X_{#1}^\uparrow}}
\newcommand{\weight}{\mathsf{weight}}
\knowledgenewrobustcmd{\opt}[2]{\cmdkl{\mathsf{opt}^{#1}_{#2}}}
\knowledgenewrobustcmd{\Opt}[2]{\cmdkl{\mathsf{Opt}^{#1}_{#2}}}
\knowledgenewrobustcmd{\Next}{\cmdkl{\mathsf{Leaving}}}
\knowledgenewrobustcmd{\reg}[1]{\cmdkl{\mathsf{reg}(#1)}}
\knowledgenewrobustcmd{\adh}[1]{\cmdkl{\overline{#1}}}

\newcommand{\wtg}{\textsc{wtg}\xspace}
\newcommand{\Wtg}{\textsc{Wtg}\xspace}
\newcommand{\wtgs}{\textsc{wtg}s\xspace}

\knowledgenewrobustcmd{\RC}[1]{\cmdkl{\Rcal(#1)}}
\knowledgenewrobustcmd{\TC}[1]{\cmdkl{\Tcal(#1)}}
\knowledgenewrobustcmd{\Gtilde}{\cmdkl{\tilde{\Gcal}}}

\newcommand{\ie}{i.e.,~}
\newcommand{\etal}{\textit{et al.}~}

\usetikzlibrary{decorations,decorations.markings} 
\pgfkeys{/tikz/.cd,
    contour distance/.store in=\ContourDistance,
    contour distance=-10pt, % for the other orientation use a +
    contour step/.store in=\ContourStep,
    contour step=1pt,
}

\pgfdeclaredecoration{closed contour}{initial}
{% 
\state{initial}[width=\ContourStep,next state=cont] {
    \pgfmoveto{\pgfpoint{\ContourStep}{\ContourDistance}}
    \pgfcoordinate{first}{\pgfpoint{\ContourStep}{\ContourDistance}}
    \pgfpathlineto{\pgfpoint{0.3\pgflinewidth}{\ContourDistance}}
    \pgfcoordinate{lastup}{\pgfpoint{1pt}{\ContourDistance}}
    
  }
  \state{cont}[width=\ContourStep]{
     \pgfmoveto{\pgfpointanchor{lastup}{center}}
     \pgfpathlineto{\pgfpoint{\ContourStep}{\ContourDistance}}
     \pgfcoordinate{lastup}{\pgfpoint{\ContourStep}{\ContourDistance}}
  }
  \state{final}[width=\ContourStep]
  { % perhaps unnecessary but doesn't hurt either
    \pgfmoveto{\pgfpointanchor{lastup}{center}}
    \pgfpathlineto{\pgfpointanchor{first}{center}}
  }
}

\newcommand{\eqdef}{\stackrel{\mathsf{def}}{=}}

\newcommand{\set}[1]{\left\{\; {#1} \;\right\}}

\newcommand{\setof}[2]{\left\{{#1} \;:\; {#2} \right\}}

\newcommand{\Nat}{\mathbb{N}}

\newcommand{\DefineNewOrder}[4]{%
	\expandafter\newcommand\csname#1le\endcsname{\mathrel{#2}_{#4}}%
	\expandafter\newcommand\csname#1ge\endcsname{\mathrel{\reflectbox{\text{$#2$}}}_{#4}}%
	\expandafter\newcommand\csname#1leq\endcsname{\mathrel{#3}_{#4}}%
	\expandafter\newcommand\csname#1geq\endcsname{\mathrel{\text{\reflectbox{$#3$}}}_{#4}}%
	\expandafter\newcommand\csname#1eq\endcsname{\mathrel{\equiv}_{#4}}%
	\expandafter\newcommand\csname#1perp\endcsname{\mathrel{\perp}_{#4}}%
}

\newcommand{\Nbb}{\ensuremath{\mathbb{N}}}

\newcommand{\Rbb}{\ensuremath{\mathbb{R}}}

\newcommand{\Gcal}{\ensuremath{\mathcal{G}}}

\newcommand{\Mcal}{\ensuremath{\mathcal{M}}}

\newcommand{\Pcal}{\ensuremath{\mathcal{P}}}

\newcommand{\Rcal}{\ensuremath{\mathcal{R}}}
 
\newcommand{\Tcal}{\ensuremath{\mathcal{T}}}

\newcommand{\Xcal}{\ensuremath{\mathcal{X}}}

\newcommand{\ignore}[1]{}%ignore le charactère suivant.
\newcommand{\tiff}{\textit{iff} }

\if@anglais

\else

\fi

\newcommand{\intxy}[4]{\ensuremath{\mathchoice
    {\left#3\protect\TestGauche{#1},  \protect\TestDroite{#2}\right#4}
    {\left#3\protect\TestGauche{#1},  \protect\TestDroite{#2}\right#4}
    {\left#3\protect\TestGauche{#1},\protect\TestDroite{#2}\right#4}
    {\left#3\protect\TestGauche{#1},\protect\TestDroite{#2}\right#4}
}}  
	\newcommand{\intff}[2]{\intxy{#1}{#2}{[}{]}}
	\newcommand{\intof}[2]{\intxy{#1}{#2}{(}{]}}
	\newcommand{\intfo}[2]{\intxy{#1}{#2}{[}{)}}

\newcommand{\TestGauche}[1]{\ifthenelse{\equal{#1}{}}{\minf}{#1}}
\newcommand{\TestDroite}[1]{\ifthenelse{\equal{#1}{}}{\pinf}{#1}}

\newcommand{\enstq}[2]{\left\{ \left.#1\ \vphantom{#2}\right|\ #2  \right\}}
\newcommand{\pa}[2][9]{%
	\ifthenelse{#1 = 0}{#2}{}%
	\ifthenelse{#1 = 1}{(#2)}{}%
	\ifthenelse{#1 = 2}{\big(#2\big)}{}%
	\ifthenelse{#1 = 3}{\Big(#2\Big)}{}%
	\ifthenelse{#1 = 4}{\bigg(#2\bigg)}{}%
	\ifthenelse{#1 = 5}{\Bigg(#2\Bigg)}{}%
	\ifthenelse{#1 = 9}{\left(#2\right)}{}%
}

\newcommand{\distinf}[1]{\left\|#1\right\|_{\infty}}

\knowledge{notion}
 | Clock constraints
 | clock constraints
 | guards
 | guard
\knowledge{notion}
 | almost non-Zeno \wtg 
 | almost non-Zeno \wtgs 
 | almost non-Zeno
\knowledge{notion}
 | output edges
 | output edge
\knowledge{notion}
 |Value problem
\knowledge{notion}
 |Approximation problem
 |Approximation
 |Approximable
 |Approximability

\knowledge{notion}
 | kernel
 | kernels
\knowledge{notion}
 | green
 | green location
 | \emph {green} transition
 | green transitions
 | fully-green
\knowledge{notion}
 | kernel weighted timed games
 | kernel weighted timed game
 | kernel \wtg 
 | kernel \wtgs 
 | kernel game
\knowledge{notion}
 | output weight function
 | output weight functions
 | output function
\knowledge{notion}
 | value iteration algorithm
\knowledge{notion}
 | regions
 | region
\knowledge{notion}
 | $\intfo 01$-region
 | $\intfo 01$-regions
\knowledge{notion}
 | \wtg with clock values in $\intfo 01$
 | $\intfo 01$-\wtgs
 | $\intfo 01$-\wtg 
 | \textbf {[0,1)-}\wtg 
\knowledge{notion}
 | \textbf {[0,1]-}\wtg 
 | $[0,1]$-\wtg 
 | $\intff 01$-\wtg 
\knowledge{notion}
 | projected output functions
 | projected output function

\knowledge{notion}
 | \textbf {region} \wtg 
 | region \wtg 
 | region \wtgs 
 | a region weighted timed game
 | region $\intfo 01$-\wtg 
 | region weighted timed game
 
\knowledge{notion}
 | almost-divergent \wtgs 
  | almost divergent \wtgs 
\knowledge{notion}
 | semi-unfolding
 | semi-unfolding construction
 | partial unfolding
 | semi-unfolded \wtg 
\knowledge{notion}
 | time-successor
 | time-successors
\knowledge{notion}
 | circular clock difference
\knowledge{notion}
 | trimmed region \wtg 
 | trimmed region $\intfo 01$-\wtg 
 | region trimmed $\intfo 01$-\wtg 
 | region trimmed \wtg
 | trimmed region kernel \wtg 
 | trimmed
 | trimmed region (kernel) \wtg  
\knowledge{notion}
 | relaxed region
 | relaxed region \wtg 

\knowledge{notion}
 | relaxed trimmed region
 | relaxed trimmed region \wtg 
 | relaxed region trimmed kernel $\intff 01$-\wtg 
 | relaxed trimmed region kernel $\intff 01$-\wtg 
  | relaxed region $\intfo 01$-\wtg 
   | relaxed trimmed \wtg 
   |relaxed
\knowledge{notion}
 | region assignment
\knowledge{notion}
 | adherence

\knowledge{notion}
 | \textbf {bisimulation}
 | bisimulation
 | (bi)simulation

\knowledge{notion}
 | Value as a function
 | Value function
\knowledge{notion}
 | $\epsilon $-neighbourhood
 | $\epsilon $-neighbours
 | $\epsilon $-neighbour
 | epsilon-neighbours

\knowledge{notion} 
 | upper clocks
  | upper clock
\knowledge{notion} 
| sub-\wtg 
 | sub-\wtgs 
 \knowledge{notion} 
  | output functions
\knowledge{notion}
 | Relaxed guards 
  | relaxed guards

\knowledge{notion} 
 | SCC
\knowledge{notion} 
 | up-driven

\knowledge{notion} 
  | \emph {urgent}
  | urgent
  | non-urgent
 
\knowledge{notion} 
 | time-boundedness
 | time bound
 | time-bounded \wtgs 
 | time-bounded \wtg 
 | time-bounded

\knowledge{notion}
 | ideal
 | ideal strategy
 
 \knowledge{notion}
 | ideal strategy invariant

\knowledge{notion}
|shortcut transition
|shortcut transitions
|shortcut

\knowledge{notion}
| min-cost reachability game
| Min-cost reachability game
| min-cost reachability games
| Min-cost reachability games

\title{Approximating the Value of Time-Bounded Weighted Timed Games}

\author{Isa {Vialard}}{Max Planck Institute for Software Systems, Saarland Informatics Campus, Germany \and \url{https://isavialard.github.io/home/} }{vialard@mpi-sws.org}{https://orcid.org/0000-0002-7261-9342}{}
\author{Joël Ouaknine}{Max Planck Institute for Software Systems, Saarland Informatics Campus, Germany}{joel@mpi-sws.org}{}{}

\authorrunning{I. Vialard, J. Ouaknine} %TODO mandatory. First: Use abbreviated first/middle names. Second (only in severe cases): Use first author plus 'et al.'

\Copyright{Isa Vialard, Joël Ouaknine} %TODO mandatory, please use full first names. LIPIcs license is "CC-BY";  http://creativecommons.org/licenses/by/3.0/

\ccsdesc[500]{Theory of computation~Timed and hybrid models}
\keywords{Weighted timed games, approximability, real-time systems} %TODO mandatory; please add comma-separated list of keywords

\category{} %optional, e.g. invited paper

\nolinenumbers %uncomment to disable line numbering

\begin{document}

\maketitle

\begin{abstract}
The Value Problem for weighted timed games, an extension of timed automata with weights and adversity, consists in determining, given a
weighted timed game with a reachability objective and a rational
threshold, whether or not the value of the game exceeds the threshold. This
problem is known to be undecidable for weighted timed games, i.e., there is no algorithm that can compute the value exactly.
 A natural alternative is to approximate the value; however, the Approximation problem is also undecidable in general. 
 
 In this paper, we prove that the Approximation problem is decidable in $4$-\textsc{ExpTime} for \wtgs whose executions are time-bounded.
\end{abstract}
%New open problem: Approximability 2 clock
%
%
%Here WTG turn-based with $2$ clocks, positive weight. 
%\Wlog with clocks in $[0,1]$.
%Trimmed region game. Suppose that Min can reach target, so remove any Max-controlled cycle.

\section{Introduction}

Timed automata, introduced by Alur and Dill \cite{Alur94}, are a fundamental model for the verification of real-time systems. They extend finite automata with a finite number of real-valued clocks, whose values govern the enabling of transitions, while time can elapse continuously within locations. This combination of discrete and continuous dynamics gives timed automata great expressiveness, making them well suited for modeling reactive systems subject to fine-grained timing constraints.
Weighted timed games (\wtgs) are timed automata equipped with
integer weights on locations and transitions, and a game structure. The weights on transitions are instantaneous, while those on locations are rates that multiply time delays.
Locations are partitioned between player Min, who seeks to minimize the cost required to reach a target location, and player Max, who represents an uncontrollable environment.

The main problem that arises for timed games is the Value problem: given a \wtg
$\Gcal$ and a threshold 
$c$, is the value (that is, the infimum weight that a $\Minsf$'s strategy can guarantee) less than or equal to $c$. In 2015, Bouyer, Jaziri and Markey \cite{bouyer2015} showed that the Value problem is undecidable for \wtgs with three clocks or more, even under an almost-divergence~\footnote{
A \wtg is almost-divergent if any run following a cycle has cost in $\intof{-\infty}{-1}\cup \{0\} \cup \intfo{1}{+\infty}$
}
 constraint and while restricting weights to $\mathbb{N}$.
However, this problem is decidable for $1$-clock \wtgs (\cite{bouyer2006,monmege2024}).  
Recently, Guilmant, Ouaknine and Vialard \cite{two-clock-undec} showed that the Value problem remains undecidable for two-clock \wtgs, even under a time-boundedness constraint; this result closes the last major open question on the Value problem.

Faced with these undecidability results, the natural question to investigate is the Approximation problem: given a \wtg $\Gcal$, and some $\epsilon>0$, can we compute an $\epsilon$-approximation of the value?
Bouyer, Jaziri and Markey \cite{bouyer2015} gave an approximation algorithm for almost-divergent \wtgs with non-negative weights. This result was then extended to arbitrary weights (\cite{Busatto2018}). However,
Guilmant and Ouaknine \cite{guilmant2024} proved that the Approximation problem  for weighted timed games is undecidable by reduction from the promise problem on probabilistic automata. 

In this paper, we show that the Approximation problem is decidable for weighted timed games under a \emph{time-boundedness} constraint. Time-boundedness does not substantially restrict the computational power of weighted timed games, as shown by the Value undecidability result in \cite{two-clock-undec}.
Furthermore, studying timed automata under a bounded time horizon is well motivated from a practical perspective. Indeed, the execution of real-time systems is inherently limited to finite-observation windows, either because the system itself operates over a bounded duration or because verification naturally focuses on a finite segment of its execution. Moreover, the chosen time bound is typically of the same order of magnitude as the timing constraints specified in the model.

\paragraph*{Structure of the article.}
\cref{sec-bisim} proves that the value function is Lipschitz-continuous for weighted timed games in a time-bounded setting, which implies that the classical "value iteration algorithm" converges to the value function. 
\cref{sec-approx} provides an explicit upper bound on the convergence rate of the "value iteration algorithm". As a consequence, we obtain a $4$-\textsc{ExpTime} algorithm for computing arbitrarily precise approximations of the value.

Our result constitutes a significant step forward, as the strongest previously known decidability result for Approximation  (\cite{Busatto2018}) concerns almost-divergent\footnote{A \wtg is \emph{almost divergent} if every strongly connected component has either cycles of weight in $\intof{-\infty}{-1}\cup\{0\}$, or 
cycles of weight in $\{0\}\cup\intfo{1}{\infty}$.}
\wtgs. Even though almost-divergence and time-boundedness are orthogonal, the former can be seen as more constraining in practice, since the latter is naturally satisfied in real-time systems.

\noindent\textbf{Disclosure:}
AI-based tools were used to assist with writing and proof-checking, and for the computation in \cref{unfolding-N}. The authors remain fully responsible for the content of the paper.
%\delta for delay
%d for number of clocks
%t for transition
%T for set of transition
%\Tmax for time-bound

\section{Definitions}
\subsection{Weighted timed games}
Let $\mathcal{X}$ be a finite set of \textbf{clocks}. \textbf{""Clock constraints""} (or \textbf{guards}) over $\mathcal{X}$ are
	expressions of the form $x \mathrel{\bowtie} n$, where $x\in\mathcal{X}$ is a clock,
        ${\bowtie}\in\{<,\leq,=,\geq,>\}$ is a comparison symbol, and
        $n\in\mathbb{N}$ is a natural number.  We denote with $\mathcal{C}$ 
        the set of all clock constraints over $\mathcal{X}$.
A \textbf{valuation} on $\mathcal{X}$ is  a function $\nu:\mathcal{X}\to\mathbb{R}_{\geq 0}$. 
	For $\delta\in\mathbb{R}_{\geq 0}$ we denote with $\nu+\delta$ the valuation such
        that, for every clock $x \in \mathcal{X}$, $(\nu+\delta)(x)=\nu(x)+\delta$. Let
        $X \subseteq \mathcal{X}$ be a subset of all clocks. We write
        $\nu[X :=0]$ for the valuation such that, for every clock $x
        \in X$, $\nu[X :=0](x) = 0$, and $\nu[X:=0](y) = \nu(y)$ for
        all other clocks $y \notin X$.
	For $C \subseteq \mathcal{C}$ a set of clock constraints over $\mathcal{X}$, we say that
        the valuation $\nu$ \textbf{satisfies} $C$, denoted with $\nu \models
        C$, if and only if all the comparisons in $C$ hold
        when replacing each clock $x$ by its corresponding value $\nu(x)$.

\begin{definition}
	A \textbf{(turn-based) weighted timed game} is given by a
        tuple $\mathcal{G}=$\linebreak $(L_\mathsf{Min}, L_\mathsf{Max},G,\mathcal{X},T,w)$, where:
	\begin{itemize}
        \item $L_\mathsf{Min}$ and $L_\mathsf{Max}$ are the (disjoint)
          sets of \textbf{locations} belonging to Players $\mathsf{Min}$ and
          $\mathsf{Max}$ respectively; we denote with $L = L_\mathsf{Min} \cup
          L_\mathsf{Max}$ the set of all locations. (In drawings,
          locations belonging to $\mathsf{Min}$ are depicted by blue
          circles, and those belonging to $\mathsf{Max}$ are depicted
          by red squares.)
                   \item $G\subseteq L_{\mathsf{Min}}$ are the \textbf{goal locations}.
		\item $\mathcal{X}$ is a set of clocks.
		\item $T\subseteq (L \setminus G) \times2^\mathcal{C}\times 2^\mathcal{X}\times
                  L$ is a set of \textbf{(discrete) transitions}. A
                  transition $\ell \xrightarrow{C,X} \ell'$
                  enables moving from location $\ell$ to
                  location $\ell'$, provided all "clock constraints" in
                  $C$ are satisfied, and afterwards resetting all clocks in $X$
                  to zero. 
		\item $w:(L \setminus G) \cup T\to \mathbb{Z}$ is a \textbf{weight function}.
                \end{itemize}
In the above, we assume that all data (set of locations, set of
clocks, set of transitions, set of "clock constraints") are finite.             
\end{definition}

          Let $\mathcal{G}= (L_\mathsf{Min}, L_\mathsf{Max},G,\mathcal{X},T,w)$ be
          a \wtg.  A \textbf{configuration} over
          $\mathcal{G}$ is a pair $(\ell,\nu)$, where $\ell\in L$ and $\nu$
          is a valuation on $\mathcal{X}$. Let $\delta \in \mathbb{R}_{\geq 0}$ be a
          \textbf{delay} and $t = \ell \xrightarrow{C,X} \ell' \in T$
          be a discrete transition. One then has a \textbf{delayed
            transition} (or simply a \textbf{transition} if the
          context is clear)
          $(\ell,\nu) \xrightarrow{\delta,t} (\ell',\nu')$ provided that
          $\nu+\delta \models C$ and $\nu' = (\nu+\delta)[X := 0]$. Intuitively,
          control remains in location $\ell$ for $\delta$ time units, after
          which it transitions to location $\ell'$,
          resetting all the clocks in $X$ to zero in the process.  The
          \textbf{weight} of such a delayed transition is
          $\delta \cdot w(\ell) + w(t)$, taking into account both the time
          spent in $\ell$ and the weight of the discrete
          transition $t$.
          
\AP          We say that a \wtg is ""time-bounded"" if there exists $\Tmax\in\mathbb{N}$ such that every play starting in a configuration $(\ell,\bold{0})$ lasts at most $\Tmax$ time units. This property is decidable (see \cref{lem-deciding-time-bounded}).
%  and some clock $t$ that cannot be reset such that every transition verifies $t\leq T$.

          As noted in~\cite{BusattoGastonMR23},
          without loss of generality, one can assume that no
          configuration (other than those associated with goal
          locations) is deadlocked; in other words, for any location
          $\ell \in L\setminus G$ and valuation
          $\nu \in \mathbb{R}_{\geq 0}^{\mathcal{X}}$, there exists
          $\delta \in \mathbb{R}_{\geq 0}$ and $t \in T$ such that
          $(\ell,\nu) \xrightarrow{\delta,t} (\ell',\nu')$.\footnote{This
            can be achieved by adding transitions to a
            sink location for all locations controlled by $\mathsf{Min}$ and
            unguarded transitions to a goal location for the ones
            controlled by $\mathsf{Max}$. To preserve time-boundedness, we ask that these transitions be urgent (\ie they can only be taken with delay $0$). This can be enforced by adding a clock, that is reset by every transition, and only consider valuation where this clock is null. }

          Let $k\in\mathbb{N}$.  A \textbf{run} $\rho$ of length $k$ over $\mathcal{G}$ from a given configuration
          $(\ell_0,\nu_0)$ is a sequence of matching delayed
          transitions, as follows:
          \[ \rho = (\ell_0,\nu_0) \xrightarrow{\delta_0,t_0}
            (\ell_1,\nu_1) \xrightarrow{\delta_1,t_1} \cdots
            \xrightarrow{\delta_{k-1},t_{k-1}}  (\ell_k,\nu_k) \, . \]
          The \textbf{weight} of $\rho$ is the
          cumulative weight of the underlying delayed transitions:
          \[ \mathsf{weight}(\rho) = \sum_{i=0}^{k-1} (\delta_i\cdot
            w(\ell_i) + w(t_i)) \, . \]
          An infinite run $\rho$ is defined in the obvious way;
          however, since no goal location is ever reached, its weight is defined to be infinite:
          $\mathsf{weight}(\rho)=+\infty$.

          A run is \textbf{maximal} if it is either infinite or cannot be extended
          further. Thanks to our deadlock-freedom assumption, finite maximal
          runs must end in a goal location. We refer to maximal runs
          as \textbf{plays}.

          For any Player $\mathsf{P} \in
          \{\mathsf{Min},\mathsf{Max}\}$, we denote with
 $\mathcal{FR}_{\mathcal{G}}^{\mathsf{P}}$
 the collection of all non-maximal finite runs of $\mathcal{G}$ 
          ending in a location belonging to
          Player $\mathsf{P}$. A \textbf{strategy}
         for Player $\mathsf{P}$ is a mapping
           $\sigma_{\mathsf{P}} : \mathcal{FR}_{\mathcal{G}}^{\mathsf{P}}
           \to \mathbb{R}_{\geq 0} \times T$ such that for all 
           finite runs $\rho \in \mathcal{FR}_{\mathcal{G}}^{\mathsf{P}}$
           ending in configuration $(\ell,\nu)$ with
           $\ell \in L_{\mathsf{P}}$, the delayed transition
           $(\ell,\nu) \xrightarrow{\delta,t} (\ell',\nu')$ is valid,
           where $\sigma_{\mathsf{P}}(\rho)=(\delta,t)$ and $(\ell',\nu')$
           is some configuration (uniquely determined by
           $\sigma_{\mathsf{P}}(\rho)$ and $\nu$).

\AP Let us fix a starting configuration $(\ell_0,\nu_0)$, and let
$\sigma_{\mathsf{Min}}$ and $\sigma_{\mathsf{Max}}$ be strategies for
Players $\mathsf{Min}$ and $\mathsf{Max}$ respectively (one speaks of
a \emph{strategy profile}). Let us denote with
$\intro*\play_\mathcal{G}((\ell_0,\nu_0),\sigma_{\mathsf{Min}},\sigma_{\mathsf{Max}})$
the unique maximal run starting from configuration
$(\ell_0,\nu_0)$ and unfolding according to the strategy profile
$(\sigma_{\mathsf{Min}},\sigma_{\mathsf{Max}})$: in other words,
for every strict finite prefix $\rho$ of
$\play_\mathcal{G}((\ell_0,\nu_0),\sigma_{\mathsf{Min}},\sigma_{\mathsf{Max}})$
in $\mathcal{FR}_{\mathcal{G}}^{\mathsf{P}}$, the delayed transition
immediately following $\rho$ in
$\play_\mathcal{G}((\ell_0,\nu_0),\sigma_{\mathsf{Min}},\sigma_{\mathsf{Max}})$
is labelled with $\sigma_{\mathsf{P}}(\rho)$.

%Recall that finite plays have finite weights, and infinite plays have weight infinite.
Recall that the objective of Player $\mathsf{Min}$ is to reach a goal
location through a play of minimal weight. Player
$\mathsf{Max}$ has the opposite objective, trying to avoid goal
locations, and, if not possible, to maximise the cumulative weight of
any attendant play. This gives rise to the following two symmetrical definitions:
\begin{align*}
\overline{\mathsf{Val}}_\mathcal{G}(\ell_0,\nu_0) &=
  \inf_{\sigma_{\mathsf{Min}}} \left\{ \sup_{\sigma_\mathsf{Max}}
                                              \left\{
                                              \mathsf{weight}(\play_\mathcal{G}((\ell_0,\nu_0),\sigma_{\mathsf{Min}},\sigma_{\mathsf{Max}}))\right\}
  \right\} \mbox{\ and}\\
\underline{\mathsf{Val}}_\mathcal{G}(\ell_0,\nu_0) &=
  \sup_{\sigma_{\mathsf{Max}}} \left\{\inf_{\sigma_\mathsf{Min}}
                                               \left\{\mathsf{weight}(\play_\mathcal{G}((\ell_0,\nu_0),\sigma_{\mathsf{Min}},\sigma_{\mathsf{Max}}))
                                               \right\} \right\} \, .
\end{align*}

$\overline{\mathsf{Val}}_\mathcal{G}(\ell_0,\nu_0)$ represents the smallest 
possible weight that Player $\mathsf{Min}$ can possibly achieve,
starting from configuration $(\ell_0,\nu_0)$,
against optimal play from Player $\mathsf{Max}$, and conversely for
$\overline{\mathsf{Val}}_\mathcal{G}(\ell_0,\nu_0)$: the latter represents
the largest possible weight that Player $\mathsf{Max}$ can enforce,
against best play from Player $\mathsf{Min}$.\footnote{Technically
  speaking, these values may not be literally achievable; however
  given any $\varepsilon > 0$, both players are guaranteed to have
  strategies that can take them to within $\varepsilon$ of the optimal value.}
\AP As noted in~\cite{BusattoGastonMR23}, turned-based
\wtgs are \emph{determined}, and therefore
$\overline{\mathsf{Val}}_\mathcal{G}(\ell_0,\nu_0) =
\underline{\mathsf{Val}}_\mathcal{G}(\ell_0,\nu_0)$ for any starting
configuration $(\ell_0,\nu_0)$; we denote this common value with
$\intro*\Value{\mathcal{G}}(\ell_0,\nu_0)$.
Note that $\Value{\mathcal{G}}(\ell_0,\nu_0)$ can take on real numbers, or either
	of the values $-\infty$ and $+\infty$.

%\textbf{In the remainder of this paper, every weighted timed game is turn-based, with non-negative weights, of value in $\mathbb{R}$}.

\begin{definition}[Value problem]
    The \emph{""Value problem""} asks, given a \wtg $\Gcal$, an initial configuration
    $(\ell,\nu)$ and a threshold $c\in\mathbb{Z}$, whether
    $
        \Value{\Gcal}(\ell,\nu)\leq c.
    $
\end{definition}
\begin{definition}[Approximation problem]
    The \emph{""Approximation problem""} asks, given a \wtg $\Gcal$, an initial
    configuration $(\ell,\nu)$ and a precision $\epsilon>0$, to compute a value
    $V\in\mathbb{R}\cup\{-\infty,+\infty\}$ such that
    $
        \left|V-\Value{\Gcal}(\ell,\nu)\right|\leq\epsilon.
    $ if $V\in R$, and $V=\Value{\Gcal}$ otherwise.
\end{definition}

\subsection{Min-cost reachability games}

\begin{definition}
A ""min-cost reachability game"" is given by a tuple $\Mcal=(L_\Minsf, L_\Maxsf, G, T, w)$ where
\begin{itemize}
\item $L_\mathsf{Min}$ and $L_\mathsf{Max}$ are the (disjoint)
          sets of \textbf{locations} belonging to Players $\mathsf{Min}$ and
          $\mathsf{Max}$ respectively; we denote with $L = L_\mathsf{Min} \cup
          L_\mathsf{Max}$ the set of all locations. (In drawings,
          locations belonging to $\mathsf{Min}$ are depicted by blue
          circles, and those belonging to $\mathsf{Max}$ are depicted
          by red squares.)
          \item $G\subseteq L_{\mathsf{Min}}$ are the \textbf{goal locations}.
		\item $T\subseteq (L \setminus G) \times
                  L$ is a set of \textbf{(discrete) transitions}.
		\item $w: T\to \mathbb{Z}$ is a \textbf{weight function}.
\end{itemize}
\end{definition}
"Min-cost reachability games" can be seen as a subclass of \wtgs, where the "Value problem" is decidable (\cite{BGHM16}). 

Observe that, for
 any \wtg, one can obtain a "min-cost reachability game" by removing clocks, clock constraints and location weights.

\subsection{Region construction and $[0,1)$-\wtgs}

%
%\subsection{Restraining Clock Values to [0,1)}
%\label{sec-zero-one}
\begin{definition}
A ""$\intfo01$-\wtg"" is a \wtg with the global condition $x\leq 1$ on every clock $x$, such that every clock reaching $1$ is immediately reset.
\end{definition}

\begin{lemma}[\cite{vialard2025decidingvaluetwoclocknonzeno}]
Any weighted timed game with bounded clocks can be turned into  an equivalent "$\intfo01$-\wtg".
\end{lemma}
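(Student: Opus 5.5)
The plan is to rescale time so that every clock constraint has constant $1$, and then track the integer parts of the clocks in the finite control. Let $M$ be the largest constant appearing in a guard of $\Gcal$; since clocks are bounded, we may assume (after adding urgent invariants if needed) that every clock value stays in $\intff 0M$.

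\emph{Step 1: rescaling.} First I would replace the single global time unit by one that makes all fractional information live in $\intfo01$. The naive approach divides time by $M$ and multiplies location weights by $M$, which makes every constant at most $1$. However, constants then become fractions, which guards over $\Nbb$ cannot express. So I will keep unit time and encode instead the integer part $\lfloor x\rfloor\in\{0,\dots,M\}$ of each clock $x$ in the location. The new locations are pairs $(\ell,\mathbf{n})$ with $\mathbf{n}\in\{0,\dots,M\}^{\Xcal}$, and the new clock $x'$ stands for the fractional part of $x$. Ownership of locations, goal status and location weights are inherited from $\ell$.

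\emph{Step 2: simulating time elapse and guards.} When $x'$ reaches $1$, the $\intfo01$-semantics forces an immediate reset of $x'$. I would model this by weight-$0$ urgent transitions from $(\ell,\mathbf{n})$ to $(\ell,\mathbf{n}+e_x)$ that reset $x'$, owned by the owner of $\ell$. Several clocks may hit $1$ simultaneously; these resets are processed in a fixed order. Each original guard $x\bowtie c$ then becomes a Boolean condition on $\mathbf{n}(x)$ together with a guard $x'\bowtie 0$, $x'>0$ or $x'=0$. Concretely, $x<c$ becomes ``$\mathbf{n}(x)<c$'' or ``$\mathbf{n}(x)=c-1$ and anything'', while $x=c$ becomes ``$\mathbf{n}(x)=c$ and $x'=0$''. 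A reset of $x$ in the original game sets $\mathbf{n}(x)=0$ and $x'=0$.

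\emph{Step 3: equivalence.} The map $(\ell,\nu)\mapsto((\ell,\lfloor\nu\rfloor),\{\nu\})$ is a bijection between reachable configurations, and it preserves weights of delayed transitions. A delay $\delta$ in $\Gcal$ corresponds to a sequence of delays summing to $\delta$, interleaved with the zero-weight, zero-delay bookkeeping resets; the location weight is constant along this sequence. Strategies therefore translate in both directions, so plays correspond with equal weights. This gives equality of $\Value{}$ at corresponding configurations, and hence preserves both the Value problem and the Approximation problem.

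The main obstacle is the bookkeeping transitions. They must not give the owner of the location extra choices, such as delaying after a clock reaches $1$ or choosing which clock to increment, and they must not break deadlock-freedom or time-boundedness. I would handle this by making the bookkeeping moves the unique enabled transitions at $x'=1$, taken with delay $0$. This can be enforced, as in the paper's footnote, with an auxiliary clock that is reset on every transition. It then remains to check that the forced ordering of simultaneous increments is harmless.
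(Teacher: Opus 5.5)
Your overall route is the one the paper indicates. The paper only sketches it, deferring to the cited work: store the integer parts of the clocks in the location, keep the fractional parts as clocks, and add zero-weight moves when a fractional part reaches $1$.

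\textbf{Main gap.} Step 3's claim that strategies translate in both directions fails in the direction from the $[0,1)$-game back to $\Gcal$. In $\Gcal$ a delay is a legal choice only if some outgoing transition is enabled when it ends. Your bookkeeping moves let the owner of $\ell$ cut a delay at an integer crossing and carry on. The owner can therefore wander into configurations from which no translated transition of $\ell$ can ever fire. For $\mathsf{Min}$ this only adds bad options, but for $\mathsf{Max}$ it is a genuinely new power.

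Here is a concrete example. Let $\ell$ be a $\mathsf{Max}$ location whose only transition goes to $G$ with guard $x\le 1$, and suppose the range of $\mathbf{n}(x)$ contains $2$ because $x$ can reach larger values elsewhere in $\Gcal$. In $\Gcal$, $\mathsf{Max}$ must fire within one time unit from $x=0$. In your game, $\mathsf{Max}$ can cross to $\mathbf{n}(x)=2$, where the translated guard is false forever. $\mathsf{Max}$ can then keep crossing until no move exists at all, so the play is stuck outside $G$, which has no counterpart in $\Gcal$.

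Making bookkeeping urgent at $x'=1$ does not prevent this. The paper's deadlock-freedom convention does not either: its escape transitions are urgent (an auxiliary clock must equal $0$), so they are disabled at a landing point reached after a positive delay.

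There are two ways to repair this.
\begin{itemize}
\item Keep a bookkeeping move only when the landing configuration can still fire some transition of $\ell$ after a further delay. This test amounts to comparisons $y-z\bowtie k$, i.e.\ it depends on the order of fractional parts. Non-diagonal guards on the $x'$ cannot express that order, so the location must record it too, essentially building the encoding on top of the region abstraction of $\Gcal$.
\item Alternatively, derive the absence of dead ends from a deadlock-freedom hypothesis that holds at every valuation, not only at decision points, together with a range of $\mathbf{n}$ equal to the true clock bound.
\end{itemize}

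\textbf{Smaller points.}
\begin{itemize}
\item \emph{Simultaneous crossings.} In the paper's framework every transition must land in a $[0,1)$-region: the region construction requires $r'[X:=0]$ to have no clock equal to $1$. So a transition fired while several clocks equal $1$ must reset all of them, and your fixed-order sequence of single-clock resets violates this. Use one bookkeeping transition per set $S$ of clocks simultaneously at $1$, with guards $x'=1$ on $S$ and $x'<1$ elsewhere, that resets $S$ and increments $\mathbf{n}$ on $S$. This also settles the ordering question you left open.
\item \emph{Range of $\mathbf{n}$.} It must be the actual clock bound, not the largest guard constant. An untested, never-reset clock can exceed the largest constant. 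Cutting it off with invariants, which this model does not have anyway, would remove legal plays and change the value.
\item \emph{The bijection claim.} The map $(\ell,\nu)\mapsto((\ell,\lfloor\nu\rfloor),\{\nu\})$ is not a bijection, because configurations with some $x'=1$ occur just before bookkeeping. State the equivalence as a relation instead.
\end{itemize}
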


The main idea of the construction is to use locations to hold the information of the integer parts of clock valuations. Since every clock is bounded, this information is finite.

The main interest in dealing with "$\intfo01$-\wtgs" is that it allows us to present the well-known notions of  "regions" and "region \wtgs"
with simpler notations.

\begin{definition}
	Let $\Xcal$ be a set of clocks.
%	 in a "$\intff01$-\wtg". 
%	Let $M,N\in\mathbb N$ be natural numbers. $M$ is the \textbf{bound} and $N$ is the
%	\textbf{granularity}. 
	A \textbf{""region""} over $\Xcal$ is a tuple $r=(X_0,\dots,X_p,X_{=1})$ such that $X_i\neq \emptyset$ for all $1\leq i \leq p$, and
$\{X_0,\dots,X_p,X_{=1}\}$ is a partition of $\Xcal$: $\Xcal=\biguplus_{i=0}^pX_i \uplus X_{=1}$.

	We denote with $\intro*\Reg{\Xcal}$ the set of regions over $\Xcal$.
	A valuation $\nu$ is said to \textbf{belong} to region $r$, denoted with $\nu\in r$, whenever
	\begin{itemize}
		\item $\forall x\in\Xcal$, $\nu(x)=0\Leftrightarrow x\in X_0$,
		\item $\forall x\in\Xcal$, $\nu(x)=1\Leftrightarrow x\in X_{=1}$,
		\item $\forall x,y\in\Xcal$, $0< \nu(x)< \nu(y)<1\Leftrightarrow \exists i,j\in\{1,\dots, p\}\text{ s.t. } i<j\wedge x\in X_i\wedge y\in X_j$.
%		\item $\forall x,y\in\Xcal$, $\ \nu(x)= \nu(y)<1\Leftrightarrow \exists i\in\{0,\dots, p\}\text{ s.t. } x,y\in X_i$.
	\end{itemize}

For $r=(X_0,\dots, X_p, X_{=1})$ a "region" and $X$ a non-empty subset of $\Xcal$, we denote with $r[X:=0]$ the "region" that contains $\nu[X:=0]$ for all $\nu\in r$.
%Formally
%$(X_0\cup X,X'_1\setminus X,\dots, X'_{p'}\setminus X, X_{=1}\setminus X)$
%where $(X'_1,\dots, X'_{p'})$ is obtained from $(X_1,\dots, X_{p})$ by removing any $X_i\subseteq X$.
% In other words, $r[X:=0]$ is the "region"   
 \AP We say that $r$ is \textbf{""$\intfo01$-region""} when $X_{=1}=\emptyset$, in which case we abuse notation and write $r$ as  $(X_0,\dots, X_p)$. We denote with $\intro*\Reglt{\Xcal}$ the set of "$\intfo01$-regions" over $\Xcal$.
Clocks in $X_p$ are called ""upper clocks"" of $r$. 

\AP The \textbf{""time-successors""} of a "region"  $r$ are the "regions" that can be visited from $r$ as time progresses. More formally, a "time-successor" of $r=(X_0,\dots, X_p)\in \Reglt{\Xcal}$  is a "region" $r'=(X'_0,\dots, X'_{p'}, X'_{=1})$ such that either $r'=r$; or $X_0'=\emptyset$ and 
$X'_i=X_{i-1}$ for $1\leq i \leq p$, and either $X_p=X'_{=1}$ or $X_p=X'_{p+1}$ (and  then $X'_{=1}=\emptyset$).

\end{definition}

%We often abuse notation and write $r$ for the set of valuations satisfying $\nu\in r$.

\begin{definition}[Region \wtg \cite{Alur94}]
Let $\mathcal G = (L_\mathsf{Min}, L_\mathsf{Max},G,\mathcal{X},T,w)$ be a "$\intfo01$-\wtg".
The \textbf{region construction} $\intro*\RC{\mathcal G}$ of $\Gcal$ is the "$\intfo01$-\wtg" $\RC{\Gcal}=(L'_\mathsf{Min}, L'_\mathsf{Max},G',\mathcal{X},T',w')$ with
\begin{itemize}
	\item $L'_\Player=L_\Player\times\Reglt{\Xcal}$
	for $\mathsf P\in\{\Minsf,\Maxsf\}$.
	\item $G'=G\times\Reglt{\Xcal}$.
	\item For every $r=(X_0,\dots X_p)\in \Reglt{\Xcal}$, for every $r'=(X'_0,\dots, X'_{p'}, X'_{=1})\in \Reg{\Xcal}$ a "time-successor" of $r$,
	if $t:\ell \xrightarrow{C,X} \ell' \in T$ then
	$t_r:(\ell,r) \xrightarrow{C\cup C(r'),X} (\ell',r'[X:=0])\in T'$,
with $$C(r')=\setof{(x=0)}{x\in X'_0}\;\cup\; \setof{(x=1)}{x\in X'_{=1}}\;\cup\; \setof{(0<x<1)}{x\in X'_i, 1\leq i\leq p}.$$
\item for all $t\in T$, $r\in \Reglt{\Xcal}$, $w'(t_r)=w(t)$.
	\item For $\ell\in L_{\Minsf}\cup L_{\Maxsf}$ and $r\in\Reglt{\Xcal}$, $w'(\ell,r)=w(\ell)$.
\end{itemize}
For any $\ell'=(\ell,r)\in L'$, we denote $r$ with $\reg{\ell'}$.
\end{definition}
Applying the $\intfo01$-construction then the region construction yields the same location complexity as applying directly the classical region construction
as defined in \cite{Alur94}.

%%%%%%%%%%%%%%%%%%% Past trim here

\AP Since the previous transformations preserve the value, we can now assume that any \wtg $\Gcal$ is \textbf{a "region weighted timed game"}, defined as in \cite{vialard2025decidingvaluetwoclocknonzeno}:
\begin{definition}
A ""region \wtg"" is a \wtg with a ""region assignment"" $\intro*\reg\cdot : L\to\Reglt{\Xcal}$, such that
%	\item there is some initial location $(\ell_0,r_0)$.
%	\item for every location $(\ell,r)$, for every $\nu_0\in \tilde{r_0}$, for any path $\rho:
%	((\ell_0,r_0),\nu_0)\dots \rightarrow ((\ell,r),\nu) \rightarrow \dots$, $\nu\in \tilde{r}$.
for any transition $t:\ell\xrightarrow{C,X} \ell'$, the valuations $\nu+\delta$ with $\nu\in \reg{\ell}$ and $\delta\geq 0$ that satisfy $C$ are contained in a unique "region" $r$, such that $r[X:=0]=\reg{\ell'}$. Furthermore, for any initial configuration $(i,\nu)$, we require $\nu\in \reg{i}$.
\end{definition}

	Obviously $\RC{\Gcal}$ is a "region \wtg " for any "$\intfo01$-\wtg" $\Gcal$.

%	Let us now ``trim'' $\RC{\Gcal}$, \ie delete every useless transition, and every useless guard on transitions:

	\begin{definition}[Trimmed region \wtg]
	\label{def-trimmed}
	A  region $\intfo01$-\wtg $\Gcal$ is \textbf{""trimmed""} if for any transition $t:\ell\xrightarrow{C,X} \ell'$  and any "region" $r=\reg{\ell}$ in $\Gcal$,
	\begin{itemize}
		\item for any valuation $\nu\in r$ there exist some $\delta\geq0$ such that $\nu+\delta\models C$.
		\item for any $c\in C$, there exists a valuation $\nu\in r$ and some $\delta\geq 0$ such that $\nu+\delta$ is in $\intff01^\Xcal$ and
		$\nu+\delta\not\models c$.
	\end{itemize}
\end{definition}

In other words, trimming a "region \wtg" means deleting inaccessible transitions from any tuple location-region, and removing unnecessary guards (the ones that are always verified from the "region"). Removing useless transitions and guards can always be done from any "region \wtg" without change in value.
\begin{remark}
\label{rem-trim}
For any finite path $\pi$ in a "trimmed region \wtg" starting in a location $\ell$, for any $\nu\in \reg{\ell}$, there exists a valid run $\rho$ following $\pi$ starting from $(\ell,\nu)$.
\end{remark}

Since every "$\intfo01$-\wtg" $\Gcal$ is equivalent to the "region $\intfo01$-\wtg" $\RC{\Gcal}$, and every "region \wtg" is equivalent to a "trimmed region \wtg", we can always assume $\Gcal$ to be a "trimmed region \wtg".

%Observe that the region construction essentially preserves "time-boundedness".
% if in $\Gcal$ a clock $t$ is bounded by some time bound $T$, and cannot be reset, then in $\RC{\Gcal}$ we know that $t$ is reset at most $T$ times and only when $t=1$. Thus any play lasts at most $T$ time units (after the $T$-th reset of $t$, no positive delay can be taken).
%We say that a trimmed region $\intfo01$-\wtg has a "time bound" if it is
%obtained by applying the region-then-trim construction to a "time-bounded" \wtg. 

\begin{remark}[Finite value assumption.]
\label{rem-finite-V} 

Without loss of generality, we can assume that the value function of $\Gcal$ is in $\mathbb{R}$. Indeed, 
since reachability is decidable in timed games, the set of locations where the value function is $+\infty$ can be computed in polynomial time in $|\RC{\Gcal}|$.

However, the set of $-\infty$ location is undecidable in the general case (\cite{BusattoGastonMR23}, Proposition 3.12). Fortunately, because of the "time-bounded" setting, a $-\infty$ value can only be caused by cycles with negative transition weights. Thus, if a configuration $(\ell,\nu)$ with $\nu\in\reg{\ell}$ has infinite value, then any valuation in $\{\ell\}\times \reg{\ell}$ has infinite value too. The $-\infty$ locations in a trimmed region time-bounded \wtg are thus exactly $-\infty$ locations of the corresponding "min-cost reachability game", which can be computed in polynomial time (if the transition weights are given in unary), according to \cite{BGHM16}.

Either the initial location is a $\pm\infty$ location, or we can remove every location of infinite value without modifying the value in the remaining locations, since $\Minsf$ (resp. $\Maxsf$) can avoid any location of value $+\infty$ (resp. $-\infty$).
%\todo{chck and cite}
\end{remark}

%\begin{remark}[$0$-delay cycles]
%\label{rem-nodelay}
%Without loss of generality, we consider only pairs of $\Minsf$ and $\Maxsf$ strategies that cannot produce a play with a $0$-delay cycle in a region game. Indeed, if this cycle has negative transition weight, then $\Minsf$ cannot stop $\Maxsf$ from leaving this cycle and conversely, otherwise this would yield a $-\infty$ or $+\infty$ Value, respectively. Thus one of the player can and should leave this cycle. If this cycle has non-negative transition weight, then such a cycle holds no interest for $\Minsf$. Indeed, in a region game, following a $0$-delay cycle does not change the clock valuation Since there is no location of value $+\infty$ (see \cref{rem-finite-V}), then $\Maxsf$ cannot stop $\Minsf$ from leaving this cycle.
%\end{remark}
%%Cette remarque sert à ce que la stratégie idéale ne nous fasse pas tourner en rond
%%Mais en fait on veut vraiment trimmer

\begin{definition}[Sub-\wtgs]
A ""sub-\wtg"" is a region \wtg with ""output functions"" assigned  to each location in $G$. These output functions are piecewise linear functions defined on the region of the associated location.
\end{definition}
\AP
This notion allows us to study independently a part of a \wtg if we know the value function outside (or an approximation); this value function is given to the "sub-\wtg" as its "output functions".
We call an ""output edge"" any edge in a "sub-\wtg" leading to a goal location.

\subsection{Value iteration algorithm}

%\begin{theorem}
%\label{via-tends}
%The value iteration algorithm tends toward the value for  \wtg with non-negative weights and any number of clocks.
%\end{theorem}

\begin{definition}[Value iteration algorithm]
Let $\Gcal=(L_{\Minsf},L_{\Maxsf},G,\Xcal,T,w)$ be a "region \wtg".
The \textbf{""value iteration algorithm""} builds, for each location $\ell$ and for all $k\geq 0$, a function $\intro*\opt{\ell}{k}: \reg{\ell} \to \Rbb\cup\{+\infty\}$ such that
$\opt{\ell}{k}(\nu)$ is the value of the game starting in $\ell$ with clock valuation $\nu$, when Min is restricted to strategies reaching G within k steps.
The $\opt{}{}$ functions are built inductively:
\begin{itemize}
\item If $\ell \in G$, $\opt{\ell}{k}$ is the constant $0$ function for any $k\in \mathbb{N}$, or the output function associated to $\ell$ if $\Gcal$ is a "sub-\wtg".
\item If $\ell\not\in G$, $\opt{\ell}{0}$ is the constant $+\infty$ function.
\item for any $k\in \Nat$, for any $\ell\not\in G$, $\opt{\ell}{k+1}$ is obtained from the $\opt{}{}$ functions at step $k$: if $\ell$ belongs to $\Minsf$ (resp.~$\Maxsf$), then for all $\nu\in\reg{\ell}$,
\[\opt{\ell}{k+1}(\nu)=
\underset{\text{(resp.~$\sup$) }}{\inf}  \enstq{\opt{\ell'}{k}((\nu+\delta)[X:=0])+\delta \cdot w(\ell)+w(t)}
{\begin{array}{c}
t:\ell \overset{C,X}{\rightarrow} \ell'\in T,\\
 \delta\geq 0, \nu+\delta \models C
\end{array}
}.\]
\end{itemize}
(Because of the no-deadlock assumption, this set is not empty.)
%We denote with $\Fcal$ the operator which computes $\opt{}{k+1}$ from $\opt{}{k}$.
\end{definition}

Note that $\opt{\ell}{k}(\nu)=\Value{\Gcal}^{\leq k}(\ell,\nu)$ the value of the game starting from configuration $(\ell,\nu)$ when $\Minsf$ must reach $G$ in at most $k$ transitions.
Naturally, the value iteration operator is monotonous ($\opt{\ell}{k+1}(\nu)\leq \opt{\ell}{k}(\nu)$ for all valuation $\nu$). If there exists $k$ such that, for all locations $\ell$, $\opt{\ell}{k+1}= \opt{\ell}{k}$, then the "value iteration algorithm" terminates and $\opt{\ell}{k}(\nu)=\Value{\Gcal}(\ell,\nu)$. Consequently, the value of the \wtg is obtained even when considering plays of length at most $k$.

% is the value of the game started in $\ell$ with clock valuation $\nu$. 
Unfortunately, termination is not guaranteed (otherwise the "Value problem" would be decidable), except in some special cases:
one-clock \wtgs with non-negative weights (\cite{bouyer2006}), or
two-clock almost non-Zeno \wtgs (\cite{vialard2025decidingvaluetwoclocknonzeno}).
%Indeed, Guilmant, Ouaknine and Vialard \cite{two-clock-undec} have shown that the value iteration does not terminate in general on \wtgs with two or more clocks.

 Furthermore the "value iteration algorithm" might not even tend toward the value function:

\begin{figure}
\begin{center}
\includegraphics[scale=1]{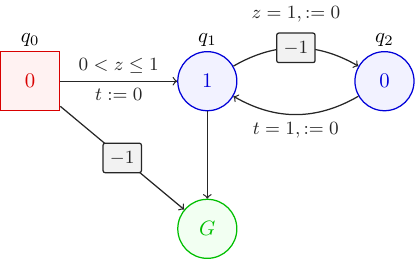}
\end{center}
 \caption{\Wtg of value $-1$, where $\opt{}{k}(0,0)=0$ for all $k\in\Nat$.}
  \label{fig-ctrex}
\end{figure}

\begin{example}
In \cref{fig-ctrex}, starting in $q_0$, $\Maxsf$ can either end the game with cost $-1$, or wait for $\epsilon>0$ time units and move to $q_1$. From $q_1$, $\Minsf$ has access to a cycle of weight $-\epsilon$, and a direct transition to goal location. Thus $\Value{}(q_1,(0,\epsilon))=-\infty$ if $\epsilon>0$, therefore $\Value{}(q_0,(0,0))=\max \{-\infty,-1\}=-1$. However, $\opt{q_1}{k}(0,\epsilon)=- \frac{k-1}{2}\epsilon$. Thus, $\opt{q_0}{k+1}(0,0)=\sup_{0<\epsilon\leq 1} \opt{q_1}{k}(0,\epsilon)=0$.
\end{example}

Monmege, Parreaux and Reynier \cite{monmege2024}(Theorem 2.4) showed that the "value iteration algorithm" tends toward the value for one-clock weighted timed game. In \cref{sec-bisim} we will show that "time-bounded \wtgs" verify this property too, and give an upper bound on the convergence speed of the "value iteration algorithm" in \cref{sec-approx}.

%According to \cite{monmege2024}, if the $\opt{}{}$ functions uniformly converge over each region, then they tend toward their greatest fixpoint, which is equal to the value.

\section{Continuity and slope of the Value function in bounded time}
\label{sec-bisim}

%\begin{lemma}
%Maximum slope in absolute value of the Value function is $\smax\leq 2T\wmax d $.
%\end{lemma}
%NB: le 2 est là car $||\nu_1-\nu_2||_\infty \leq \epsilon$ implies $2\epsilon$-close par glissières.

%To estimate how much deviating from the ideal strategy can cost, we need the following result.
\begin{restatable}[Lipschitz-continuity of the Value function]{theorem}{LipTB}
\label{thm-Lip-TB}
Let $\Gcal$ be a "region $\intfo01$-\wtg" with a "time bound" $\Tmax$. Let $\wmax$ be the maximal difference between two elements of $\{0\}\cup\{w(\ell):\ell\in L\}$, and let $d$ be the number of clocks. Then for any location $\ell$, the Value function $\nu\mapsto \Value\Gcal(\ell,\nu)$ is Lipschitz-continuous, of slope at most $2\Tmax\wmax d $.
%Moreover, if $\Gcal$ is a "sub-\wtg" with "output functions" of slope at most $s$, then Value function $\nu\mapsto \Value\Gcal(\ell,\nu)$ is is Lipschitz-continuous, of slope at most $2T\wmax d +s $.
\end{restatable}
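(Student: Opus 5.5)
We prove the bound by a simulation argument: a play from $(\ell,\nu)$ and a play from $(\ell,\nu')$ can be made to follow the same sequence of discrete transitions, with delays adjusted so that the valuations stay close. Fix a location $\ell$ and two valuations $\nu,\nu'\in\reg{\ell}$ with $\distinf{\nu-\nu'}\leq\epsilon$. By symmetry it suffices to show $\Value\Gcal(\ell,\nu')\leq \Value\Gcal(\ell,\nu)+2\Tmax\wmax d\,\epsilon$. Since both valuations lie in the same region, we use the ordering of fractional parts given by the region $(X_0,\dots,X_p)$. We build a map transferring strategies of each player from configurations reached from $\nu$ to configurations reached from $\nu'$. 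If $\Minsf$ has an $\eta$-optimal strategy $\sigma$ from $(\ell,\nu)$, we obtain a strategy $\sigma'$ from $(\ell,\nu')$. Symmetrically, every $\Maxsf$ strategy from $\nu'$ maps to one from $\nu$, so every play from $\nu'$ under $\sigma'$ corresponds to a play from $\nu$ under $\sigma$. The corresponding play uses the same discrete transitions, and its valuations stay in the same regions at every step.

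The key step is the one-step \emph{region-preserving delay transfer}. Given $\nu,\nu'$ in the same region $r$ and a delay $\delta$ from $\nu$ landing in the region $r''$ (a time-successor of $r$ determined by the guard, since $\Gcal$ is a region \wtg), we choose a delay $\delta'$ from $\nu'$ landing in the same $r''$. The resulting pair must still be close after the reset of $X$. The natural choice is to shift the delay so that, once time has elapsed, the clock playing the role of the upper clock reaches its boundary at the matching moment. This yields $|\delta-\delta'|\leq\epsilon$ and keeps the pair $\epsilon$-close after the step. The complication is that the closeness measure has to survive crossings of the integer boundary (clocks reset at $1$ in a $\intfo01$-\wtg). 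For this I would use the "circular clock difference" on fractional parts rather than $\distinf{\cdot}$, since it is preserved exactly by equal delays and does not grow under resets. Each time the upper clock wraps around, the delay correction can change by up to $\epsilon$ relative to the previous one. This is presumably where the factor $2$ comes from: sup-distance $\epsilon$ gives circular distance at most $2\epsilon$.

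Next we bound the accumulated weight difference between the two runs. The discrete transitions are identical, so only $\sum_i(\delta_i-\delta'_i)w(\ell_i)$ remains. Summing the local corrections naively gives a bound depending on the number of steps, which is unbounded. Instead we group steps into phases between consecutive integer crossings of a clock. In the time-bounded setting, with the $\intfo01$ construction, each clock wraps at most about $\Tmax$ times, so there are at most $\Tmax d$ such phase changes. Within a phase, the cumulative delays $\sum\delta_i$ and $\sum\delta'_i$ differ by a fixed offset of at most $2\epsilon$. By Abel summation, the weight difference over a phase is then at most $2\epsilon$ times the variation of location weights, measured against $\{0\}\cup\{w(\ell)\}$, which is why $0$ is included in the definition of $\wmax$. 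Each phase contributes at most $2\epsilon\wmax$, for a total of $2\Tmax\wmax d\,\epsilon$.

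The main obstacle is making this phase decomposition rigorous: showing that the offset between cumulative times is piecewise constant, changing only when some clock hits $1$ or is reset, and that the number of such changes is at most $\Tmax d$. I would first verify that, inside a phase, $\delta'_i=\delta_i$ except at the first and last step of the phase, and then rely on time-boundedness to count the phases. To finish, we take $\eta\to 0$, and we handle plays reaching goal locations by noting that the goal weight $0$ introduces no extra term.
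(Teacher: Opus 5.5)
\textbf{The gap is in what you count.} The rest of your architecture is the paper's. You play the same discrete transitions, transfer strategies through a bisimulation lemma, and use a reset-stable closeness measure. In the paper this is $\epsilon$-neighbours: all differences $\nu(x)-\nu'(x)$ lie in one interval of length $\epsilon$ \emph{containing $0$}. Containing $0$ is what makes it survive resets, and sup-distance $\epsilon$ gives $2\epsilon$-neighbours, hence the factor $2$. Each refresh of the correction budget is charged $2\epsilon w_{\max}$, with at most $d\mathcal{T}$ refreshes. The target number is right, but not for your reason.

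You delimit phases by integer crossings and bound them by ``each clock wraps at most $\mathcal{T}$ times''. In a region $[0,1)$-game, however, a delay correction is forced whenever a guard on the current \emph{upper clock} is tight. An example is a transition guarded by $x<1$ taken with $x$ close to $1$, and such a transition typically resets $x$ without any clock ever reaching $1$. The next clock then becomes the upper clock, with a gap that can again be up to $2\epsilon$, and its next tight guard forces a fresh correction. Take two clocks with $0<y<x<1$ and a path that repeatedly waits until the current upper clock is almost $1$, then resets it by an ordinary transition. That path contains no wrap at all, yet every step may need a new correction. So the offset is not constant inside your phases, and the wrap count undercounts. Your fallback, ``the offset changes only when some clock hits $1$ or is reset'', overcounts instead. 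Non-upper clocks can be reset unboundedly often in bounded time, for example by a self-loop resetting $y$ while $x$ stays upper, so that number is not bounded by $d\mathcal{T}$.

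Two ingredients are missing.

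(i) Corrections only ever shrink the gap of the current upper clock. Guards $x=0$ and $x>0$ are matched for free because both valuations lie in the same region, so only $x=1$, $x<1$, $x\le 1$ on upper clocks can bite. Hence, between two changes of upper clock, all corrections have the same sign and $\sum_i|\delta_i-\delta'_i|$ is at most that clock's gap. This gives $2\epsilon w_{\max}$ per stretch directly, with no Abel summation. It also replaces your claim that $\delta'_i=\delta_i$ except at the ends of a phase. That claim fails when syncing would need a negative delay; the paper then makes a partial correction with some small $\delta'<\delta$, possibly several times.

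(ii) The event that refreshes the budget is ``a \emph{non-null upper clock} is reset'', whether at $1$ \emph{or} by a transition. Bounding how often this happens by about $d\mathcal{T}$ is a genuine use of time-boundedness, not a wrap count. The paper argues with an auxiliary clock reset only at $1$, so that each clock becomes a non-null upper clock at most $\mathcal{T}$ times. Elsewhere it uses the fact that such a reset cannot lie on a cycle of a trimmed time-bounded region game, since iterating the cycle with near-maximal delays would exceed the time bound.
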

\begin{proof}[Proof idea]

For any $\epsilon>0$,  consider a pair configurations $c$ and $c'$ of distance $\distinf{c-c'}\leq \epsilon$. W.l.o.g., let $\Value{}(c)\leq \Value{}(c')$.
For any strategies $\sigma_\Minsf$ starting from $c$, and $\sigma_\Maxsf$ starting from $c'$, imagine that you can build two strategies  $\tilde{\sigma}_\Minsf$ and $\tilde{\sigma}_\Maxsf$ that mimic ${\sigma}_\Minsf,{\sigma}_\Maxsf$  to ensure that $\weight(\play(\sigma_\Minsf,\tilde{\sigma}_\Maxsf, c))-\weight(\play(\tilde{\sigma}_\Minsf,\sigma_\Maxsf,c'))|\leq f(\epsilon)$
for some function $f$.
Then 
\begin{align*}
\Value{}(c)&\geq \inf_{\sigma_\Minsf} \sup_{\sigma_\Maxsf}
\weight(\play(\sigma_\Minsf,\tilde{\sigma}_\Maxsf, c))\\
&\geq \inf_{\sigma_\Minsf} \sup_{\sigma_\Maxsf} 
\weight(\play(\tilde{\sigma}_\Minsf,\sigma_\Maxsf,c')) - f(\epsilon)\\
&\geq \Value{}(c')-f(\epsilon)
\end{align*}

With a bisimulation argument, we will show in appendix how to construct the mimic strategies $\tilde{\sigma}_\Minsf$ and $\tilde{\sigma}_\Maxsf$, taking advantage of time-boundedness to ensure $f(\epsilon)=2\Tmax\wmax d\epsilon$.
\end{proof}

Note that Lipschitz-continuity suffices to prove that the value iteration algorithm converges toward the Value, using results from \cite{monmege2024} (Lemma 8.4 and Proposition 8.7). We will not go into details, as the upper bound on the convergence rate provided in \cref{general-rate} implies convergence.

As a direct consequence of \cref{thm-Lip-TB}, we can relax any strict guard of a trimmed "time-bounded \wtg" (trimming is the key to avoiding pathological behaviours):

\begin{restatable}[Relaxing guards]{corollary}{correlax}
\label{cor-relax}
Let $\Gcal$ be a "trimmed region \wtg" with a "time bound". Then relaxing every strict guard in $\Gcal$ into a strict or equal guard does not change the value.
\end{restatable}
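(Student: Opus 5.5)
We compare $\Gcal$ with its relaxation $\Gcal'$, obtained by turning every strict guard $x<n$ (resp.\ $x>n$) into $x\leq n$ (resp.\ $x\geq n$). In a "region \wtg" the only strict guards are of the form $0<x<1$, coming from $C(r')$. Relaxing them lets a player fire a transition at the boundary of the target region $r'$, that is, in the closure of the region where the transition was originally enabled.

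The plan is to prove $\Value{\Gcal}(\ell,\nu)=\Value{\Gcal'}(\ell,\nu)$ by two inequalities, each obtained by approximating boundary moves in $\Gcal'$ by interior moves in $\Gcal$, with the error controlled by \cref{thm-Lip-TB}. First I check that $\Gcal'$ is still time-bounded, with the same bound $\Tmax$ up to a harmless adjustment. Every play of $\Gcal'$ is a limit of plays of $\Gcal$ following the same path. This uses \cref{rem-trim}: since $\Gcal$ is "trimmed", every finite path can be followed from any valuation of the region, with delays in open intervals whose closures give the relaxed ones. Total durations pass to the limit, so they stay at most $\Tmax$.

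Next I would compare the value iterates. I show by induction on $k$ that $\opt{\ell}{k}$ in $\Gcal'$ and in $\Gcal$ agree on $\reg{\ell}$, or more robustly that the values agree, using the Lipschitz bound as follows. Take a relaxed move $(\ell,\nu)\xrightarrow{\delta,t}(\ell',\nu')$ in $\Gcal'$ that is illegal in $\Gcal$: the point $\nu+\delta$ lies on the boundary of $r'$. By trimming, there is $\delta_\eta$ with $|\delta_\eta-\delta|\leq\eta$ such that $\nu+\delta_\eta$ lies inside $r'$. The resulting valuation $\nu'_\eta$ satisfies $\distinf{\nu'_\eta-\nu'}\leq\eta$ and lies in $\reg{\ell'}$. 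The difference in weight is at most $\wmax\eta$ plus the difference of continuation values. By \cref{thm-Lip-TB}, applied to $\Gcal$, whose value is Lipschitz on each region, that difference is at most $2\Tmax\wmax d\,\eta$. So a strategy of either player in $\Gcal'$ can be simulated in $\Gcal$ at arbitrarily small cost.

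To handle the infinitely many steps, I use the bisimulation-style mimicking from the proof of \cref{thm-Lip-TB}. Boundary moves are only made by the player owning the location, and each deviation shifts the state by at most $\eta_i$, with $\sum_i\eta_i$ chosen arbitrarily small, say $\eta_i=\eta 2^{-i}$. Against a $\Maxsf$ strategy in $\Gcal$, a $\Minsf$ strategy of $\Gcal'$ is replaced by one in $\Gcal$ losing at most $C\eta$. Symmetrically, a $\Maxsf$ strategy of $\Gcal'$ is replaced by one in $\Gcal$. Since $\Gcal$'s moves are also moves of $\Gcal'$, this gives both $\Value{\Gcal}\le\Value{\Gcal'}+C\eta$ and $\Value{\Gcal'}\le\Value{\Gcal}+C\eta$, and we let $\eta\to0$.

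The main obstacle is the mimicking step: perturbed valuations must stay in the correct region, and the accumulated error must remain bounded despite unboundedly many discrete steps. This is where trimming is essential, because otherwise a relaxed guard could enable a transition that is never enabled in the open region (as in \cref{fig-ctrex}). I would reuse the appendix construction of $\tilde\sigma$ from \cref{thm-Lip-TB}, which already tracks $\epsilon$-close configurations along a play. The error there is bounded by time rather than by the number of steps, so injecting the extra $\eta_i$ perturbations keeps the total error at most $2\Tmax\wmax d\sum_i\eta_i$ plus $\wmax\sum_i\eta_i$.
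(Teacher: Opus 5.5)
Your proposal is essentially the paper's proof: it shows that a relation $R'_\epsilon$, pairing an interior configuration of $\Gcal$ with a configuration of the relaxed game in the closure of the same region, is a bisimulation (trimming ensures relaxation enables no new transitions). The slack forced by strict guards is absorbed via $\epsilon'<\epsilon$ and a halving condition in $P'$, which play the role of your summable $\eta_i$; the weight gap is bounded by elapsed time rather than by the number of steps; and \cref{lem-bisim-WTG-more} concludes. One caveat: your intermediate one-step argument, which bounds continuation values via \cref{thm-Lip-TB}, is circular on its own, because that theorem says nothing about $\mathsf{Val}_{\Gcal'}$ at boundary valuations, and $\Gcal$'s moves do not replay verbatim in $\Gcal'$ once the valuations have drifted apart. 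It is therefore the play-level, two-sided mimicking in your last paragraph that actually carries the proof.
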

\begin{proof}[Proof idea]
Intuitively, an optimal strategy for either player facing a guard $x>0$ or $x<1$ is sometimes to wait until $x$ is arbitrarily close to the guard.  In a "trimmed region \wtg", being allowed to pick a delay such that 
$x=0$, or $x=1$ does not open more possibilities than being arbitrarily close. Therefore, relaxing guards does not change the Value function, it just extends the Value function continuously to the closure of regions.

See full proof in appendix.
\end{proof}

\begin{definition}
A ""relaxed region $\intfo01$-\wtg""
is a \wtg with a "region assignment" $\reg\cdot : L\to\Reglt{\Xcal}$, such that
%	\item there is some initial location $(\ell_0,r_0)$.
%	\item for every location $(\ell,r)$, for every $\nu_0\in \tilde{r_0}$, for any path $\rho:
%	((\ell_0,r_0),\nu_0)\dots \rightarrow ((\ell,r),\nu) \rightarrow \dots$, $\nu\in \tilde{r}$.
for any transition $t:\ell\xrightarrow{C,X} \ell'$,
$C$ does not contain strict guards, and all
 the valuations $\nu+\delta\models C$ with $\nu$ in the closure of  $\reg{\ell}$ and $\delta\geq 0$ are contained in the closure of a unique "region" $r$, such that $r[X:=0]=\reg{\ell'}$. Furthermore, we consider only  initial configurations $(\ell_0,\nu)$ where $\nu$ is in the closure of $\reg{\ell_0}$.
\end{definition}

%Relaxing preserves "time-boundedness".

%\begin{theorem}
%\label{thm-via-tends}
%For "time-bounded \wtgs", the value iteration algorithm converges toward the Value.
%\end{theorem}
%\begin{proof}
%We know that the sequence of $\opt{}{}$ functions is monotonous. Furthermore, according to \cref{thm-Lip-TB}, every $\opt{k}{}$ function is Lipschitz-continuous of slope at most $2T\wmax d$,  thus the $\opt{}{}$ functions converges toward a continuous function. Therefore,
%according to Dini's theorem, the sequence  uniformly converges over each relaxed region. Thus, according to Monmege et. al.\cite{monmege2024}(Lemma 8.4), the sequence of $\opt{}{}$ functions tends
%toward its greatest fixpoint. Furthermore, Proposition 8.7 of \cite{monmege2024} states that this greatest fixpoint is equal to the value \footnote{The proof of this proposition is not specific to one-clock \wtgs.
%We won't go into details, as \cref{general-rate}
%implies \cref{thm-via-tends} without depending on it.
%}.
%\end{proof}

%\input{parts/neg-weight}
\section{Convergence speed of value iteration}
\label{sec-approx}

\begin{restatable}[Eliminitating transition weights from cycles]{lemma}{nonegcycle}
\label{lem-no-neg-cycle}
Let $\Gcal$ be a "trimmed region \wtg" with a "time bound" $\Tmax\in \Nbb$ and $n$ locations, such that every location has finite value. Then there exists an equivalent time-bounded \wtg $\Gcal'$ such that any transition that does not reach a goal location has null weight,
 with $k\cdot n$ locations, 
where $k=4(\Tmax\wmax+nW) + 1$.
with $W=\max\setof{|w(t)|}{t\in T}$, and with $\intro*\wmax$ the maximal difference between two elements of $\{0\}\cup\{w(\ell):\ell\in L\}$.
\end{restatable}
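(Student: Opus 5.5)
The plan is to make the accumulated transition weight part of the state, in the spirit of the unfoldings for min-cost reachability games, and pay it only when a goal is reached. I will take $\Gcal'$ to have locations $L\times\{-K,\dots,K\}\cup\{\bot_{\Minsf},\bot_{\Maxsf}\}$-style saturation, with $K=2(\Tmax\wmax+nW)$, so that there are $k=2K+1=4(\Tmax\wmax+nW)+1$ copies of each location. A copy $(\ell,c)$ records the sum $c$ of the transition weights paid so far, truncated to the window $[-K,K]$. A transition $t:\ell\xrightarrow{C,X}\ell'$ of $\Gcal$ becomes $(\ell,c)\xrightarrow{C,X}(\ell',c+w(t))$ with weight $0$ whenever $\ell'\notin G$. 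If $\ell'\in G$, the image is redirected to a goal copy, and the accumulated $c+w(t)$ is placed as the weight of that final edge; since this edge reaches a goal location, it is allowed to carry weight. Location weights, guards, resets and owners stay unchanged, so the region and trimmed structure and time-boundedness are preserved; the rate weights alone contribute at most $\Tmax\wmax$ in absolute value over any play.

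The key step is to show that the window never needs to be exceeded by an optimal or near-optimal player. I would like to argue that any counter value beyond $K$ can be treated as absorbing.

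For the lower side, suppose the counter reaches $c<-K$. Then the play has already gained more than $Tmax\wmax+nW$ compared to any play that goes directly to the goal. By \cref{rem-finite-V}, every location has finite value, and $\Minsf$ can reach $G$ without closing a negative cycle in the min-cost reachability abstraction; otherwise $-\infty$ locations would exist. So I will bound the relevant play prefixes as follows. Any prefix whose transition weight lies outside $[-(\Tmax\wmax+nW),\,\Tmax\wmax+nW]$ must contain a cycle of nonzero transition weight.

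The finite-value assumption then forces the following. In the time-bounded trimmed region game, iterating a cycle is always feasible (\cref{rem-trim}), and cycles can be forced with arbitrarily small delays. Consequently, a negative cycle under $\Minsf$'s control would give $-\infty$, and a positive cycle under $\Maxsf$'s control would give $+\infty$. Hence the player benefiting from an out-of-window counter never needs it, and the opponent can always exit. Concretely, I would map $c>K$ to a losing sink for $\Minsf$ and $c<-K$ to a losing sink for $\Maxsf$. These sinks are implemented as urgent goal transitions with weight $\pm$ a large constant, or by deleting the edge for the player who chose it. I would then prove that the values coincide by a strategy-transfer argument: each player's strategy in $\Gcal$ is copied in $\Gcal'$, and every play remaining in the window has identical weight.

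The main obstacle is making that transfer rigorous. The danger is that $\Maxsf$ (resp. $\Minsf$) might be pushed through a sequence of cycles driving the counter out of the window in a way that is not a simple cycle in one SCC controlled by one player. I would handle it by decomposing any path into at most $n$ acyclic steps plus cycles, whose transition weight is bounded by $nW$. I would then show that each cycle is "controlled" in the min-cost game sense: the player who benefits from repeating it could, by the finite-value hypothesis, be prevented by the other player. That player can therefore cut such cycles at negligible cost, since time-boundedness makes the extra delay cost at most $\Tmax\wmax$. This is exactly where the constant $\Tmax\wmax+nW$ in $k$ enters, and I expect it to need a careful induction on the number of cycles, mirroring the min-cost reachability arguments of \cite{BGHM16}.
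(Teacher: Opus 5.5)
Your construction is the paper's. It uses copies $L\times\{-B,\dots,B\}$ with $B=2(\Tmax\wmax+nW)$, null weight on internal edges, the counter paid on the edge entering $G$, and out-of-window moves sent to a sink favourable to the player the counter favours. The gap is the key step: showing that near-optimal play never leaves the window. You defer this to an unspecified ``careful induction on the number of cycles'', and that route does not work as sketched.

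\begin{itemize}
\item A cycle in a play is produced jointly by both players. Neither player can unilaterally ``cut'' it while staying consistent with the opponent's strategy.
\item The finite-value hypothesis says nothing about individual cycles being escapable by a designated player at each point.
\item Your sentence that $\Minsf$ can reach $G$ without closing a negative cycle has the roles reversed. Finiteness means $\Maxsf$ can prevent $\Minsf$ from exploiting negative cycles.
\item Your threshold $\Tmax\wmax+nW$ is never connected to why the window needs half-width $2(\Tmax\wmax+nW)$.
\end{itemize}

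The missing idea is a continuation bound valid from every location, which avoids cycle reasoning altogether.
\begin{itemize}
\item Let $\Mcal$ be the untimed min-cost reachability game underlying $\Gcal$. By \cref{rem-trim}, strategies of $\Mcal$ lift to $\Gcal$, and delays contribute at most $\Tmax\wmax$ in absolute value. Hence $|\Value\Mcal-\Value\Gcal|\le\Tmax\wmax$, so $\Mcal$ has finite values, and by \cite{BGHM16} $|\Value\Mcal(\ell)|\le nW$ for every $\ell$.
\item Suppose a $\Minsf$ strategy allows, against some $\Maxsf$ behaviour, a prefix of accumulated transition weight at least $B+1$. From that point $\Maxsf$ switches to a lifted $\Mcal$-strategy guaranteeing remaining transition weight at least $-nW$.
\item The total delay cost is at least $-\Tmax\wmax$, so the play weighs at least $nW+\Tmax\wmax+1\ge\Value\Gcal(c_0)+1$. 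Such a $\Minsf$ strategy is therefore never near-optimal.
\item Symmetrically, no near-optimal $\Maxsf$ strategy allows accumulated transition weight at most $-B-1$.
\end{itemize}
This is where both halves of $B$ come from. It also gives the value transfer directly: a near-optimal $\Minsf$ strategy never reaches the $+\infty$ sink, and a near-optimal $\Maxsf$ strategy never reaches the $-\infty$ sink.

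Of your two sink implementations, only the urgent goal edge with a large constant is safe in general. Deleting an edge is equivalent to a losing sink only when its chooser is the player the out-of-window counter hurts.
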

\begin{proof}[Proof idea.]
See full proof in Appendix. We show that, when considering almost-optimal strategies for $\Minsf$ and $\Maxsf$, the accumulated transition weight along the produced play is contained between $-2(\Tmax\wmax+nW)$ and $2(\Tmax\wmax+nW)$. We can then store this information in the set of locations, and pay the cumulated transition weight when moving to a goal location.
\end{proof}

Let $\Gcal$ be a \wtg with a "time bound" $\Tmax$ and $d$ clocks, and let $\tilde{\Gcal}$ be the \wtg obtained from $\Gcal$ by applying the $\intfo01$-region construction, then trimming, removing infinite-value locations, pushing transition weights into outgoing transitions, and relaxing.

\begin{remark}
\label{state-complexity}
If $\Gcal$ has $m$ locations, then $\tilde{\Gcal}$ has at most
$m'\cdot (4(m'W+\wmax \Tmax)+1)$ where $m'=m\cdot d!\cdot (2\Tmax+1)^{d}$. 

Indeed, 
applying the $\intfo01$-region construction multiplies the number of location by  $ d!\cdot(2\Tmax+1)^{d}$ (the number of regions). Trimming and removing infinite-value locations reduce the number of locations connected to the initial location. Relaxing leaves the number of location unchanged. Eliminating transition weights thus gives us at most $m'\cdot (4(m'W+\wmax \Tmax)+1)$ locations according to \cref{lem-no-neg-cycle}.
\end{remark}

\AP Let $\intro*\slopemax=2\Tmax\wmax d$.
According to \cref{thm-Lip-TB}, $\slopemax$ is the bound on the slope of the value function.
Note that $\slopemax$ also bounds the slope of the $\opt{}{}$ functions produced by the "value iteration algorithm"~\footnote{From $\Gcal$, one can create a \wtg where the initial location has value function $\opt{\ell}{k}$ for any $\ell$ and $k$, with same $\wmax$ and time bound $\Tmax$.}.
%Note that if $\slopemax=0$ then the value iteration terminates in at most $\Rcal(\Gcal)$ steps. \todo{No. What we can do is take the underlying min-cost reachability game, value iterate it until it finishes, that gives us a value that we use for both smax=0 and general case: it is basically the time it takes for the let's allow this 0-cycle strategy to become useless}
%\begin{description}
%\item[""Relaxed guards"".] Because of \cref{cor-relax}, we can assume that $\tilde{\Gcal}$ is a "relaxed region $\intfo01$-\wtg".
%\end{description}

%\AP Let $\intro*\MM= (2n-1)Wn+n $, where \(n=|R(G)|\) and \(W=\max_{t\in\mathcal T}|w(t)|\).
%According to \cite{BGHM15}, the "value iteration algorithm" terminates in $\MM$ steps for the  "min-cost reachability game" obtained by retaining the region locations and the transition weights of $\tilde{\Gcal}$. 

The main theorem of this section is the following:
\begin{theorem}
\label{general-rate}
Let $n$ be the number of locations of $\tilde{\Gcal}$. 
For any $0<\epsilon \leq n^2$, let $\epsilon_n=\frac{\epsilon}{ n^2}$.
We define a function $N$ inductively as
$$\begin{cases}
\intro*\Nind{\eta,0,m}=m,\\
\Nind{\eta,n+1,m}=
(K_\eta+1)\cdot \Nind{\frac\eta{K_\eta+1},n,m}
\end{cases}
$$

with $K_\eta=d \cdot \left(\left\lfloor \frac{2\slopemax}{\eta}\right\rfloor +1\right)$.

Now let
$N'_\epsilon= \Nind{\epsilon_n, n, n}\cdot n$.
%T \cdot \left( \frac{2T^2\wmax d}{\epsilon}\right)^d$.
 Then $\opt{}{N'_\epsilon}$ is an $\epsilon$-approximation of $\Value{\tilde{\Gcal}}$.
 
\end{theorem}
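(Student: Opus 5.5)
Since value iteration is monotone ($\opt{}{k+1}\leq\opt{}{k}$) and every $\opt{\ell}{k}$ lies above $\Value{\tilde{\Gcal}}(\ell,\cdot)$, it suffices to show an upper bound: for every location $\ell$ and every $\nu$ in the closure of $\reg{\ell}$, $\Minsf$ has a strategy that reaches $G$ within $N'_\epsilon$ steps against any $\Maxsf$ behaviour and guarantees at most $\Value{\tilde{\Gcal}}(\ell,\nu)+\epsilon$. I will construct such a strategy using three facts. First, by \cref{lem-no-neg-cycle}, all transitions not entering $G$ have weight $0$ in $\tilde{\Gcal}$. Second, by \cref{cor-relax}, guards are non-strict, so infima over delays are attained on closed regions. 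Third, by \cref{thm-Lip-TB}, the value (and every $\opt{}{k}$) is $\slopemax$-Lipschitz on each closed region.

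The core is an induction on the number of locations, which is what the recursion defining $N$ reflects. I will prove the following claim: for a "sub-\wtg" with $j$ non-goal locations and $\slopemax$-Lipschitz "output functions", the function $\opt{}{\Nind{\eta,j,m}}$ is within $j\eta$ plus the error of the inner level of the value, where $m$ is the base number of steps. The inductive step goes as follows. Pick a location $\ell$ (for instance, one in a bottom SCC, or the initial one). Discretize the closure of $\reg{\ell}$ by a grid of mesh $\eta/(2\slopemax)$ along the diagonal (time-elapse) direction and along each clock. A time-successor path meets at most $K_\eta=d(\lfloor 2\slopemax/\eta\rfloor+1)$ grid cells, since each of the $d$ clock coordinates crosses at most $\lfloor 2\slopemax/\eta\rfloor+1$ cells before reaching $1$ or being reset. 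The intended $\Minsf$ strategy in the game restricted to $\tilde{\Gcal}\setminus\{\ell\}$ treats revisits to $\ell$ as exits, whose output is the current approximation of the value at $\ell$. This yields the sub-game on $j$ locations with precision $\eta/(K_\eta+1)$. Time-boundedness together with the $0$-weight transitions ensures that each return to $\ell$ either has strictly advanced the valuation into a new grid cell or, if it stays in the same cell, changes the value by at most $\eta$ by Lipschitz continuity. Hence at most $K_\eta+1$ phases are needed, each of $\Nind{\eta/(K_\eta+1),j,m}$ steps, which gives the multiplicative factor in the recursion. The base case $\Nind{\eta,0,m}=m$ covers the finitely many zero-delay steps among purely goal-adjacent choices. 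There, because all internal weights are $0$, the underlying "min-cost reachability game" terminates within $n$ steps, as in \cite{BGHM16}.

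Finally, I will sum the errors. Each of the $n$ levels of induction contributes an error of at most $\epsilon_n$ per phase, and chaining at most $n$ phases along the last factor $n$ in $N'_\epsilon$ gives a total of at most $n^2\epsilon_n=\epsilon$.

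The main obstacle is the claim that a return to $\ell$ within the same grid cell costs at most $\eta$ and can be cut off. This requires that $\Maxsf$ cannot exploit zero-delay or small-delay cycles through $\ell$ to force arbitrarily many revisits. To handle it, I will use the fact that the accumulated weight of such a cycle is only the location-rate part, which is bounded by $\wmax$ times the elapsed time. I will also use that in a trimmed relaxed region game a cycle returning to $\ell$ with the same region must either let a clock advance or reset clocks monotonically. From there I will argue that $\Minsf$ can commit to the optimal exit computed at the cell's grid point. I expect to need care in choosing the induction order on locations so that removing $\ell$ keeps the rest a valid time-bounded "sub-\wtg" with Lipschitz outputs.
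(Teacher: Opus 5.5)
\textbf{There is a genuine gap}, and it is exactly the step you flag as the main obstacle.

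Your count of $K_\eta+1$ phases rests on a dichotomy: each return to $\ell$ either enters a new grid cell, or stays in the same cell at a cost of at most $\eta$. Neither branch does what you need.
\begin{itemize}
\item The first branch is not a monotone progress measure. Up-drivenness only forbids resetting the non-null upper clock inside an SCC. Other clocks can be reset along a cycle through $\ell$ and regrow, which sends the valuation back into cells already visited (think of the region $0<y<x<1$ and a cycle resetting $y$). Your bound of $d(\lfloor 2\slopemax/\eta\rfloor+1)$ cells only holds along a single reset-free time-successor path.
\item The second branch bounds the error of one return, not the number of returns. Cycles through $\ell$ with zero or arbitrarily small total delay can be repeated unboundedly often inside one cell. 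So neither the number of phases nor the accumulated error is controlled.
\item Letting $\Minsf$ ``commit to the optimal exit computed at the cell's grid point'' does not repair this. It is unavailable when $\ell$ or the cycle is controlled by $\Maxsf$. For an arbitrarily chosen $\ell$, nothing prevents the near-optimal continuation from looping back through $\ell$. Nor is an exit avoiding the loop guaranteed to be cheaper, since the loop may spend its time in locations cheaper than $\ell$.
\item Your error bookkeeping is also off. With $K_\eta+1$ phases per level, an error of $\epsilon_n$ per phase would multiply; that is exactly why the recursion passes $\eta/(K_\eta+1)$ down. The final factor $n$ and the $\epsilon/n^2$ come from the DAG of SCCs: depth at most $n$, and a budget of $\epsilon/n$ per SCC, itself split over its locations. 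They do not come from chaining phases.
\end{itemize}

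\textbf{The missing idea is the choice of $\ell$.} The paper works inside one up-driven SCC, so at most one time unit elapses there. It takes $\ell$ to be a \emph{non-urgent location of minimal weight}, and bounds only the \emph{positive-delay} visits to $\ell$.
\begin{itemize}
\item If $\ell$ belongs to $\Minsf$: whenever the play could come back for another positive-delay stay in $\ell$ within $\eta'=\eta/(2\slopemax)$ time units, $\Minsf$ instead keeps waiting in $\ell$. It waits until the upper clock reaches the (supremal) value at which that later stay ends, then takes that stay's leaving transition directly. Internal transitions weigh $0$ and $\ell$ is cheapest, so this shortcut never costs more. The skipped resets create a valuation discrepancy of at most $2\eta'$, tracked via a set $\Xflawed$ of flawed clocks. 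That is at most $\eta$ on the $\slopemax$-Lipschitz outputs.
\item If $\ell$ belongs to $\Maxsf$: $\Maxsf$ is restricted to delays $0$ or $\geq\eta'$ in $\ell$. $\Minsf$ absorbs the accumulated time debt $\Delta<\eta'$, again using the minimality of $w(\ell)$.
\end{itemize}
A pigeonhole argument on resets of the $d-1$ other clocks shows that at least $\eta'$ time elapses between $d+1$ positive-delay visits. Together with the one-time-unit bound, this gives $K_\eta$.

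In each of the $K_\eta+1$ copies, $\ell$ is then urgent. So the induction is on the number of non-urgent locations, not on locations. Zero-delay revisits are absorbed by the base case: in an all-urgent SCC with zero internal weights, $\Minsf$ exits within $m$ steps. Without the minimal-weight choice and these shortcut and compensation strategies, your inductive step does not go through.
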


\begin{remark}
The transformations from $\Gcal$ to $\Gcal'$  preserve time-boundedness. Furthermore, any finite play $\rho$ in $\Gcal$ can be translated into a play $\tilde{\rho}$ in $\tilde{\Gcal}$, such that $|\rho|\leq |\tilde{\rho}|$. Therefore, the "value iteration algorithm" iterated $N'_\epsilon$ on $\Gcal$  gives an $\epsilon$-approximation of $\Value{\Gcal}$ on any finite value location of $\Gcal$.
\end{remark}

\noindent{\large{\textbf{Case $\slopemax=0$:}}
Since \(d\geq 1\), either \(\Tmax=0\), in which case no positive delay can occur, or \(w_{\max}=0\), in which case all location weights are zero. Hence, in both cases, delays contribute no cost. Moreover, since the region game is trimmed, every outgoing transition is enabled from every valuation of its source region after a suitable delay. It follows inductively that every function \(\operatorname{opt}^{\ell}_k\) is constant over \(\operatorname{reg}(\ell)\), and that value iteration coincides with value iteration in the "min-cost reachability game" obtained by retaining region locations and transition weights. The only non-zero weights in this "min-cost reachability game" are on outgoing edges, thus
the "value iteration algorithm" stabilises within
$n\leq N_\epsilon'$.

We assume \(\slopemax>0\) henceforth.

%From now on, let us assume that $\slopemax> 0$.
%\todo{basically I want to say that for all k bigger than M, max will want to escape negative cycle. I think I can put it in the cycle weight remark? After all I did assume no cycle weight before}

\subsection{Zooming in on a strongly connected component}

\AP
Let $\Gcal'$ be a "sub-\wtg" of $\tilde{\Gcal}$ made of a single strongly connected component (""SCC"").

We assume that $\Gcal'$ respects the following properties:
\begin{description}
\item[Same $\slopemax$ as $\tilde{\Gcal}$.] 
%We assume that the output functions given to $\Gcal'$ are an over-approximation of the value function of $\Gcal$.  
We assume that the value function of $\Gcal'$, and any over-approximation function produced by the "value iteration algorithm", are Lipschitz-continuous of slope at most $\slopemax$.
\item[$\Gcal'$ is up-driven.] We say that a \wtg or "sub-\wtg" is ""up-driven"" if non-null upper clocks can only be reset by "output edges".
\end{description}

%$\Gcal'$ can be described as a sequence of $\Tmax$ identical \wtgs linked together by transition $t=1,t:=0$.

\begin{restatable}[Upper bound on the convergence speed of the "value iteration algorithm" in an "SCC"]{theorem}{VIArateSCC}
\label{VIA-convergence-rate}
Let $n$ be the number of locations in $\Gcal$.
For any $\epsilon>0$, let $\epsilon_n=\frac{\epsilon}{n}$.

% Let $\eta=\frac{\epsilon_n}{slopemax\cdot d}$.
%  and let $\xi=\frac\eta2+\frac{\epsilon_n}{\Delta_w}$.

%Let $\xi= slopemax \cdot (d+1)$.
%We define a function $N$ inductively as
%$$\begin{cases}
%\intro*\Nind{\eta,0,m}=m,\\
%\Nind{\eta,n+1,m}=
%(K_\eta+1)\cdot \Nind{\frac\eta{K_\eta+1},n,m}
%\end{cases}
%$$
%
%with $K_\eta=d \cdot \left(\left\lfloor \frac{2\slopemax}{\eta}\right\rfloor +1\right)$

%K(\eta,n)=
%\xi\cdot\lceil\frac1\eta \rceil\cdot( K(\frac{\eta^2}{\xi},n-1)+1).

 Let $N_\epsilon= \Nind{\epsilon_n,n,n}$.
%T \cdot \left( \frac{2T^2\wmax d}{\epsilon}\right)^d$.
 Then $\opt{}{N_\epsilon}$ is a $\epsilon$-approximation of the value function of ${\Gcal'}$.

\end{restatable}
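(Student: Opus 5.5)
We prove this by induction on the number $d'$ of non-null upper clocks that can still be active, which the recursion in $\Nind{\eta,n,m}$ mirrors. The recursion parameter counts, in order, how many times we may ``zoom in'' by a factor $K_\eta+1$; the base case $\Nind{\eta,0,m}=m$ is the situation where delays play no role. Throughout we use the two standing properties of $\Gcal'$: every $\opt{}{k}$ and the value are $\slopemax$-Lipschitz, and $\Gcal'$ is "up-driven". We also use that, after the elimination of \cref{lem-no-neg-cycle}, only "output edges" carry transition weight. Inside the "SCC" all accumulated cost therefore comes from delays, weighted by location rates, plus a single output term.

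\textbf{Key steps.} The first step is to discretise time at granularity $\eta/(2\slopemax)$. Fix a target precision $\eta$ per location; overall we use $\eta=\epsilon_n$, and the $n$ locations contribute errors at most $n\eta=\epsilon$. Because all functions involved are $\slopemax$-Lipschitz, shifting a delay by at most $\eta/(2\slopemax)$ changes the relevant $\opt{}{}$ values by at most $\eta/2$, plus $\eta/2$ from the rate term. Since $\Gcal'$ is "up-driven", inside the "SCC" the upper clocks are never reset, so along any play the total elapsed time is strictly less than $1$ before an "output edge" is taken. Splitting this time into pieces of length $\eta/(2\slopemax)$ for each of the $d$ clocks yields at most $K_\eta=d(\lfloor 2\slopemax/\eta\rfloor+1)$ ``macro-phases''. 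In each phase either a significant delay (at least $\eta/(2\slopemax)$ on some clock's progress) is taken, or only delays that are negligible up to $\eta$ occur.

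The second step handles the play segments between two significant delays. There every delay is tiny, so we approximate them by $0$. The segment is then a game in which the clock ordering among the remaining non-upper clocks is essentially frozen, with one fewer effective dimension: the current upper clock cannot move without causing a significant delay. We apply the induction hypothesis at precision $\eta/(K_\eta+1)$. It says that $\Nind{\eta/(K_\eta+1),n-1,m}$ iterations suffice to get within $\eta/(K_\eta+1)$ on each such segment. Concatenating at most $K_\eta+1$ segments gives total error at most $\eta$ after $(K_\eta+1)\cdot\Nind{\eta/(K_\eta+1),n-1,m}$ iterations, which is exactly the recurrence. The base case reduces to a "min-cost reachability game" with no negative cycles inside the "SCC", since transition weights are null there. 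Value iteration on such a game stabilises within $m=n$ steps, and the paper's $\slopemax=0$ argument gives exactly this.

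The third step converts this into strategies. For $\Minsf$, we take an $\epsilon$-optimal strategy and modify it into one that reaches $G$ within $N_\epsilon$ steps while losing at most $\epsilon$. This modification uses the segment decomposition: within a segment the strategy follows an inductively obtained short strategy, and significant delays are counted. Since $\opt{\ell}{k}$ is the value when $\Minsf$ is restricted to $k$ steps, this yields $\opt{}{N_\epsilon}\leq \Value{}+\epsilon$. The reverse inequality $\opt{}{k}\geq\Value{}$ is monotonicity.

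\textbf{Main obstacle.} The hard step is the second one: making rigorous that a segment of negligible delays behaves like a lower-dimensional instance to which the induction hypothesis applies. $\Maxsf$ may interleave tiny delays that slowly accumulate, and the bound on significant delays must be robust to such accumulations. My first attempt would be to define a phase by the grid cell of the current upper clock's value in $[0,1]$ on the $\eta/(2\slopemax)$ grid, rather than by individual delays. A phase then ends exactly when some clock crosses a grid line, which bounds the number of phases by $K_\eta$ regardless of how delays are split. Within a phase, I would replace the valuation by the grid-cell corner and invoke Lipschitz continuity.
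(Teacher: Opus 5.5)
There is a genuine gap, and it sits in your second step. Replacing the small delays of a phase by $0$ costs more than a Lipschitz term. The removed time either has to be spent later, to keep the two plays' valuations close, or it is lost, and then the valuations drift apart by up to the whole elapsed time. Either way you move time between locations with different rates, and each phase can cost up to $\wmax\cdot\eta/(2\slopemax)$. Over roughly $2\slopemax/\eta$ phases this adds up to order $\wmax$, a constant that does not shrink with $\eta$. Your own accounting charges $\eta/2+\eta/2$ per shift, but the final tally only sums the inductive errors $\eta/(K_\eta+1)$ over segments. Lipschitz continuity pays for a valuation discrepancy once, at the output edge; it says nothing about the running cost inside the SCC.

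The idea you are missing is what makes this relocation free. The paper unfolds around a non-urgent location $\ell$ of \emph{minimal weight}.
\begin{itemize}
\item If $\ell$ belongs to $\mathsf{Min}$, consider any excursion from $\ell$ that returns to $\ell$ within $\eta'=\eta/(2\slopemax)$ time units and resets only clocks that are small and not already flawed. It is replaced by a single wait in $\ell$ followed by the excursion's last transition. Since $w(\ell)$ is minimal and internal transition weights are null, this never costs $\mathsf{Min}$ more. The only loss is a clock discrepancy of at most $2\eta'$, paid once at exit, which is $\eta$.
\item The flawed-clock bookkeeping then shows that at least $\eta'$ time elapses across any $d+1$ consecutive positive-delay visits to $\ell$. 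This is where $K_\eta=d(\lfloor 2\slopemax/\eta\rfloor+1)$ comes from.
\item If $\ell$ belongs to $\mathsf{Max}$, $\mathsf{Max}$ is restricted to delays $0$ or at least $\eta'$ in $\ell$. The postponed time is stored in $\Delta$ and subtracted from $\mathsf{Min}$'s later delays. Again this cannot hurt $\mathsf{Min}$, because $w(\ell)$ is minimal.
\end{itemize}

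This also explains the recursion, and your dimension-reduction reading of it does not fit. The middle argument of $N(\eta,n,m)$ counts non-urgent locations. Each unfolding produces $K_\eta+1$ copies in which $\ell$ has become urgent. Each copy is treated recursively with budget $\eta/(K_\eta+1)$, because a play may cross all copies. The base case is an all-urgent SCC, in which $\mathsf{Min}$ reaches an output within $m$ steps. So $n$ levels at cost $\eta=\epsilon/n$ each give $\epsilon$, which is why the bound is $N(\epsilon_n,n,n)$.

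Your claim that a phase of small delays is ``a game with one fewer effective dimension'' is not justified. Resets still reorder the clocks. Nothing bounds the number of zero- or tiny-delay steps either player can take inside a phase, and bounding those steps is precisely the problem. Your grid-cell fix bounds only the number of phases. Finally, an induction on active clocks would give a bound indexed by $d$ rather than $N(\epsilon_n,n,n)$. The per-location budget $\epsilon/n$ you invoke is never actually spent by any step of your argument.
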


%
%\textcolor{red}{$\epsilon_n$ is the mistake allotted to any location in a bead. The $n$ argument is the number of not-instantaneous locations in a node. The $\eta$ argument is how much cost error I accept to unfold on a least-weight location. The purpose of $\xi$ is to convert this cost error into a clock error. The conversion rate is different considering who the least-weight location belongs to. Thus we spend time on a same least-weight location at most $\xi\cdot\lceil\frac1\eta\rceil$. When we unfold, we have to share the error for one location between all copies. When we reach $n=0$ then we have at most $n$ immediate locations, thus we can do at most $n-1$ steps.   }

\begin{proof}[Proof idea] See full proof in appendix.

%However, if we define alternatively the cost of an infinite play as an infinite sum of transition costs,
%\todo{ça va pas}
% then the value induced by the ideal
%strategy is equal to the value of the game by definition. This unusual definition of costs makes sense since, for every $\epsilon>0$, there exists a
%strategy that follows the ideal strategy for an arbitrarily number of steps before eventually
%leaving every such cycle, and whose value differs from $\Value{\Gcal'}$ by at most
%$\epsilon$. Equivalently, the value of the ideal strategy is the infimum of the values of
%these approximating strategies.

Our objective is to show that, for every $\epsilon>0$, there exists a strategy for
$\Minsf$ whose value is $\epsilon$-close to $\Value{\Gcal'}$ and whose plays have length
at most $N_\epsilon$. The existence of such a strategy is sufficient to prove
\cref{VIA-convergence-rate}; an explicit construction is not required.

The proof proceeds by progressively restricting behaviours that may occur arbitrarily
many times under some fixed strategy $\sigma_\Minsf$. Consider an event $A$ that can occur infinitely
often. We allocate an error budget $\eta>0$ and define a modified strategy $\sigma^\eta_\Minsf$ so that $A$ may
occur at most $k$ times in plays produced by $\sigma^\eta_\Minsf$, in a way that ensures a cost of at most $\sup_{\sigma_\Maxsf} \weight(\play(\sigma_\Minsf,\sigma_\Maxsf))+ \eta$. 
This modified strategy can be seen as a strategy on the \wtg obtained
by unfolding ${\Gcal'}$ into a string of $k+1$ copies, where
each occurrence of $A$ moves the play from one copy to the next. The same reasoning is
then applied recursively inside each SCC of the new \wtg, for a different event. Once folded back, the resulting strategy is a
strategy of the original \wtg.

In our setting, the event $A$ is the following: one of the players spends a positive
amount of time in a location of minimum weight. In Appendix, we show how to
construct a $\sigma^\eta_\Minsf$ strategy that produces a play that remains $\eta'$-close, for some $\eta'$ depending on $\eta$, to a
play produced by $\sigma_\Minsf$, while ensuring that if $\Maxsf$ leaves the SCC
through an output edge, then the resulting weight differs from the value by at most $\eta$.
This construction differs depending on whether
the location of minimum weight considered belongs to $\Minsf$ or $\Maxsf$. 
In both cases, this strategy produces a play where, for any cycle where time is spent in a location of minimal weight, at least $\eta'$ time elapses along the cycle.
Since at most one time unit can be spent in an "up-driven" $\intfo01$-\wtg, this suffices to bound the occurrences of event $A$.

Initially, the error budget $\epsilon$ is distributed among the locations of ${\Gcal'}$,
giving each location an allotted budget $\eta_n$. After unfolding around one location,
every location is duplicated into $K_\eta+1$ copies, so this budget must in turn be
distributed among these copies.  This is why the argument passed down to function $N$ is $\frac\eta{K_\eta+1}$.

Applying this unfolding for every location gives us an acyclic \wtg of depth
$\Nind{\epsilon_n,n,n}$, whose value differ from $\Value{\Gcal'}$ by at most $\epsilon$.
\end{proof}

\subsection{Zooming out}

 $\tilde{\Gcal}$ can be seen as a directed acyclic graph where each node is an SCC.
Since $\Gcal$ has a "time bound", the event ``a non-null upper clock is reset'' can happen at most $d\cdot \Tmax + (d-1)$ times on any play.
Thus the event ``a non-null upper clock is reset'' can never be part of a cycle in  $\tilde{\Gcal}$ (see \cref{rem-trim}).
Thus every SCC in $\tilde{\Gcal}$ is "up-driven".

 Any path of this graph is of length at most $n$, and each SCC contains at most $n$ locations.
 By applying \cref{VIA-convergence-rate} to every  node of the directed acyclic graph starting from the sinks, we obtain a convergence speed of $\Nind{\epsilon_n, n, n}$.

Since the output functions passed to every node are over-approximations of the value function produced by the "value iteration algorithm", any node satisfies the ``same $\slopemax$ as $\Gcal$'' condition. This suffices to prove \cref{general-rate}.

%\subsection*{Adding transition weights to cycle}

%\begin{theorem}
%\label{general-rate}
%Let $\Gcal$ be a \wtg with a "time bound" $\Tmax$ and $d$ clocks. Let $n$ be the number of locations of $\RC{\Gcal}$. 
%For any $0<\epsilon \leq \frac{n^2}{2}$ let $\epsilon_n=\frac{\epsilon}{ n^2}$.
%Now let
%$N'_\epsilon= \Nind{\epsilon_n, n, n}\cdot n+\MM$.
%%T \cdot \left( \frac{2T^2\wmax d}{\epsilon}\right)^d$.
% Then $\opt{}{N'_\epsilon}$ is an $\epsilon$-approximation of the value function of $\Gcal$.
% 
%\end{theorem}

%\begin{remark}[Dealing with transition weights]
%\todo{correct it}

%For the sake of simplicity, we assumed that every cycle has zero transition weight.
%This assumption can be removed, provided that $\epsilon$ is sufficiently small, without
%affecting the convergence bound of \cref{VIA-convergence-rate}.

\subsection{Complexity}

\begin{restatable}[Bounding the convergence rate]{remark}{unfoldingN}
\label{unfolding-N}
For $d\geq 2$, if we over-approximate the inductive formula of $N$, we obtain:
\begin{align*}
N'_\epsilon \leq \left(\frac{2d\cdot\slopemax+d +1
}{\epsilon}\right)^{2^{ n}-1}
\cdot n^{2^{n+1}}\,.
\end{align*}

\end{restatable}
See proof in appendix.

\begin{lemma}
\label{lem-deciding-time-bounded}
Deciding if  a \wtg is time-bounded can be done in exponential time.
\end{lemma}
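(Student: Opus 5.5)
We reduce deciding time-boundedness to a reachability question on a finite abstraction. The plan is to show that a \wtg $\Gcal$ is "time-bounded" if and only if, in the (untimed) region graph of $\Gcal$ augmented with one extra clock, no reachable cycle lets time diverge. That condition can then be checked in time polynomial in the size of the region graph, which is exponential in the size of $\Gcal$. Weights and the game structure are irrelevant here, since time-boundedness is a property of all plays; we therefore treat $\Gcal$ as a timed automaton in which every transition may be taken.

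First, we add to $\Gcal$ a fresh clock $z$ that is never reset and never tested. A play from $(\ell,\bold{0})$ lasts at least $\Tmax$ time units exactly when $z$ reaches $\Tmax$ along it. The only subtlety is that the natural-number bound $\Tmax$ is not given: we must decide whether \emph{some} bound exists.

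Second, we build the classical region graph of $\Gcal$, with maximal constant $M$ equal to the largest constant appearing in the guards. In this graph we mark every edge that crosses an integer boundary in a non-reset clock, or more precisely every time-elapse step leaving a region in which some clock is zero. We restrict attention to the part reachable from the initial regions $\{(\ell,\bold 0)\}$.

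We then prove the following claim. $\Gcal$ is time-bounded if and only if no reachable cycle in the region graph contains a time-progress step that lets some clock pass from $0$ to a positive value and is later traversed without that clock ever exceeding $M$ forever, i.e.\ a cycle along which time elapses non-trivially. For the ``only if'' direction, suppose such a cycle exists. By the standard region argument (Remark~\ref{rem-trim}-style realisability of finite paths, and the fact that region cycles containing a zero-to-positive step are realisable with delay bounded below by a constant over each iteration, or alternatively unbounded total delay), we can pump it. Iterating it $k$ times yields runs of duration growing unboundedly in $k$, so no bound exists. For the ``if'' direction, suppose no such cycle exists. Then every reachable path splits into at most $|R|$ segments, where $|R|$ is the number of regions, and each segment has no time-progress in an unbounded clock. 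Hence the time within each segment is at most $M+1$, which bounds total duration by $|R|(M+1)$.

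I expect this equivalence to be the main obstacle, specifically the pumping direction. A region cycle with non-zero elapse might still be realisable only with Zeno-like shrinking delays, as in the counterexample figure. Here I would use the standard non-Zenoness criterion for timed automata: a cycle is realisable with total delay at least $1$ per iteration if and only if some clock is reset on it and some guard forces that clock to be at least $1$, or a clock exceeds $M$. If this criterion fails, I would instead characterise time-boundedness directly via the clock $z$, capped at value $M' = |R|(M+1)+1$. Take regions of $\Gcal$ extended with $z$, with maximal constant $M'$. By the bound from the ``if'' direction, $\Gcal$ is time-bounded if and only if the region $z>M'-1$ is unreachable, which is plain reachability in an exponential-size graph.

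Finally, we count. The number of regions is at most $|L|\cdot d!\cdot (2M'+2)^{d+1}$. Because $M'$ is itself only exponential, this count is exponential in the input size, and graph reachability is linear in the graph. Hence the whole procedure runs in exponential time.
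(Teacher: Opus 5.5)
Your proposal puts its finger on the right difficulty, Zeno versus divergent region cycles, but it does not resolve it, and each of your three attempts fails at exactly that point.

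\textbf{Your main criterion fails in the pumping direction.} Take one location, clocks $x,y$, and a single self-loop guarded by $y<1$ that resets $x$. The reachable region cycle $(x=0<y<1)\to(0<x<y<1)\to(x=0<y<1)$ lets $x$ pass from $0$ to a positive value on every iteration, and $x$ never exceeds $M=1$. Yet every run from $(\ell,\mathbf{0})$ lasts less than one time unit, because $y$ is never reset.

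\textbf{The criterion you then invoke does not separate the two cases either.} The condition you quote (a clock reset on the cycle plus a guard forcing it to be at least $1$) is the classical strong non-Zenoness condition. It is only sufficient, not an equivalence. Moreover, at least one time unit per iteration is not the relevant dichotomy. A single clock $x$ with a self-loop guarded by $x<1$ that resets $x$ has runs of arbitrary duration (delay $1/2$ each time). Yet no iteration can take a full time unit, no guard forces $x\geq 1$, and $x$ never exceeds $M$.

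\textbf{Your fallback is circular.} Capping the never-reset clock $z$ at $M'=|R|(M+1)+1$ gives an exact test only if \emph{every time-bounded game} has all its plays of duration at most $M'-1$. You derived the bound $|R|(M+1)$ only under the hypothesis that no reachable cycle lets time elapse. The first example shows that time-bounded games can have such cycles, so the fallback presupposes the very characterisation it was meant to avoid.

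\textbf{What is missing} is an invariant on cycles or SCCs of the region graph that separates these situations and also yields an a priori bound. The paper works in the trimmed $[0,1)$-region game. There the relevant event is the reset of a non-null upper clock, including the forced reset of a clock that reaches $1$. If no cycle contains this event, i.e.\ every SCC is up-driven, then inside an SCC a non-null upper clock only increases and stays below $1$. So at most one time unit is spent per SCC, and the duration of any play is bounded by the number of SCCs along a path of the SCC-DAG. Conversely, a cycle containing such a reset can be pumped into runs of arbitrary duration.

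This classifies both examples correctly. In the first, $x$ is reset on the cycle while $y$ is the unique upper clock. In the second, $x$ is the non-null upper clock at the moment it is reset. Once such a lemma is proved, your reachability test on $z$ becomes sound, with the resulting exponential bound in place of $M'$. However, the SCC check is then already a direct test, polynomial in the exponential-size region game.
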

\begin{proof}
A region trimmed $[0,1)$-\wtg is time-bounded if and only if every SCC is "up-driven". Indeed, at most one time unit can be spent in an "up-driven" SCC. Conversely, if the event ``a non-null upper clock is reset'' happens along a cycle, then this cycle can yield a run with arbitrarily long elapsed time: 

Transforming a \wtg into a region trimmed $[0,1)$-\wtg and checking every SCC can be done in polynomial time in the size of the input when guards are given in unary, thus in exponential time when given in binary.
\end{proof}

\begin{tcolorbox}
\paragraph*{Approximation Algorithm}
\begin{description}
\item[Input:] A time-bounded \wtg $\Gcal=(L_\mathsf{Min}, L_\mathsf{Max},G,\mathcal{X},T,w)$, with $m$ locations and $d$ clocks, and an initial configuration $(\ell_0,\nu_0)$, and a precision $\epsilon>0$. Weights and guards are given in binary.
\vspace{1em}
\item[Ensures:] Return $V$ such that $|V-\Value\Gcal((\ell_0,\nu_0))|\leq \epsilon$ if $V\in\Rbb$ and $V=\Value\Gcal((\ell_0,\nu_0))$ otherwise.
\begin{description}
\vspace{1em}
\item[Step $1$:] Transform $\Gcal$ into a "trimmed region $\intfo01$-\wtg" $\Gcal'$.
\vspace{1em}
\item[Step $2$:] Check that $\Gcal'$ is time-bounded (every SCC is up-driven). Let $\Tmax$ be the longest path in the directed acyclic graph underlying $\Gcal'$.
According to \cref{lem-deciding-time-bounded}, $\Tmax$ is an upper bound on the time bound of $\Gcal$.
\vspace{1em}
\item[Step $3$:] Let $\Mcal$ be the "min-cost reachability game" underlying $\Gcal'$. Check if $\Value\Mcal(\ell_0)\in\{-\infty,+\infty\}$. If yes, \textbf{return} 
$\Value\Mcal(\ell_0)$.
\vspace{1em}
\item[Step $4$:] Otherwise, use the "value iteration algorithm" on $\Gcal$ to compute $\opt{}{K}$, where
\begin{align*}
& \wmax= \max\setof{|x-y|}{x,y\in w(L)\cup\{0\}}
&&  m'=m\cdot d! \cdot (2\Tmax+1)^{d}
\\[10pt]
& 
n= m'\cdot (4(m'W+\wmax \Tmax)+1)
&& W=\max\setof{|w(t)|}{t\in T}
\\[7pt]
&K=\left(\frac{2d\cdot\slopemax+d +1
}{\epsilon}\right)^{2^{ n}-1}
\cdot n^{2^{n+1}}
&&  \slopemax=2\Tmax\wmax d
\end{align*}
\textbf{Return}  $\opt{\ell_0}{K}(\nu_0)$.
\end{description}
\end{description}
\end{tcolorbox}

Steps $1$ to $3$ are exponential in size of the input.
For Step $4$, the convergence rate $k$ is triple-exponential in size of the input.
According to \cite{alur2004},  each step $i$ of the "value iteration algorithm" is exponential in $\RC{\Gcal}$ and $i$. Thus Step $4$ is quadruple-exponential in size of the input.
Therefore
 the "Approximation problem" for "time-bounded" \wtg is in $4$-\textsc{ExpTime}. 
 
%As a comparison point, the approximation algorithm given in \cite{Busatto2018} for almost-divergent weighted timed games is doubly-exponential in the size of $\RC{\Gcal}$.

Note that, for simplicity of presentation, the bound given on the convergence speed is far from tight. Indeed, when applying the induction step in \cref{lem-induction}, we assume that, when unfolding $\Gcal'$ into a string of $K_\eta+1$ copies,
one time unit can then be spent in each copy, whereas this time bound actually concern the set of copies.
However, taking time-sharing into account still gives us a triple-exponential convergence rate.

\section*{Conclusion}

We have shown that the "value iteration algorithm" converges toward the value for time-bounded \wtgs, with a convergence speed bounded in $3$\textsc{ExpTime}, thus proving decidability of the "Approximation problem" on turn-based time-bounded weighted timed games in $4$\textsc{ExpTime}

The main open problem that remains is the decidability of Approximation for \wtgs with non-negative weights only. We hope that the techniques developed in this paper may provide useful insights toward resolving this question. Indeed, in a \wtg without negative weights, the time spent in locations with positive weight is effectively bounded by any greedy upper bound on the value.

\clearpage

% references
%\bibliographystyle{alpha}
% for natbib
%\bibliographystyle{plainurl}
\bibliography{biblio} 

\clearpage

\appendix
\section{Proofs of \cref{sec-bisim}}
%\todo{remettre W1 W2}
%\begin{theorem}
%\label{thm-Lip-TB}
%Let $\Gcal$ be a "trimmed region \wtg" with a time bound $T$. Let $\wmax$ be the maximal difference between two elements of $\{0\}\cup\{w(\ell):\ell\in L\}$, and let $d$ be the number of clocks. Then for any location $\ell$, the value function $\nu\mapsto \Value\Gcal(\ell,\nu)$ is is Lipschitz-continuous, of slope at most $=2T\wmax d $.
%\end{theorem}

To prove \cref{thm-Lip-TB}, we use the same approach as in \cite{vialard2025decidingvaluetwoclocknonzeno}:

\begin{definition}[Simulation and bisimulation]

Let $\mathcal{G}_1=(L_\Minsf,L_\Maxsf,G,\mathcal{X},T,w)$ and $\mathcal{G}_2=(L'_\Minsf,L'_\Maxsf,G',\mathcal{X},T',w')$ be \wtgs. Then a relation $R\subseteq (L_\Minsf\times \Rbb^\Xcal\times L'_\Minsf\times \Rbb^\Xcal) \cup (L_\Maxsf\times \Rbb^\Xcal\times L'_\Maxsf\times \Rbb^\Xcal)$ is a \textbf{simulation relation} when
\[(\ell_1,\nu_1)\,R\,(\ell_2,\nu_2) \implies \begin{cases}
\medskip
\text{$\ell_1\in G$ and $\ell_2\in G'$,}\\
\text{or, for all $t_1\in T$, $\delta_1\in\Rbb$, there exist $t_2\in T'$ and $\delta_2\in\Rbb$ such that if }\\
 \text{$(\ell_1,\nu_1)\xrightarrow{\delta_1,t_1}(\ell'_1,\nu_1')$ then $(\ell_2,\nu_2)\xrightarrow{\delta_2,t_2}(\ell'_2,\nu'_2)$ where $(\ell_1',\nu_1')\,R\,(\ell'_2,\nu'_2)$.}
\end{cases}\]

$R$ is a ""\textbf{bisimulation}"" if $R$ and its converse are both simulations.
\end{definition}

\begin{example}
 The region relation $R$ between configurations of the same \wtg, where $(l_1,\nu_1)\,R\,(l_2,\nu_2)$ iff $l_1=l_2$ and $\nu_1,\nu_2$ are in the same region, is a "bisimulation".
\end{example}

A "(bi)simulation" relation $R$ can be extended to runs:
$\rho_1 \,R\, \rho_2$ when $|\rho_1|=|\rho_2|$, and  $\rho_1(n) \,R\, \rho_2(n)$ for all $n<|\rho_i|$.

\begin{lemma}[Lemma 23 from \cite{vialard2025decidingvaluetwoclocknonzeno}]
\label{lem-bisim-WTG}
Let $\Gcal_1$ and $\Gcal_2$ be two \wtgs. 
%Let $V_{\Gcal}$ and 
%$V_{\Gcal'}$ be the sets of cost functions on transitions to $G$.
Let $\,R\,$ be a "bisimulation" relation between $\Gcal_1$ and $\Gcal_2$, such that $\Gcal_1$ and $\Gcal_2$ start from configurations $c_1,c_2$, respectively, with $c_1 \,R\, c_2$. Assume that $\Value{\Gcal_i}\neq +\infty$ for $i\in \{1,2\}$. 
Then 

\begin{equation*}\label{eq-bisim}
|\Value{\Gcal_1}(c_1)-\Value{\Gcal_2}(c_2)|\leq \sup
\enstq{|\weight(\rho_1)-\weight(\rho_2)|}{\begin{array}{c}
		\rho_1,\rho_2 \text{ plays of } \Gcal_1,\Gcal_2\\ 
		\text{starting in $c_1,c_2$,}\\
		\rho_1 \,R\, \rho_2\\
		\end{array}}
.
\end{equation*}
\end{lemma}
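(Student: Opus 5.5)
The plan is to prove the two inequalities $\Value{\Gcal_1}(c_1)\leq \Value{\Gcal_2}(c_2)+D$ and $\Value{\Gcal_2}(c_2)\leq \Value{\Gcal_1}(c_1)+D$ separately, where $D$ denotes the supremum on the right-hand side of the statement. By symmetry of the hypotheses (both $R$ and its converse are simulations), only the first one needs an argument. If $D=+\infty$ or $\Value{\Gcal_2}(c_2)=-\infty$ there is little or nothing to prove, so I take $D$ finite and fix $\varepsilon>0$. I then take a strategy $\sigma^2_\Minsf$ in $\Gcal_2$ from $c_2$ such that $\sup_{\sigma_\Maxsf}\weight(\play_{\Gcal_2}(c_2,\sigma^2_\Minsf,\sigma_\Maxsf))\leq \Value{\Gcal_2}(c_2)+\varepsilon$. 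Such a strategy exists, and the supremum is finite, because $\Value{\Gcal_2}(c_2)\neq+\infty$.

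The core step is to transfer $\sigma^2_\Minsf$ to a strategy $\sigma^1_\Minsf$ of $\Gcal_1$ using a ``shadow run''. To every finite run $\rho_1$ of $\Gcal_1$ from $c_1$ consistent with $\sigma^1_\Minsf$, I associate inductively a run $\rho_2$ of $\Gcal_2$ from $c_2$ that is consistent with $\sigma^2_\Minsf$ and satisfies $\rho_1\,R\,\rho_2$. Since $R$ only relates configurations owned by the same player, the last locations of $\rho_1$ and $\rho_2$ have the same owner. At a $\Minsf$ configuration, $\sigma^2_\Minsf(\rho_2)=(\delta_2,t_2)$ gives a move in $\Gcal_2$. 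Because the converse of $R$ is a simulation, there is a move $(\delta_1,t_1)$ in $\Gcal_1$ whose target is related to the target of $(\delta_2,t_2)$. I set $\sigma^1_\Minsf(\rho_1)=(\delta_1,t_1)$ (using a fixed choice function) and extend the shadow by $(\delta_2,t_2)$. At a $\Maxsf$ configuration, whatever move $(\delta_1,t_1)$ is played in $\Gcal_1$, the simulation property of $R$ gives a related move $(\delta_2,t_2)$ in $\Gcal_2$, and I extend the shadow by it. Note that $\sigma^1_\Minsf$ is defined on histories of $\Gcal_1$ only, so its well-definedness is immediate from the inductive construction.

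Next I fix an arbitrary $\Maxsf$ strategy $\tau_1$ in $\Gcal_1$, and let $\rho_1=\play_{\Gcal_1}(c_1,\sigma^1_\Minsf,\tau_1)$ with shadow $\rho_2$ (the limit of the shadows of its prefixes). I define $\tau_2$ on the prefixes of $\rho_2$ to play the shadow moves, and arbitrarily elsewhere. Then $\rho_2=\play_{\Gcal_2}(c_2,\sigma^2_\Minsf,\tau_2)$, so $\weight(\rho_2)\leq \Value{\Gcal_2}(c_2)+\varepsilon<+\infty$. Hence $\rho_2$ is finite and ends in a goal location of $\Gcal_2$, and $\rho_1$ has the same length since related runs have equal length.

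It remains to check that $\rho_1$ really ends in a goal location of $\Gcal_1$ at that point, rather than being a prefix of a longer play. If its last location were not a goal, deadlock-freedom would give it an enabled move. Since the two endpoints are not both goals, the simulation clause for $R$ would then require a matching move from the goal location of $\Gcal_2$, which has no outgoing transitions; this is a contradiction. So $\rho_1$ and $\rho_2$ are related plays, and $\weight(\rho_1)\leq\weight(\rho_2)+D\leq\Value{\Gcal_2}(c_2)+\varepsilon+D$. Taking the supremum over $\tau_1$ and then letting $\varepsilon\to0$ gives the first inequality. The second follows by exchanging roles, using $\Value{\Gcal_1}\neq+\infty$.

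I expect the main obstacle to be exactly this bookkeeping around maximality and infinite plays: making sure that the shadow of the whole play is itself the play induced by a genuine $\Maxsf$ strategy in $\Gcal_2$, and that termination transfers from $\rho_2$ to $\rho_1$. This is where the finiteness assumption on the values and the goal-to-goal clause of the bisimulation are used.
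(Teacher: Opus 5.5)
Your proof is correct and rests on the same mechanism as the paper's: transfer strategies through the bisimulation and compare the weights of related plays. (The paper only writes out the proof of the enhanced \cref{lem-bisim-WTG-more}, saying it is almost identical.) The organisation differs, though. The paper fixes an arbitrary pair $(\sigma^1_\Minsf,\sigma^2_\Maxsf)$. It transfers $\Minsf$'s strategy from $\Gcal_1$ to $\Gcal_2$ and $\Maxsf$'s from $\Gcal_2$ to $\Gcal_1$ simultaneously, and restricts to ``smart'' strategies so that plays are finite. It then sandwiches $\inf\sup$ and $\sup\inf$ expressions; the last step identifies a $\sup\inf$ with $\Value{\Gcal_2}$, i.e.\ it uses determinacy. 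You instead transfer a single $\varepsilon$-optimal $\Minsf$ strategy and answer an \emph{arbitrary} $\Maxsf$ strategy of $\Gcal_1$ with a shadow strategy in $\Gcal_2$. This compares upper values directly, so determinacy is never needed: you in fact get $|\overline{\mathsf{Val}}_{\Gcal_1}(c_1)-\overline{\mathsf{Val}}_{\Gcal_2}(c_2)|\leq D$. Moreover, $\varepsilon$-optimality gives the finiteness of plays that the paper obtains through its restriction to smart strategies. The paper's version is more compact and symmetric; yours is more explicit about maximality, which the paper dispatches in one sentence.

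There are two small slips. First, $\Value{\Gcal_2}(c_2)=-\infty$ is not a trivial case: for finite $D$, your first inequality then asserts $\Value{\Gcal_1}(c_1)=-\infty$. Your argument covers this verbatim if you choose $\sigma^2_\Minsf$ to guarantee at most $-M$ for arbitrary $M$, rather than $\Value{\Gcal_2}(c_2)+\varepsilon$.

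Second, the check in your last paragraph is automatic: $\rho_1$ is a play by definition, so it ends in a goal whenever it is finite. What you actually need is the mirror check. To assert $\rho_2=\play_{\Gcal_2}(c_2,\sigma^2_\Minsf,\tau_2)$ when $\rho_1$ is finite, you must know that the shadow's last configuration, which is related to a goal of $\Gcal_1$, is itself a goal of $\Gcal_2$. Otherwise the actual play continues past the shadow. This follows from your argument applied to the converse of $R$. If that configuration were not a goal, deadlock-freedom of $\Gcal_2$ would give a move there that the goal configuration of $\Gcal_1$ cannot match. If $\rho_1$ is infinite, the shadow is infinite and is the play itself, which contradicts the finite bound.
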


%
%Here we need a stronger version of \cref{lem-bisim-WTG}. Let $P$ be some property on pairs of delayed transitions. 
%We say that $R$ is a bisimulation respecting $P$ if
%\[(\ell_1,\nu_1)\,R\,(\ell_2,\nu_2) \implies \begin{cases}
%\medskip
%\text{$\ell_1\in G$ and $\ell_2\in G'$,}\\
%\text{or, for all $t_1\in T$, $\delta\in\Rbb$, there exist $t_2\in T'$ and $\delta_2$ such that if }\\
% \text{$(\ell_1,\nu_1)\xrightarrow{\delta_1,t_1}(l_1',\nu_1')$ then $(\ell_2,\nu_2)\xrightarrow{\delta_2,t_2}(\ell'_2,\nu'_2)$ where $(\ell_1',\nu_1')\,R\,(\ell'_2,\nu'_2)$}\\
%  \text{and $(\ell_1,\nu_1)\xrightarrow{\delta_1,t_1}(l_1',\nu_1')$, $(\ell_2,\nu_2)\xrightarrow{\delta_2,t_2}(\ell'_2,\nu'_2)$ respect $P$.}\\
%\end{cases}\]

Here we need a stronger version of \cref{lem-bisim-WTG}. Let $P$ be some property on pairs of plays. 
We say that $R$ is a bisimulation respecting $P$ if
for any pair of plays $\rho_1 R \rho_2$ that verifies $P$, where $c_0\xrightarrow{\rho_1} (\ell_1,\nu_1)$, $c'_0\xrightarrow{\rho_2} (\ell_2,\nu_2)$, we have
$$\begin{cases}
\medskip
\text{$\ell_1\in G$ and $\ell_2\in G'$,}\\
\text{or, for all $t_1\in T$, $\delta\in\Rbb$, there exist $t_2\in T'$ and $\delta_2$ such that if }\\
 \text{$(\ell_1,\nu_1)\xrightarrow{\delta_1,t_1}(l_1',\nu_1')$ then $(\ell_2,\nu_2)\xrightarrow{\delta_2,t_2}(\ell'_2,\nu'_2)$ where $(\ell_1',\nu_1')\,R\,(\ell'_2,\nu'_2)$}\\
  \text{and $(\rho_1(\delta_1,t_1), \rho_2(\delta_2,t_2))$ verifies $P$.}\\
\end{cases}$$

When every pair of matching length prefixes of two runs $\rho_1 R \rho_2$ verifies $P$, we write $\rho_1 R^P \rho_2$.
Then \cref{lem-bisim-WTG} can be enhanced into the following result.

\begin{lemma}
\label{lem-bisim-WTG-more}
Let $\Gcal_1$ and $\Gcal_2$ be two \wtgs. 
Let $R$ be a bisimulation relation between $\Gcal_1$ and $\Gcal_2$ respecting some property $P$, such that $\Gcal_1$ and $\Gcal_2$ start from configurations $c_1 \,R\, c_2$.

 Assume that $\Value{\Gcal_i}\neq \pm\infty$ for $i\in \{1,2\}$.
%  and let $W_i>\Value{\Gcal_i}$.
Then 

\begin{equation*}\label{eq-bisim}
|\Value{\Gcal_1}(c_1)-\Value{\Gcal_2}(c_2)|\leq \sup
\enstq{|\weight(\rho_1)-\weight(\rho_2)|}{\begin{array}{c}
		\rho_1,\rho_2 \text{ finite plays of } \Gcal_1,\Gcal_2\\ 
		\text{starting in $c_1,c_2$,}\\
		\rho_1 \,R^P\, \rho_2\\
%		\weight(\rho_1)\leq W_1, \weight(\rho_2)\leq W_2
		\end{array}}
\;.
\end{equation*}

% $$|V(\Gcal_1)-V(\Gcal_2)|\leq \max\setof{|w_1(\ell_1,\nu_1)-w(\ell_2,\nu_2)|}{\ell_1\in G_1, \ell_2 \in G_2, (\ell_1,\nu_1) \,R\, (\ell_2,\nu_2)}\;.$$
\end{lemma}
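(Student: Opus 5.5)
We prove the inequality one direction at a time. The two directions are symmetric, since $R$ respects $P$ and so does its converse by assumption. Let $S$ denote the supremum on the right-hand side; if $S=+\infty$ there is nothing to prove, so assume $S<+\infty$. We show $\Value{\Gcal_2}(c_2)\leq \Value{\Gcal_1}(c_1)+S+\gamma$ for every $\gamma>0$. The plan follows the proof idea of \cref{thm-Lip-TB}: transport almost-optimal strategies through the relation.

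\textbf{Transporting strategies.} Fix $\gamma>0$. Since $\Value{\Gcal_1}(c_1)$ is finite, pick a $\mathsf{Min}$ strategy $\sigma^1_\Minsf$ in $\Gcal_1$ with $\sup_{\sigma_\Maxsf}\weight(\play_{\Gcal_1}(c_1,\sigma^1_\Minsf,\sigma_\Maxsf))\leq \Value{\Gcal_1}(c_1)+\gamma$; in particular every play consistent with it is finite. From it we build a $\mathsf{Min}$ strategy $\sigma^2_\Minsf$ in $\Gcal_2$ together with a map $\Phi$ sending each finite run $\rho_2$ consistent with $\sigma^2_\Minsf$ to a run $\Phi(\rho_2)$ consistent with $\sigma^1_\Minsf$ and satisfying $\Phi(\rho_2)\,R^P\,\rho_2$.

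The construction is by induction on the length of $\rho_2$, starting from $\Phi(c_2)=c_1$; the pair $(c_1,c_2)$ satisfies $P$, as implicit in the hypotheses. Suppose $\rho_2$ ends in a location of $\mathsf{Min}$. Then $\rho_1=\Phi(\rho_2)$ ends in a location of $\mathsf{Min}$ as well, because $R$ relates only same-owner locations. Let $\sigma^1_\Minsf(\rho_1)=(\delta_1,t_1)$. The ``respecting $P$'' clause, applied to the pair $(\rho_1,\rho_2)$ which satisfies $P$ by induction, yields a move $(\delta_2,t_2)$ that keeps the successor configurations $R$-related and keeps $P$ on the extended pair. We set $\sigma^2_\Minsf(\rho_2)=(\delta_2,t_2)$ and extend $\Phi$ accordingly. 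Suppose instead $\rho_2$ ends in a location of $\mathsf{Max}$ and $\mathsf{Max}$ plays some $(\delta_2,t_2)$. We apply the clause to the converse relation, which also respects $P$, to obtain a matching $(\delta_1,t_1)$ in $\Gcal_1$, and extend $\Phi$ with it. Since $\rho_1$ then belongs to a $\mathsf{Max}$ turn in $\Gcal_1$, it stays consistent with $\sigma^1_\Minsf$. All choices can be made using the axiom of choice; no measurability is needed.

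\textbf{Concluding.} Let $\sigma_\Maxsf$ be any $\mathsf{Max}$ strategy in $\Gcal_2$ and let $\rho_2=\play_{\Gcal_2}(c_2,\sigma^2_\Minsf,\sigma_\Maxsf)$. Every finite prefix of $\rho_2$ has an image under $\Phi$, and these images are coherent, each extending the previous one. Their union is therefore a run $\rho_1$ of $\Gcal_1$ consistent with $\sigma^1_\Minsf$, and it is a play against some $\mathsf{Max}$ strategy. That run is finite, so it reaches $G$ after some number $k$ of steps. The first clause of the relation then forces $\rho_2$ to be in $G'$ after exactly the same $k$ steps, so $\rho_2$ is finite too, with $\rho_1\,R^P\,\rho_2$. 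Hence
\[\weight(\rho_2)\leq \weight(\rho_1)+S\leq \Value{\Gcal_1}(c_1)+\gamma+S.\]
Taking the supremum over $\sigma_\Maxsf$ and then the infimum over $\mathsf{Min}$ strategies gives the bound, and we let $\gamma\to 0$. This is also where the finite-value assumption enters: it guarantees that only finite plays need to be compared, which is why the supremum may be restricted to finite plays.

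\textbf{Main obstacle.} The delicate point is that $P$ is a property of whole prefixes, not of single transitions. The simulation step is therefore only available along pairs already known to satisfy $P$. The induction must carry the invariant ``$\Phi(\rho_2)\,R^P\,\rho_2$'' for every prefix, not merely $R$-relatedness of the last configurations. We also need the converse relation to respect $P$ in the same history-dependent sense, so that $\mathsf{Max}$ moves can be matched. I would state this symmetric requirement explicitly as part of ``bisimulation respecting $P$'' before running the induction.
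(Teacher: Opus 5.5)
Your proof is correct and is essentially the paper's argument. Both transport a $\Minsf$ strategy of $\Gcal_1$ into $\Gcal_2$ through the bisimulation, match $\Maxsf$'s moves via the converse relation while carrying the $R^P$ invariant on prefixes, and compare the resulting finite related plays; the paper additionally transports a fixed $\Maxsf$ strategy of $\Gcal_2$ into $\Gcal_1$ and does the bookkeeping with inf/sup manipulations, whereas your $\gamma$-optimal version is slightly more direct. One small precision: the synchronisation of goal locations is not forced by the first clause alone, but by the simulation clause in the relevant direction together with the standing deadlock-freedom assumption (a non-goal configuration always has a move, while a goal configuration never does).
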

The proof is almost identical to the proof of \cref{lem-bisim-WTG}: 
\begin{proof}
Consider $\sigma^1_\Mimin$ a strategy for $\Minsf$ on $\Gcal_1$  and $\sigma^2_\Mamax$ a strategy for Max on $\Gcal_2$. Here, let us choose 
$\sigma^1_\Mimin$ such that 
$\sup_{\sigma_\Mamax} V_{\Gcal_1}(\sigma^1_\Mimin, \sigma_\Mamax)< +\infty$. Observe that 
$$\Value{\Gcal_1}=\inf \setof{\sup_{\sigma_\Mamax} \weight(\play_{\Gcal_1}(c_1,\sigma_\Mimin, \sigma_\Mamax))}{ \sigma_\Mimin \text{ s.t. } \sup_{\sigma_\Mamax} V_{\Gcal_1}(\sigma_\Mimin, \sigma_\Mamax)< +\infty}.$$
In other words, we can consider only ``smart'' strategies for $\Minsf$ when defining the value.  Similarly, let us choose 
$\sigma^2_\Mamax$ such that $\inf_{\sigma_\Mimin} V_{\Gcal_2}(\sigma_\Mimin, \sigma^2_\Mamax)< +\infty$
%This is the main difference we made to the proof of \cref{lem-bisim-WTG}.

By definition of bisimulation, we can build
 $\sigma^2_\Mimin$ and $\sigma^1_\Mamax$ strategies for Min and Max on $\Gcal_2$ and $\Gcal_1$ respectively, such that $\rho_1 \,R^P\, \rho_2$ for $\rho_i=\play(c_i,\sigma^i_\Mimin,\sigma^i_\Mamax)$ with $i\in\{1,2\}$.
(In any Min location, $\sigma^2_\Mimin$ simulates $\sigma^1_\Mimin$ following $R$ and $P$. Similarly, in any Max location, $\sigma^1_\Mamax$ simulates $\sigma^2_\Mamax$.)
Because $\weight(\rho_1)<+\infty$, we know that $\rho_1$, and thus $\rho_2$, are finite plays.

Since the set of strategies obtained by such a simulation is included in the set of all strategies,
\begin{align*}
\sup_{\sigma^2_\Mamax} \weight(\rho_1) \leq \sup_{\sigma_\Mamax} V_{\Gcal_1}(\sigma^1_\Mimin, \sigma_\Mamax)\text{ for a fixed $\sigma^1_\Minsf$,}\\
\shortintertext{and}
\inf_{\sigma^1_\Mimin} \weight(\rho_2) \geq \inf_{\sigma_\Mimin} V_{\Gcal_2}(\sigma_\Mimin, \sigma^2_\Mamax)\text{ for a fixed $\sigma^2_\Maxsf$.}
\end{align*}

Therefore
\begin{align*}
 \inf_{\sigma^1_\Mimin} 
\sup_{\sigma^2_\Mamax}
\weight(\rho_1)&\leq 
\inf_{\sigma^1_\Mimin}\sup_{\sigma_\Mamax} \weight(\play_{\Gcal_1}(c_1,\sigma_\Mimin, \sigma_\Mamax))= \Value{\Gcal_1}\\
\shortintertext{and}
\inf_{\sigma^1_\Mimin}
\sup_{\sigma^2_\Mamax} 
\weight(\rho_2)
&\geq
\sup_{\sigma^2_\Mamax} 
\inf_{\sigma^1_\Mimin}
\weight(\rho_2)\\
&\geq 
\sup_{\sigma_\Mamax}\inf_{\sigma_\Mimin} \weight(\play_{\Gcal_2}(c_2,\sigma_\Mimin, \sigma_\Mamax))= \Value{\Gcal_2}
\end{align*}

Therefore $\Value{\Gcal_2}-\Value{\Gcal_1}\leq 
\sup\setof{\weight(\rho_2)-\weight(\rho_1)}{\rho_1 \,R^P\, \rho_2}\;.$

%
%\sup_{\sigma^1_\Mimin \sigma^2_\Mamax} V(\rho_2(\sigma^1_\Mimin,\sigma^2_\Mamax))-V(\rho_1(\sigma^1_\Mimin,\sigma^2_\Mamax))$

Reasoning in mirror, one can obtain
\[\Value{\Gcal_1}-\Value{\Gcal_2}\leq 
\sup\setof{\weight(\rho_1)-\weight(\rho_2)}{\rho_1 \,R^P\, \rho_2}\;.\]

%
%$\sup_{\sigma^2_\Mimin \sigma^1_\Mamax} V(\rho_1(\sigma^2_\Mimin,\sigma^1_\Mamax))-V(\rho_2(\sigma^2_\Mimin,\sigma^1_\Mamax))$ 
and combine to conclude.

\end{proof}

Now let us define the following relation between valuations:
For $\epsilon>0$, two valuations $\nu,\nu'$ are \textbf{$\epsilon$-neighbours} if there exists $\epsilon_1,\epsilon_2\geq 0$ such that $\epsilon_1+\epsilon_2=\epsilon$, and for any clock $x\in\Xcal$, $\nu(x)-\nu'(x)\in[-\epsilon_1,\epsilon_2]$.

\begin{lemma}
\label{lem-bisim-spicy-all}

For any configurations $(\ell,\nu)$ 
and $(\ell',\nu')$ of a "trimmed region \wtg" $\Gcal$.

Let $(\ell,\nu) \,R_\epsilon\, (\ell',\nu')$ \tiff 
 $\ell=\ell'$,
and  $\nu$ and $\nu'$ are $\epsilon$-neighbours

Then $R_\epsilon$ is a bisimulation relation respecting the following property $P$: 
%\textcolor{red}{delays are equal, except when one clock is close to $1$}

We say that a pair $(\rho_1,\rho_2)$ respect $P$ iff for any pair of matching delayed transitions 
 $((\ell,\nu_1)\xrightarrow{\delta_1,t_1} c_1,(\ell,\nu_2)\xrightarrow{\delta_2,t_2} c_2)$ along $(\rho_1,\rho_2)$,
 we have $t_1=t_2$
 and for some upper clock $x$ of $\reg{\ell}$, 
 $|(\nu_1+\delta_1)(x) - (\nu_2 + \delta_2)(x)|= |\nu_1(x) -\nu_2(x) | - |\delta_1 - \delta_2|$. In other words, the choice of $\delta_1$ and $\delta_2$ brings the value of upper clocks of the two plays closer together.
\end{lemma}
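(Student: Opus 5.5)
The plan is to verify the two clauses of a bisimulation respecting $P$ directly, by matching each delayed transition with the \emph{same} discrete transition and a carefully chosen delay. By symmetry of $R_\epsilon$, checking the simulation direction suffices. We will also use repeatedly that ``$\epsilon$-neighbours'' means exactly this: the multiset of clock differences $D(\nu_1,\nu_2)=\setof{\nu_1(x)-\nu_2(x)}{x\in\Xcal}$ fits in an interval of width $\epsilon$ that contains $0$. Two elementary observations follow. First, letting equal delays elapse leaves $D$ unchanged. Second, resetting a set $X$ of clocks replaces some differences by $0$, which already lies in the interval. Hence equal delays followed by the same reset preserve $R_\epsilon$.

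Fix $(\ell,\nu_1)\,R_\epsilon\,(\ell,\nu_2)$ with $\nu_1,\nu_2\in\reg{\ell}$, and a move $(\ell,\nu_1)\xrightarrow{\delta_1,t}(\ell',\nu'_1)$ with $t:\ell\xrightarrow{C,X}\ell'$. Since $\Gcal$ is a (trimmed) region \wtg, the valuations $\nu+\delta\models C$ for $\nu\in\reg{\ell}$ lie in a unique time-successor region $r$ of $\reg{\ell}$. We take $t_2=t$ and choose $\delta_2$ so that $\nu_2+\delta_2\in r$; then the guard holds, the target location is the same $\ell'$, and $\nu_2'\in\reg{\ell'}$. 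We split according to which time-successor $r$ is, and write $x$ for an upper clock of $\reg{\ell}$.

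\textbf{Case $r$ has the upper clocks equal to $1$.} Here $\delta_1=1-\nu_1(x)$ and $\delta_2=1-\nu_2(x)$ are forced. Then $|(\nu_1+\delta_1)(x)-(\nu_2+\delta_2)(x)|=0=|\nu_1(x)-\nu_2(x)|-|\delta_1-\delta_2|$, so $P$ holds. Every difference $d_y$ becomes $d_y-d_x$. Subtracting one element of $D$ from the whole set keeps its width at most $\epsilon$ and puts $0$ inside the resulting interval, so after the resets the valuations are again $\epsilon$-neighbours.

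\textbf{Case $r=\reg{\ell}$ (no clock reaches $1$).} If $\nu_2+\delta_1\in r$, we take $\delta_2=\delta_1$; then $P$ holds trivially and $R_\epsilon$ is preserved by the first observation. Otherwise $\nu_2(x)>\nu_1(x)$ and $\nu_2(x)+\delta_1\geq 1$. We then take $\delta_2=\max(0,\delta_1-\lambda)$ with $\lambda=\nu_2(x)-\nu_1(x)$. This gives $(\nu_2+\delta_2)(x)\leq(\nu_1+\delta_1)(x)<1$, and since $r$ is convex and both endpoints of the segment lie in it, $\nu_2+\delta_2\in r$. The gap on $x$ shrinks by exactly $|\delta_1-\delta_2|\leq\lambda$, which is $P$. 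All differences are shifted by the same amount $\mu=\delta_1-\delta_2\in[0,\lambda]$, i.e.\ they move toward $d_x+\lambda=0$.

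The step I expect to be the main obstacle is showing, in this last sub-case, that shifting all differences by $\mu\in[0,-d_x]$ keeps $D\cup\{0\}$ (the $0$ coming from resets) inside an interval of width $\epsilon$. The natural first attempt is to choose $x$, among the upper clocks, as the one realising the minimal difference. Upper clocks are the largest in both valuations, so the clock with the most negative $d_y$ should be one whose $\nu_2$-value exceeds $1-\delta_1$. The shifted set then lies in $[d_x+\mu,\max D+\mu]$. I will try to use the freedom in splitting $\epsilon=\epsilon_1+\epsilon_2$ to reabsorb the shift: the lower end moves up by $\mu$ while $0$ stays in range because $\mu\leq -d_x$. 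If necessary, I will additionally cap $\mu$ by $\epsilon_2-\max D$ and argue that the remaining constraint is still met by trimming. Once the one-step simulation respecting $P$ is established, extending it to prefixes of plays is immediate, as $P$ is checked transition by transition.
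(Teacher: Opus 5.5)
Your overall plan is the paper's: match with the same discrete transition, pick a delay that lands in the same target region, then shift the clock differences. Your $x=1$ case and your equal-delay case are correct. There are, however, two genuine problems.

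\textbf{A missing case.} Your case split does not cover every time-successor. When $\mathsf{reg}(\ell)$ has clocks equal to $0$ ($X_0\neq\emptyset$), a positive delay leads to the time-successor in which every clock lies in $(0,1)$. This region differs from $\mathsf{reg}(\ell)$, yet no clock reaches $1$. Here your rule $\delta_2=\max(0,\delta_1-\lambda)$ gives $0$ as soon as $\delta_1\le\lambda$. Then $\nu_2+0=\nu_2\notin r$, so the guards $y>0$ for $y\in X_0$ fail. Your convexity argument breaks because the endpoint $\nu_2$ is not in $r$. Concretely, take $\nu_1(y)=\nu_2(y)=0$, $\nu_1(x)=0.5$, $\nu_2(x)=0.9$, $\delta_1=0.2$, and a transition guarded by $y>0\wedge x<1$: your rule prescribes $\delta_2=0$. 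This is exactly the last subcase of the paper's proof. There one picks some $\delta_2\in(0,\delta_1)$ with $\nu_2(x)+\delta_2<1$; then $\mu=\delta_1-\delta_2\in(0,\lambda]$, and $P$ still holds.

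\textbf{The step you leave open.} This step takes one line and needs no minimality of $d_x$. Your justification for minimality is in fact false. All upper clocks form the single class $X_p$, so they share one value in each valuation; there is nothing to choose. Moreover, $d_x$ need not be $\min D$. With $\nu_1(y)=0.1$, $\nu_1(x)=0.5$, $\nu_2(y)=0.4$, $\nu_2(x)=0.6$ you get $d_y=-0.3<d_x=-0.1$. So the shifted set need not lie in $[d_x+\mu,\max D+\mu]$.

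Your remark that ``the lower end moves up by $\mu$ while $0$ stays in range'' is the whole argument, once you apply it to the \emph{witnessing interval}. If $D\subseteq[-\epsilon_1,\epsilon_2]$, then $D+\mu\subseteq[-\epsilon_1+\mu,\epsilon_2+\mu]$. This interval still contains $0$, because $-\epsilon_1+\mu\le d_x+\mu\le 0\le\epsilon_2+\mu$ whenever $0\le\mu\le -d_x$. Equivalently, $\epsilon_1'=\epsilon_1-\mu$ and $\epsilon_2'=\epsilon_2+\mu$ are non-negative with sum $\epsilon$, and resets only add the value $0$. The paper uses this same computation in each non-trivial subcase, with $\mu$ anywhere between $0$ and $-d_x$.

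Drop your fallback of capping $\mu$. It enlarges $\delta_2$, which can push the upper clock of $\nu_2+\delta_2$ to $1$ and invalidate the guard, and trimming does not repair that.
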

\begin{proof}

Let $t:\ell_1\xrightarrow{C,X} \ell_2$. Let $\nu_1,\nu_1'\in \reg{\ell_1}$ be $\epsilon$-neighbours. By definition, there are $\epsilon_1,\epsilon_2\geq 0$ such that $\epsilon_1+\epsilon_2=\epsilon$ and for any clock $x\in\Xcal$, $\nu_1(x)-\nu_1'(x)\in[-\epsilon_1,\epsilon_2]$.

Let $\delta\geq 0$ such that $\nu_1+\delta \models C$ and let $\nu_2=(\nu_1+\delta)[X:=0]$. Then let us show that there exists $\delta'\geq 0$ such that $\nu'_1 + \delta' \models C$,
and
$(\ell_2,\nu_2)R_\epsilon (\ell'_2,\nu'_2)$ with $\nu'_2=(\nu'_1+\delta')[X:=0]$. To prove that, one only needs to show  that $\nu_1+\delta$ and $\nu'_1+ \delta'$ are $\epsilon$-neighbours.

%Consider the interval
%\[\Delta=\enstq{\delta'\geq0}{\begin{array}{c}
%		\nu'_1+\delta'\models C\\ 
%		\nu'_1+\delta'\in r\text{ s.t. }\nu_1'+\delta' \in \adh{r}\\
%\end{array}}\]
Since $\Gcal$ is a "region trimmed \wtg", there exists $\delta'$ such that $\nu'_1+\delta'\models C$
%, and every valuation of the form $\nu+\delta$ that satisfy $C$ with $\nu\in \reg{\ell_1}$ belong to a unique region $r_t$. Since $\nu'_1\in \regadh{\ell_1}$ and $\nu'_1 + \delta' \models \overline{C}$, $\nu'_1 + \delta' \in \adh{r_t}$.
%Therefore, $\Delta$ is not empty.

If $\nu'_1+\delta\models C$, then $\delta'=\delta$ preserves $\epsilon$-neighbourhood.

Otherwise, some guard in $C$ is at play. 
This guard cannot be of the form $x=0$ for some clock $x$, otherwise  a delay $0$ would work for both $\nu_1$ and $\nu'_1$. Moreover, it cannot be of the form $x>0$ since $\nu_1$ and $\nu'_1$ are in the same region, thus we would have $\delta>0$. Therefore, the guard which forbids delay $\delta$ from $\nu'_1$ is of one of these forms: $x=1$,  $x<1$ or $x\leq 1$, where $x$ is an "upper clock" of $\reg{\ell_1}$.

		\begin{itemize}
			\item If there is a guard $x=1$ in $C$ for some $x\in\Xcal$, then $\delta=1-\nu_1(x)$, and we can only pick $\delta'=1-\nu'_1(x)$.
 Moreover, for all $y\in\Xcal$,  we know that
			$$
			(\nu_1(y) + \delta)-(\nu_1'(y) + \delta') \in\intff{-\epsilon'_1}{\epsilon'_2}
			,$$
			for $\epsilon_2'\eqdef \epsilon_2-(\delta'-\delta)\geq 0$, and $\epsilon_1'\eqdef \epsilon_1+(\delta'-\delta)\geq 0$.
			Since $\delta'-\delta=\nu_1(x)-\nu_1'(x)\in\intff{-\epsilon_1}{\epsilon_2}$, $\epsilon'_1\geq 0$ and $\epsilon'_2\geq 0$,
		 and $\epsilon_1'+\epsilon_2'=\epsilon_1+\epsilon_2=\epsilon$.
\item  Otherwise, there must be a guard $x< 1$  or $x\leq 1$ in $C$ for some $x\in\Xcal$, 
%such that $\nu_1+\delta\models (x\leq 1)$ but $\nu'_1+\delta' \not
%\models (x< 1)$ (resp. $x\leq 1$), then $\Delta<\delta'$.
If $\delta + \nu_1(x) - \nu'_1(x)> 0$ then pick $\delta'=\delta + \nu_1(x) - \nu'_1(x)$.
			Since $\delta'-\delta=\nu_1(x)-\nu_1'(x)\in\intff{-\epsilon_1}{\epsilon_2}$, $\nu_1+\delta$ and $\nu_1'+\delta'$ are $\epsilon$-neighbours.

Otherwise,
 $\delta \leq \nu'_1(x)-\nu_1(x) \leq \epsilon_1$
 Then take any $0<\delta'$ such that $\nu'_1(x)+\delta'<1$.

 Let
$\epsilon_2'\eqdef \epsilon_2+\delta-\delta'\geq 0$ and $\epsilon_1'\eqdef \epsilon_1-(\delta-\delta')\geq0$.

Observe that $\delta'<\delta$, thus $\epsilon'_2\geq0$. Moreover $\epsilon_1\geq 0$ since $\delta-\delta'<\delta \leq \epsilon_1$.
Finally, $\epsilon_1'+\epsilon_2'=\epsilon$, hence $\nu_1' + \delta'$ and $\nu_1+\delta$ are $\epsilon$-neighbours.\qedhere
			
\end{itemize}

\end{proof}

\LipTB*

\begin{proof}[Proof of \cref{thm-Lip-TB}]
For any $0<\epsilon$, for any valuations $\nu,\nu'$ such that $\distinf{\nu-\nu'}\leq \epsilon$, we know that $\nu,\nu'$ are $\epsilon'$-neighbours for $\epsilon'=2\epsilon$.
According  to \cref{lem-bisim-WTG-more},

\begin{equation*}
|\Value{\Gcal_1}(c_1)-\Value{\Gcal_2}(c_2)|\leq \sup
\enstq{|\weight(\rho_1)-\weight(\rho_2)|}{\begin{array}{c}
		\rho_1,\rho_2 \text{ finite plays of } \Gcal_1,\Gcal_2\\ 
		\rho_1 \,R_{\epsilon'}^P\, \rho_2
		\end{array}}
\;.
\end{equation*}
For any pair of plays $\rho_1 \,R_{\epsilon'}^P\, \rho_2 $, we know that
$|\weight(\rho_1)-\weight(\rho_2)|\leq 2\epsilon \cdot \Tmax\wmax d$. Indeed, delay differences bring the upper clocks together by definition of $P$. Thus, if we look at a section of $\rho_1$ and $\rho_2$ side by side with some upper clock $t$, the delay difference is at most $\epsilon'$ in this section.  Accumulating a delay difference of $\epsilon'$ implies that the clock $t$ has the same valuation in both plays. Thus there cannot be more delay differences until $t$ and every upper clock has been reset. 

Here $\Gcal$ has a time bound $\Tmax$; 
%according to \cref{lem-deciding-time-bounded}, every SCC of $\Gcal$ is "up-driven".
 this means that if we add a clock $t$ which is automatically reset whenever it reaches $1$, 
then any clock $x$ can become a non-null upper clock at most $\Tmax$ times in a play before $t$ reaches $1$ for the $\Tmax$-th time.
Thus a delay difference of at most $\epsilon'$ can be accumulated 
at most $d\Tmax$ times, and thus induce a value difference of at most $\Tmax \wmax d\epsilon' = 2\Tmax \wmax d\epsilon$.\qedhere
%\todo{should I add t earlier ? what about the decision procedure? or is it not needed?}

%the delays taken in $\rho_1,\rho_2$ only differs when a clock is $2\epsilon$ close to $1$. This happens at most $T$ times for every clock, and the weight difference caused is at most $2\epsilon \cdot\wmax$.

\end{proof}

\correlax*
\begin{proof}[Proof of \cref{cor-relax}]
Let $\Gcal$ be a "trimmed region \wtg" and $\Gcal'$  a relaxed copy of $\Gcal$. Then for any $\epsilon$, the relation $(\ell,\nu)\,R'_\epsilon\,(\ell,\nu')$ defined as
\begin{itemize}
\item $\nu,\nu'$ are $\epsilon'$-neighbours for some $\epsilon'<\epsilon$,
\item $\nu\in\reg{\ell}$ and $\nu'$ belongs to the closure of $\reg{\ell}$,
\end{itemize}
is a bisimulation relation respecting property $P'$, defined as follows:
A pair of plays $(\rho_1,\rho_2)$
iff for any pair of matching delayed transitions $((\ell,\nu_1)\xrightarrow{\delta_1,t_1}(\ell',\nu_1'),(\ell,\nu_2)\xrightarrow{\delta_2,t_2}(\ell',\nu'_2))$ along $(\rho_1,\rho_2)$, we have:
\begin{itemize}
\item $t_1=t_2$
\item for some upper clock $x$ of $\reg{\ell}$, either 
 $|(\nu_1+\delta_1)(x) - (\nu_2 + \delta_2)(x)|= |\nu_1(x) -\nu_2(x) | - |\delta_1 - \delta_2|$, or $\nu_1(x)=\nu_2(x)=0$.
 \item If $\nu_1(x)=\nu_2(x)=0$ and $|\delta_1 - \delta_2|>0$, then $|\delta_1 - \delta_2|\leq \frac\delta 2$, where $\delta$ is the delay difference that occurred the previous time such a pattern was seen along $\rho_1,\rho_2$ ($+\infty$ if undefined). 
\end{itemize}

In other words, delay difference must bring the upper clocks closer if possible.
The only case where this is not possible is when transitions with constraints $x>0$ and $x\geq 0$ leave a singleton-region location. Then $P'$ enforces that the sum of these delay differences converges.

% $(\ell,\nu_1)\xrightarrow{\delta_1,t_1}(\ell',\nu_1')$, $(\ell,\nu_2)\xrightarrow{\delta_2,t_2}(\ell',\nu'_2)$ respects $P'$
% iff $t_1=t_2$
% and for some upper clock $x$ of $\reg{\ell}$, either 
% $|(\nu_1+\delta_1)(x) - (\nu_2 + \delta_2)(x)|= |\nu_1(x) -\nu_2(x) | - |\delta_1 - \delta_2|$, or $\nu_1(x)=\nu_2(x)=0$ (this last possibility accounts for possible $(x>0,x\geq 0)$ pair of constraints where bringing upper clocks closer is not possible).

Note that, since $\Gcal$ is "trimmed" there are no transitions that can be taken from $(\ell,\nu')$ in $\Gcal'$ and not from $(\ell,\nu)$ in $\Gcal$. 
 Thus the proof is similar to \cref{lem-bisim-spicy-all}, with one notable adjustment:
we ask for $\epsilon'<\epsilon$-neighbourhood because the guard differences between $\Gcal$ and $\Gcal'$ might force an arbitrarily small increase in the distance between $\nu$ and $\nu'$.

Now, for any $\rho_1 R^{P'} \rho_2$ we have
$|\weight(\rho_1)-\weight(\rho_2)|\leq \epsilon \cdot \Tmax\wmax d +2\epsilon \wmax$. The first part of the sum corresponds to delay differences bringing clock together, the second part to the sum of delay differences from singleton-region location, which converges.

Therefore, according to \cref{lem-bisim-WTG-more}, the value function of $\Gcal'$ corresponds to the value function of $\Gcal$ extended continuously to the closure of regions. \qedhere

%Now let $C$ be a set of constraints, and $\overline{C}$ be a copy of $C$ where every guard has been relaxed. Then for any valuation $\nu$ that verifies 
%$\overline{C}$ but not $C$, and for any $\epsilon>0$, there exists $\nu_\epsilon\models C$ a valuation $\epsilon$-close to $\nu$.
\end{proof}

\section{Proofs of \cref{sec-approx}}

\subsection{Proof of \cref{lem-no-neg-cycle}}

\begin{proof}
Let $\Mcal$ be the "min-cost reachability game" obtained by removing clocks, guards and location weights from $\Gcal$. Observe that
$|\Value\Mcal-\Value\Gcal| \leq \Tmax\wmax$, since any strategy can be lifted from one model to the other (see \cref{rem-trim}), and since weight due to delay is contained within $-\Tmax\wmax$ and $\Tmax\wmax$.

In particular, this implies that the value of $\Mcal$ is also finite. Furthermore, Brihaye \etal\cite{BGHM16} show that
 $|\Value\Mcal|\leq nW$~\footnote{\cite{BGHM16}, Proof of Proposition $4$.}.
 Intuitively, if a path takes no cycles, it reaches the goal location in at most $n$ step. Since the value is finite, $\Minsf$ (resp. $\Maxsf$) is able to evade any cycle of transition weight $\geq 0$ (resp. $<0$), thus able to guarantee a cost $\leq nW$ (resp. $\geq -nW$).
 
 Consider a strategy $\sigma_\Minsf$ that yields a run $\rho$ in $\Gcal$ such that the accumulated transition weight along $\rho$  is $B+1$ or above, from some configuration $c_0$ to a configuration $c$. 
We know that 
$\sup_{\sigma_\Maxsf} \weight(\play_\Gcal(\sigma_\Minsf,\sigma_\Maxsf))\geq 
\weight(\rho\rho')$ such that $\rho\rho'=\play_\Gcal(\sigma_\Minsf,\sigma_\Maxsf)$ for some $\sigma_\Maxsf$. 

Furthermore, since $|\Value\Mcal|\leq nW$, we can pick $\sigma_\Maxsf$ such that $\rho'$ has an accumulated transition weight of at least $-nW$. 
Thus $\weight(\rho\rho')\geq -nW -\Tmax\wmax +B +1 \geq nW +\Tmax\wmax +1\geq \Value\Gcal(c_0) +1$. Thus we can ignore any such $\Minsf$ strategy when computing the Value.

Symmetrically, we can ignore any $\Maxsf$ strategy that yields a run of accumulated transition weight $\leq -B-1$.
 
Therefore the accumulated cost along relevant runs is bounded, which means that it can be stored inside locations.
Therefore we can construct an equivalent \wtg	$\Gcal'=(L'_\mathsf{Min}, L'_\mathsf{Max},G,\mathcal{X},T',w')$
 such that:
 \begin{itemize}
 \item $L'=L\times \set{-B, -B+1,\dots, B} \cup \{\ell_{-\infty},\ell_{+\infty}\}$ with $B=2\Tmax\wmax+2nW$.
 \item $\ell_{\pm\infty}$ are locations of value $\pm \infty$.
 \item for any $t:\ell\xrightarrow{C,X} \ell' \in T$, for any $i\in\set{-B,\dots, B}$, if $i+w(t)\in  \set{-B,\dots, B}$, add $t:(\ell,i)\xrightarrow{C,X} (\ell',i+w(t))$ to $T'$. Otherwise add 
 $t:(\ell,i)\xrightarrow{C,X} (\ell_{-\infty})$ or $t:(\ell,i)\xrightarrow{C,X} (\ell_{+\infty})$ to $T'$, depending on the sign of $i+w(t)$.
 \item Let $w'((\ell,i))=w(\ell)$. For any $t\in T'$ reaching a location $(\ell,i)$, let $w'(t)=i$ if $\ell$ is a goal location, or $0$ otherwise.
 \end{itemize}
 Remove from $\Gcal'$ any infinite-value locations caused by $\ell_{-\infty},\ell_{+\infty}$.

This construction ensures that for any location $\ell$, we now have $\Value\Gcal(\ell,\cdot)=\Value{\Gcal'}((\ell,0),\cdot)$.
\end{proof}

%\VIArateSCC*
%\leastweight*
%\SCCinduction*
%\todo{mute VIArateSCC et SCCinduction}
\subsection{Proof of \cref{VIA-convergence-rate}}

\AP A location $\ell \in L$ is said to be ""\emph{urgent}"" if every outgoing transition from $\ell$ either leaves the \wtg or has a guard of the form $y=0$ for some clock $y$.

The following lemma is the main building block of the proof of \cref{VIA-convergence-rate}. Here we show that if $\Gcal'$ has a "non-urgent" location of minimal weight $\ell$, then it can be unfolded into a string of $K_\eta+$ copies of itself, for some $\eta>0$, which count (and thus bound by $K_\eta$) how many times a positive delay has been spent in $\ell$. Each copy has one less "non-urgent" location. The value difference created by this transformation is at most $\eta$.

\begin{restatable}[How to make a location of least weight urgent]{lemma}{leastweight}
\label{dealing-with-least-weight-state}

 Let $L$ be the set of locations of ${\Gcal'}$
and let
$
0 < n \leq |L|
$
be the number of "non-urgent" locations in ${\Gcal'}$.
Let
$
\ell \in L
$
be a "non-urgent" location of minimal weight in ${\Gcal'}$.

Then, for any $\eta>0$, there exists a WTG $\Gcal''$ made of a string of
$$
K_\eta +1 = d \cdot \left(\left\lfloor \frac{2\slopemax}{\eta}\right\rfloor +1\right) +1
$$
copies of ${\Gcal'}$, such that $\distinf{\Value{\Gcal'}(\ell'_0,\cdot)-\Value{\Gcal''}(\ell''_0,\cdot)}\leq \eta$ ($\ell''_0$ is the first copy of the initial location $\ell'_0$ of $\Gcal'$ in $\Gcal''$).

In $\Gcal''$, outgoing transitions from $\ell$ are duplicated such that waiting in $\ell$ implies jumping from one copy to the following one. 
In the last copy, waiting in $\ell$ is not possible.
 \end{restatable}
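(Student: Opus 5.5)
The plan is to build $\Gcal''$ explicitly, observe that one inequality between the two values is immediate, and prove the other by a strategy-transformation argument based on \cref{lem-bisim-WTG-more} and the Lipschitz bound $\slopemax$ (\cref{thm-Lip-TB}).

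\textbf{Construction.} Take copies $\Gcal'_0,\dots,\Gcal'_{K_\eta}$ of $\Gcal'$. Add a fresh clock $z$, reset by every transition entering $\ell$; this does not change the value, because $z$ is never otherwise tested. Each transition $t$ leaving $\ell$ in copy $i$ is split into two transitions:
\begin{itemize}
\item a version guarded by $z=0$ that stays in copy $i$;
\item a version guarded by $z>0$ (relaxed as in \cref{cor-relax}) that goes to the copy of its target in copy $i+1$.
\end{itemize}
In the last copy only the $z=0$ version is kept, so $\ell$ is urgent there. Every play of $\Gcal''$ projects to a play of $\Gcal'$ with the same weight. Hence $\Gcal''$ is $\Gcal'$ with the owner of $\ell$ restricted to at most $K_\eta$ positive-delay visits of $\ell$.

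\textbf{The easy direction.} Suppose $\ell\in L_\Minsf$. Min is restricted and Max is not, so $\Value{\Gcal''}\geq \Value{\Gcal'}$ immediately. Symmetrically, if $\ell\in L_\Maxsf$ then $\Value{\Gcal''}\leq\Value{\Gcal'}$. It remains to show that the restricted player loses at most $\eta$.

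\textbf{The main direction, case $\ell\in L_\Minsf$.} Fix an $\eta/2$-optimal strategy $\sigma_\Minsf$ in $\Gcal'$ and set $\theta=\eta/(2\slopemax)$. I build $\sigma^\eta_\Minsf$ that mimics $\sigma_\Minsf$ with a lag:
\begin{itemize}
\item whenever $\sigma_\Minsf$ waits $\delta$ in $\ell$, $\sigma^\eta_\Minsf$ banks the delay and leaves $\ell$ with delay $0$, as long as the banked total stays below $\theta$;
\item once the bank reaches $\theta$, or a guard forces it, the whole banked delay is spent in one positive-delay visit of $\ell$.
\end{itemize}
The two plays stay related by $R_\theta$ in the sense of \cref{lem-bisim-spicy-all}, so their valuations remain $\theta$-neighbours. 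Since $\ell$ has minimal weight, shifting time into a later visit of $\ell$ never increases the delay weight paid. The only loss comes from reaching an output edge with valuations $\theta$-apart, and by the Lipschitz bound this costs at most $\slopemax\theta=\eta/2$.

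\textbf{Counting positive visits.} Each positive-delay visit of $\ell$, except possibly a forced one, consumes at least $\theta$ time. The SCC is up-driven, so in an $\intfo01$-game the upper-clock fraction advances by at most $1$ before an output edge. Splitting a play into at most $d$ phases, delimited by which clock is the upper clock, each phase has at most $\lfloor 1/\theta\rfloor+1\leq\lfloor 2\slopemax/\eta\rfloor+1$ positive visits. This gives at most $K_\eta$ visits in total, so $\sigma^\eta_\Minsf$ is a strategy of $\Gcal''$ within $\eta$ of the value.

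\textbf{Case $\ell\in L_\Maxsf$.} The argument is symmetric: Max batches its waits in $\ell$. Minimal weight of $\ell$ means that the time Max moves into a later visit of $\ell$ only shifts weight in Max's favour, or is neutral. The same Lipschitz error bound applies.

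\textbf{Main obstacle.} The lagged mimicking must be a legal bisimulation-with-property in the sense of \cref{lem-bisim-WTG-more}. Postponed delay has to be recoverable before a guard such as $x=1$ or $x\leq 1$ expires, or before an upper clock is reset. My plan is to force a flush of the bank whenever such a guard is about to become unsatisfiable, and to charge these forced flushes to the phase boundaries; this is why the factor $d$ and the $+1$ per phase appear in $K_\eta$.

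The point to check carefully is that the weight error does not accumulate across flushes, but is paid only once at the output edge. I expect this to follow from the minimal-weight property together with the fact that the lag stays bounded by $\theta$ throughout.
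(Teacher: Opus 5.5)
Your batching route differs from the paper's proof when $\ell\in L_\Minsf$, and it can be made to work. However, the step you single out as the main obstacle is misdiagnosed, and one loss term is missing.

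\textbf{Comparison with the paper.} When $\Minsf$ owns $\ell$, the paper keeps the two plays synchronised, with the same elapsed time and hence the same value of the upper clock $x$. Whenever $\sigma_\Minsf$ could come back to $\ell$ with a positive delay within $\eta'=\eta/(2\slopemax)$ time units, the paper takes a \emph{shortcut}: it stays in $\ell$ and fires the later exit transition directly, skipping the cycle. Replacing time spent elsewhere by time spent in $\ell$ is exactly where minimality of $w(\ell)$ and null cycle transition weights are used. The skipped resets are tracked in $\Xflawed$, and the pigeonhole argument on resets produces the factor $d$ in $K_\eta$. Only when $\Maxsf$ owns $\ell$ does the paper bank delays, with $\Minsf$ absorbing the lag by waiting $\max(0,\delta-\Delta)$; minimality is used there too. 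Your uniform batching avoids makeshift transitions and flawed-clock bookkeeping. Moving delay from one visit of $\ell$ to a later one is weight-neutral whatever $w(\ell)$ is, so minimality plays no real role in your argument. Each flush consumes at least $\theta$ of the at most one time unit available in an up-driven SCC. This gives at most $\lfloor 1/\theta\rfloor+1\le K_\eta$ positive-delay visits, with no factor $d$ needed.

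\textbf{What must be corrected.}
\begin{itemize}
\item In an up-driven SCC there is a fixed upper clock $x$ that is never reset inside the SCC. No internal edge has a guard $y=1$, since that would force $x=1$ and hence a reset of $x$.
\item The lagging play has the smaller value of $x$, so with relaxed guards it can always copy the leading play's delays. No flush is ever forced, except possibly a final exit from $\ell$ through an output edge (your $+1$). Your remedy of flushing when a guard is about to expire could not be implemented anyway, because the bank can only be spent while sitting in $\ell$.
\item The real issue is the converse. The leading play may reach $x=1$ first and then cannot copy the lagging play's positive delays; it must make the upper-clock-closing move of \cref{lem-bisim-spicy-all}. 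The bank is then spent in some location $\ell_o\neq\ell$, or lost at the output edge. This happens only once, at the end, and costs at most $\theta\wmax$ in delay weight.
\item In the $\Minsf$ case minimality even works against you here, since the lagging play pays $\Delta\,(w(\ell_o)-w(\ell))\geq 0$ extra. So the loss is not only the Lipschitz term at the output edge.
\item Your phases delimited by the upper clock do not exist: there is a single one.
\item Do not spend $\eta/2$ on near-optimality of $\sigma_\Minsf$. The comparison holds for every $\sigma_\Minsf$, so take the infimum directly. The total error is then $\theta(\slopemax+\wmax)\le \eta/2+\eta/(4\Tmax d)$, which fits within $\eta$.
\end{itemize}
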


\begin{proof}[Proof of \cref{dealing-with-least-weight-state}]

Fix
$
\eta'=\frac{\eta}{2\slopemax}
$. Let $x$ be an upper clock of $\Gcal'$. 

\begin{description}
%%%%%%%%%%%%%%%%%%%%%%%%%%%%%%%%%%%%%%%%%%%%%%%%%%%%%%%%%%%%%%%%%%%%
%ELL BELONGS TO MIN
%%%%%%%%%%%%%%%%%%%%%%%%%%%%%%%%%%%%%%%%%%%%%%%%%%%%%%%%%%%%%%%%%%%%
\item[If $\ell$ belongs to $\Minsf$,]
observe that any strategy for $\Maxsf$ in $\Gcal''$ can be translated into an equivalent strategy in $\Gcal'$, and conversely.
Whereas for $\Minsf$, playing in $\Gcal''$ is equivalent  to playing in $\Gcal'$ with the additional restriction that $\Minsf$ can take at most  $K_\eta$ positive-delay transitions from $\ell$. Thus
$\Value{\Gcal''}-\Value{\Gcal'}\geq 0$. 

Let us now fix $\sigma_\Minsf$ and $\sigma_\Maxsf$ strategies for $\Minsf$ and $\Maxsf$ in $\Gcal'$. We will now build $\sigma^\eta_\Minsf$ a strategy for $\Minsf$ which will work for both $\Gcal'$ and $\Gcal''$.

\begin{itemize}

\item Initially, let $\Xflawed=\emptyset$.

\item We define $\sigma^\eta_\Minsf$ inductively by simulating $\sigma_\Minsf$. To this end, we simultaneously construct a play $\rho^\eta$ produced by the strategy profile $(\sigma^\eta_\Minsf,\sigma_\Maxsf)$ in $\Gcal'$ and a play
\footnote{this play is not actually a valid play in $\Gcal'$, since it can contain some makeshift transitions.}
 $\rho$ that verifies $\weight(\rho)\leq \sup_{\sigma'_\Maxsf}\weight
(\play(\sigma_\Minsf,\sigma'_\Maxsf))$,
 both starting from the same valuation in $\ell_0''$ and $\ell_0'$, respectively. Our construction guarantees that at each step the two plays are in matching locations. Moreover, at every step, the pair $(\rho^\eta,\rho)$ remains synchronized, in the sense that the same amount of time has elapsed in both plays since entering $\Gcal'$. Consequently, the valuation of the upper clock $x$ is identical in the corresponding configurations of the two plays at every step.
\item More formally, 
consider a pair of corresponding configurations $(\ell_\Minsf,\nu^\eta),(\ell_\Minsf,\nu)$ with $\ell_\Minsf\in L_\Minsf\setminus \{\ell\}$. Whenever $\sigma_\Minsf$ prescribes some delayed transition, let $\sigma^\eta_\Minsf$ prescribe the exact same delayed transition from  $(\ell_\Minsf,\nu)$.
Because of trimming, relaxed guards and synchronicity, this delayed transition in valid.
\item Similarly, 
consider a pair of corresponding configurations $(\ell_\Maxsf,\nu^\eta)$, $(\ell_\Maxsf,\nu)$ with $\ell_\Maxsf\in L_\Maxsf$. Whenever $\sigma_\Maxsf$ prescribes a delayed transition from $(\ell_\Maxsf,\nu^\eta)$ along $\rho^\eta$, let $\Maxsf$ follow the same delayed transition from $(\ell_\Maxsf,\nu)$ in $\rho$.
\item 
Now consider
a pair $(\ell,\nu^\eta_1)$,$(\ell,\nu_1)$, such that
 $\sigma_\Minsf$ prescribes the delayed transition
$(\ell,\nu_1) \xrightarrow{\delta,t} (\ell',\nu_2)$.
If $\delta=0$, then let $\sigma^\eta_\Minsf$ prescribe $(\delta,t)$ from $(\ell,\nu^\eta_1)$.

 Otherwise, consider the set $\Pcal$
 of plays of the form 
 $(\ell,\nu_1) \xrightarrow{\delta,t} (\ell',\nu_2)
\rightarrow^*
(\ell,\nu_3)
\xrightarrow{\delta',t'}
(\ell'',\nu_4)$
produced by $\sigma_\Minsf$ and any $\Maxsf$ strategy, such that
\begin{itemize}
\item $\delta'>0$
\item $\nu_3(x)-\nu_1(x)\leq \eta'$
\item for any clock $y$ reset along the path
$
(\ell,\nu_1)\rightarrow^*(\ell,\nu_3)
$,   $y\not\in\Xflawed$ and $\nu^\eta_1(y)\leq \eta'$.
\end{itemize}
 
If $\Pcal$ is empty, let $\sigma^\eta_\Minsf$ prescribe $(\delta,t)$.

\AP Otherwise, consider the supremum of the values $\nu_4(x)$ over all plays in $\Pcal$. There exists a transition $t'$ such that the same supremum is obtained when restricting $\Pcal$ to the subset of plays ending with $t'$. We denote this subset by $\Pcal'$.
Then let $\sigma^\eta_\Minsf$ prescribe $(\ell,\nu^\eta_1)\xrightarrow{\delta'',t'}(\ell'',\nu^\eta_4)$ such that $\nu^\eta_4(x)$ equals the supremum of all $\nu_4(x)$ in $\Pcal'$. We call this transition a ""shortcut transition"". Relaxed guards in $\Gcal'$ ensure that this delayed transition is valid.
Let $X$ be the set of clocks reset between configurations
$
(\ell,\nu_1)$ and $(\ell,\nu_3)
$ along any path in $\Pcal'$. Add $X$ to $\Xflawed$.

Let us define $\rho$ so that the "shortcut transition" in $\rho^\eta$ is matched by a play in $\Pcal'$ in $\rho$ and the pair of plays remains synchronized. 
If no play in $\Pcal'$ reaches the supremum, then $\Pcal'$ is infinite. Thus according the Bolzano–Weierstrass theorem, there exists
a sequence $(\nu'_n)$ in $\{\nu_4\in \Pcal'\}$ such that there exists $\nu'=\lim_{n\rightarrow +\infty} \nu'_n$ and
$\nu'(x)=\nu^\eta_4(x)$, and such that there exists $w'$ the limit weight of paths from $\nu_1$ to $\nu_n$ in $\Pcal'$.
Then let $\rho$ follow a makeshift transition from $(\ell,\nu_1)$ to $(\ell'',\nu')$ of weight $w'$. 

This means that, when a "shortcut" happens, one transition in $\rho^\eta$ is not matched to one transition in $\rho$, but rather to either a sequence of delayed transitions or a makeshift transition in $\rho$.
%%This is enough to verify $\weight(\rho)\leq \sup_{\sigma'_\Maxsf}\weight
%(\play(\sigma_\Minsf,\sigma'_\Maxsf))$.

%then there exists an increasing chain of plays in $\Pcal'$ with respect to the prefix ordering that converges to this supremum. In this case, we define $\rho^\id$ such that the "shortcut transition" in $\rho$ corresponds to the limit of this chain in $\rho^\id$.

\item Whenever a clock in $\Xflawed$ is reset along $\rho^\eta$, remove it from $\Xflawed$.
\end{itemize}

Then the
 pair $(\rho^\eta,\rho)$ satisfies the following properties:

\begin{itemize}

\item In $\rho^\eta$, between any $d+1$ visits to $\ell$ with positive delay, the total time elapsed between entering $\ell$ on the first visit and entering $\ell$ on the $(d+1)$-st visit is at least ${\eta'}$.

Assume, for the sake of contradiction, that less than ${\eta'}$ time units elapse between these two visits. Then none of the first $d$ visits is followed by a "shortcut transition". Consequently, between every pair of consecutive visits, at least one clock $y$ is reset, and
\begin{itemize}
\item either $y\in\Xflawed$, and therefore removed from $\Xflawed$.
\item or $y$ has valuation $>\eta'$. Being reset, $y$ won't fit this condition again until $\eta'$ time units elapse.
\end{itemize} 
Hence, at least $d$ distinct clocks are reset during these $d+1$ visits. This is impossible, since $\Xcal \setminus {x}$ contains at most $d-1$ clocks.

Thus, since at most $1$ time unit can be spent in an "up-driven" $\intfo01$-\wtg,
the number of positive-delay visits of $\ell$ in $\rho^\eta$ is bounded by
$$
d \cdot \left(\left\lfloor \frac{1}{\eta'}\right\rfloor +1\right).
$$

Therefore $\sigma^\eta_\Minsf$ can indeed be considered as a strategy on both $\Gcal'$ and $\Gcal''$, and satisfies 
$$\sup_{\sigma_\Maxsf}\weight
(\play_{\Gcal''}(\sigma^\eta_\Minsf,\sigma_\Maxsf))= \sup_{\sigma_\Maxsf}\weight
(\play_{\Gcal'}(\sigma^\eta_\Minsf,\sigma_\Maxsf))\,.$$

\item At every step of the construction, the distance between the valuations of $\rho^\eta$ and $\rho$ is at most $2\eta'$. Indeed, the valuations may differ only on the clocks in $\Xflawed$. Before a clock $y$ is added to $\Xflawed$, both copies of $y$ agree on a valuation $\leq \eta'$.
A reset occurs on $\rho$ before at most another \(\eta'\) units of elapsed time, therefore skipping that reset creates discrepancy at most \(2\eta'\).
   Whenever a clock in $\Xflawed$ is reset, the discrepancy on that clock is eliminated. This is why "shortcut" operations are restricted to plays that do not reset any clock in $\Xflawed$, thereby preventing discrepancies from accumulating.

Moreover, every "shortcut" operation incurs a cost no greater than that of the play it replaces. This follows from the fact that $\ell$ is a minimum-weight location among the non-urgent locations of $\Gcal'$, and that every cycle has null transition weight. Consequently, upon leaving $\Gcal'$, the cost of the "shortcut" play $\rho^\eta$ exceeds the cost of  $\rho$ by at most $2\eta' \cdot \slopemax = \eta$.

%Be aware that, for the purposes of this argument, we depart from the convention that infinite plays have cost $+\infty$. Instead, we interpret the cost of $\rho^\id$ as the infinite sum of its transition costs.
%limit of the costs of its finite prefixes.
\item $\rho$ verifies the required property $\weight(\rho)\leq \sup_{\sigma'_\Maxsf}\weight
(\play(\sigma_\Minsf,\sigma'_\Maxsf))$.
%\item 
%There exists a strategy $\sigma_\Maxsf$ such that $\rho^\id=\play_{\Gcal'}(\sigma^\id_\Minsf,\sigma_\Maxsf)$.
\end{itemize}
Therefore,
\begin{align*}
\mathsf{Val}_{\Gcal''} 
&=
  \sup_{\sigma_\Maxsf}\inf_{\sigma_{\Minsf}}  
\mathsf{weight}(\play_{\Gcal''}(\sigma_{\Minsf},\sigma_{\Maxsf}))\\
&\leq \sup_{\sigma_\Maxsf}\inf_{\sigma_{\Minsf}}
 \mathsf{weight}(\play_{\Gcal''}(\sigma^\eta_{\Minsf},\sigma_{\Maxsf}))                                             
                                              \\
&= \sup_{\sigma_\Maxsf}\inf_{\sigma_{\Minsf}}
 \mathsf{weight}(\play_{\Gcal'}(\sigma^\eta_{\Minsf},\sigma_{\Maxsf}))                                             
                                              \\
&= 
\sup_{\sigma_\Maxsf}\inf_{\sigma_{\Minsf}}
\weight(\rho^\eta)
\leq  
\sup_{\sigma_\Maxsf}\inf_{\sigma_{\Minsf}}
\weight(\rho)+\eta\\
&\text{ with $(\rho^\eta,\rho)$ generated for fixed $\sigma_\Minsf,\sigma_\Maxsf$}                                       \\
 &\leq
 \sup_{\sigma_\Maxsf}\inf_{\sigma_{\Minsf}}\sup_{\sigma_{\Maxsf'}}
 \mathsf{weight}(\play_{\Gcal'}(\sigma_\Minsf,\sigma_{\Maxsf'})) + \eta\\
 &=
 \sup_{\sigma_\Maxsf} \mathsf{Val}_{\Gcal'}+ \eta=
\mathsf{Val}_{\Gcal'}+\eta
\end{align*}

%%%%%%%%%%%%%%%%%%%%%%%%%%%%%%%%%%%%%%%%%%%%%%%%%%%%%%%%%%%%%%%%%%%%
%ELL BELONGS TO MAX
%%%%%%%%%%%%%%%%%%%%%%%%%%%%%%%%%%%%%%%%%%%%%%%%%%%%%%%%%%%%%%%%%%%%
\item[If $\ell$ belongs to $\Maxsf$,]
consider $\Gcal'''$ a copy of $\Gcal''$ such that
$\Maxsf$ is forced to pick either delay $0$ or $\geq \frac\eta{2\slopemax}$ in $\ell$~\footnote{Non-integer guards are usually not supported by \wtgs. However, the goal in this section is to transform a \wtg into an acyclic \wtg, whose value is computable even with non-integer guards.}.

Then
% ${\Gcal'}$ offers more freedom to $\Maxsf$ than ${\Gcal''}$, therefore
$\Value{\Gcal'}\geq \Value{\Gcal''}\geq \Value{\Gcal'''}$.
Indeed, strategies for $\Minsf$ can be translated between $\Gcal'$ and $\Gcal''$ in both directions, whereas strategies for $\Maxsf$ in $\Gcal''$ can be translated to $\Gcal'$ but not conversely.
We reason similarly for $\Gcal''$ and $\Gcal'''$.
 Note that the additional limitation imposed on $\Maxsf$ in $\Gcal'''$ compared to $\Gcal'$ is due to the non-integer guards on transitions between copies, rather than to the number of copies themselves. Indeed, within a single time unit, a play can traverse at most $\frac{1}{\eta'} < K_\eta$ transitions between copies.

To prove the other inequality, fix $\sigma_\Minsf$ a $\Minsf$ strategy in $\Gcal'''$, and $\sigma_\Maxsf$ a $\Maxsf$ strategy in $\Gcal'$.
Let us define the following strategies $\sigma^\eta_\Minsf$ in $\Gcal'$ and $\sigma^\eta_\Maxsf$ in $\Gcal'$:
%$$\sup_{\sigma_\Maxsf}
%\mathsf{weight}(\play_{\Gcal'}(c,\sigma_{\Minsf},\sigma_{\Maxsf}))
%-
%\sup_{\sigma_\Maxsf}
%\mathsf{weight}(\play_{\Gcal''}(c',\sigma_{\Minsf^\id},\sigma_{\Maxsf}))
%\leq \eta
%$$
%for any configuration $c$ of $\Gcal'$, where $c'$ is $c$ in the first $\Gcal'$ copy of $\Gcal''$.                       
\begin{itemize}
\item As before, we define $\sigma^\eta_\Minsf$ iteratively by constructing simultaneously a play $\rho^\eta=\play_{\Gcal'}(\sigma^\eta_\Minsf,\sigma_\Maxsf)$, and a play $\rho=\play_{\Gcal'''}(\sigma_\Minsf,\sigma^\eta_\Maxsf)$, both starting in the same configuration. Our construction guarantees that the two plays traverse the same path $\pi$.
However, this construction does not enforce that $\rho^\eta$ and $\rho$ be synchronized; the upper clocks of $\rho^\eta$ can be ahead of those of $\rho$. 
 \item The variable $\Delta$ stores the difference of elapsed time between the two plays. Initially, let $\Delta=0$. 
%  \item Whenever $\Maxsf$ takes a delayed transition along $\rho$ from some location $\ell_\Maxsf\neq \ell$,  with a constraint $x=1$,
% we define the corresponding play $\rho^\id$ so that $\Maxsf$ takes the same transition with a valid delay.
 \item Whenever $\sigma_\Maxsf$ prescribes a delayed transition $(\delta,t)$ along $\rho^\eta$ from some location $\ell_\Maxsf\neq \ell$, 
let $\sigma_\Maxsf^\eta$ prescribes $(\delta,t)$. We assume that $t$ has no guard of the form $(x=1)$(these can be lumped with output transitions). 
Since the upper clocks in $\rho_\eta$ can only be ahead of $\rho$, guards of the form $x\leq 1$ are verified. Thus, because of trimming and relaxed guards, this delayed transition is valid. 
\item Whenever $\sigma_\Maxsf$ prescribes a delayed transition from $\ell$ along $\rho^\eta$, with delay either $0$ or $\geq\eta'$ from $\ell$ in ${\Gcal'}$, let $\sigma^\eta_\Maxsf$ prescribes the same delayed transition.
\item However, when $\sigma_\Maxsf$ prescribes a delayed transition from $\ell$ along $\rho^\eta$ with delay  $0<\delta<\eta'$, this transition cannot be reproduced with the same delay in $\rho$. 

Now, if $\Delta+\delta<\eta'$, we define $\rho$ such that $\sigma^\eta_\Maxsf$ prescribes the corresponding transition with delay $0$ in $\Gcal'''$ (thus staying in the same copy). Since $\Gcal'$ is "relaxed", this transition is valid. Add $\delta$ to $\Delta$.

Otherwise, $\Delta+\delta\geq \eta'$. Then let us define $\rho$ such that $\sigma^\eta_\Maxsf$ prescribes the corresponding transition with delay $\Delta+\delta$ (thus jumping to the next copy). 
This delayed transition is valid since it synchronizes the pair $(\rho^\eta,\rho)$. 
 Reset $\Delta$ to $0$.

\item
Whenever $\sigma_\Minsf$ prescribes a transition of delay $\delta>0$ in $\Gcal'''$, $\sigma^\eta_\Minsf$ prescribes the corresponding transition with delay $\delta'=\max (0,\delta-\Delta)$. Relaxed guards and trimming in $\Gcal'$ ensure that this delayed transition is valid. Let $\max(0,\Delta-\delta)$ be the new value in $\Delta$.

\end{itemize}

The pair $(\rho^\eta,\rho)$ satisfies the following properties:
\begin{itemize}
\item The difference between a pair of clock valuations along $(\rho^\eta,\rho)$  never exceeds $\eta'$: 
Indeed,
consider a pair of valuations $(\nu^\eta,\nu)$ at some step of $(\rho^\eta,\rho)$. First observe that for any clock $y$, $\nu^\eta(y)-\nu(y)\leq \nu^\eta(x)-\nu(x)=\Delta\leq \eta'$ by construction.

 Now let us show that $\nu(y)-\nu(y)^\eta\leq
 {\eta'}-\Delta$. For the sake of contradiction, assume that 
\begin{align*}
\nu(y)-\nu(y)^\eta&>
 {\eta'}-\Delta\\
&=  {\eta'}-(\nu^\eta(x)-\nu(x)),\\
\text{i.e., }
 \nu^\eta(x) -\nu^\eta(y) &>
{\eta'} + \nu(x) -\nu(y).
\end{align*}
Since the gap $\nu^\eta(x)-\nu^\eta(y)$ increases only when $y$ is reset, the above inequality implies that there must have been a previous pair of corresponding configurations $(\nu'^\eta,\nu')$ such that
$\nu'^\eta(y)=\nu'(y)=0$ and
$
\Delta=\nu'^\eta(x)-\nu'(x)>\eta'
$,
contradicting the previous claim.

\item If $(\rho^\eta,\rho)$ ends with a time difference $\Delta$, then 
\begin{align*}
\weight(\rho^\eta)&\leq \weight(\rho) + \Delta\cdot w(\ell) \\
&\leq \weight(\rho) + \eta'\cdot \wmax, 
\end{align*}

since every delay taken by $\rho^\eta$ in $\ell$, a location of least weight, was compensated in $\rho$, except for $\Delta$ time units.

Taking into account the output functions, we obtain
$$\weight(\rho^\eta)\leq  \weight(\rho) + \eta'\cdot \wmax +\eta'\cdot \slopemax \leq \weight(\rho)+\eta.$$
%\item There exists a strategy $\sigma_\Maxsf$ such that $\rho^\id=\play_{\Gcal''}(\sigma^\id_\Minsf,\sigma_\Maxsf)$.

%In the case where $(\rho,\rho^\id)$ ends with a $x=1$ constraint, 
%let $\Delta'$ be the last $\Delta>0$ seen along the path (or $0$ if it does not exists. Then
%\begin{align*}
%\weight(\rho)&\leq \weight(\rho^\id) + \Delta'\cdot w(\ell) +\Delta'\cdot\wmax\\
%&\leq \weight(\rho^\id) + 2\eta'\cdot \wmax\leq \weight(\rho^\id)+\eta  
%\end{align*}

%\item $\rho^\id$ verifies the "ideal strategy invariant", thus $\Value{\Gcal'''}\geq \weight(\rho^\id)$.

\end{itemize}

Therefore, 

\begin{align*}
\mathsf{Val}_{\Gcal'} &=
    \sup_{\sigma_\Maxsf}\inf_{\sigma_{\Minsf}}
                                              \mathsf{weight}(\play_{\Gcal'}(\sigma_{\Minsf},\sigma_{\Maxsf}))\\
&\leq \sup_{\sigma_\Maxsf}\inf_{\sigma_{\Minsf}}
                                              \mathsf{weight}(\play_{\Gcal'}(\sigma^\eta_{\Minsf},\sigma_{\Maxsf}))                                             
                                              \\
%                                              
%&=\sup_{\sigma_\Maxsf}\inf_{\sigma_{\Minsf}}
%                                              \mathsf{weight}(\rho^\eta)
%                                               \leq
%\sup_{\sigma_\Maxsf}\inf_{\sigma_{\Minsf}} \mathsf{weight}(\rho) + \eta\\
%                                              \\
&\leq
 \sup_{\sigma_\Maxsf}\inf_{\sigma_{\Minsf}}
                                              \mathsf{weight}(\play_{\Gcal'''}(\sigma_{\Minsf},\sigma^\eta_{\Maxsf})) + \eta\\
                                              &\leq
 \sup_{\sigma_\Maxsf}\inf_{\sigma_{\Minsf}}
                                              \mathsf{weight}(\play_{\Gcal'''}(\sigma_{\Minsf},\sigma_{\Maxsf})) + \eta
                                             =
\mathsf{Val}_{\Gcal'''}+\eta \qedhere
\end{align*}

Therefore $\Value{\Gcal'}-\eta  \leq \Value{\Gcal''}\leq \Value{\Gcal'}$.

%Note that in $\Gcal'''$, $\Maxsf$ can jump copies at most $
%\left\lfloor \frac{1}{\eta'} \right\rfloor
%$ times in one time unit, so the string of copies is long enough to allow that.

\end{description}

%In both cases, we obtain a decomposition of the"SCC"into a chain of
%$$
%(d+1)\cdot \slopemax \cdot \left\lceil \frac{1}{\eta}\right\rceil
%$$
%copies, where waiting in $\ell$ corresponds to moving to the next copy of the chain.
%
%Moreover, the value of this chain is at most the value of the original "SCC" plus $\eta$. Finally, every "SCC" occurring in the chain contains only $n-1$ non-urgent locations. Hence, the induction hypothesis applies.
\end{proof}
%%%%%%%%%%%%%%%%%%%%%
%INDUCTION
%%%%%%%%%%%%%%%%%%%%
Let us now inject \cref{dealing-with-least-weight-state} into an induction.

\begin{restatable}[SCC unfolding]{lemma}{SCCinduction}
\label{lem-induction}

Let $m$ be the number of locations of $\Gcal'$, and $n$ the number of "non-urgent" locations.

Then, for any $\eta>0$, there exists an acyclic WTG $\Gcal''$  of 
tree depth $\Nind{\eta,n,m}$, with non-integer guards on  some $\Maxsf$ transitions, such
that $\distinf{\Value{\Gcal'}(\ell'_0,\cdot)-\Value{\Gcal''}(\ell''_0,\cdot)}\leq \eta\cdot n$.
\end{restatable}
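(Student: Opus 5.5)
We argue by induction on $n$, the number of "non-urgent" locations, with $\eta$ and $m$ arbitrary. The statement is proved simultaneously for every sub-\wtg that satisfies the standing hypotheses of this subsection: "up-driven", relaxed, trimmed, zero transition weight on cycles, and value and $\opt{}{}$ functions of slope at most $\slopemax$. A small point is that tree depth is counted in transitions along a path, so $\Nind{\eta,0,m}=m$ bounds the number of steps.

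\textbf{Base case $n=0$.} Every location is "urgent", so every internal transition carries a guard $y=0$. No time elapses inside $\Gcal'$ and the valuation is unchanged up to resets of clocks that are already null. Moreover every cycle has null transition weight, so a cycle inside $\Gcal'$ changes neither cost nor valuation. Min therefore gains nothing by cycling, and Max cannot profit from cycling either, because infinite plays are losing for Min only if Min cannot exit, and finite value rules that out. Hence both players may restrict to simple paths of length at most $m$. Unfolding $\Gcal'$ to depth $m$ then yields an acyclic \wtg with the same value, giving error $0\leq \eta\cdot 0$.

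\textbf{Induction step $n\to n+1$.} Pick a "non-urgent" location $\ell$ of minimal weight among non-urgent locations and apply \cref{dealing-with-least-weight-state} with budget $\eta$. This produces a string of $K_\eta+1$ copies whose value is $\eta$-close to that of $\Gcal'$. In each copy, $\ell$ is effectively urgent, since waiting in $\ell$ leaves the copy. Each copy is therefore a sub-\wtg with at most $n$ non-urgent locations and $m$ locations, whose output functions are the values of the later copies (or the original outputs).

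We then check that each copy still satisfies the hypotheses. It is "up-driven" and relaxed, since the copies are syntactic restrictions. The Lipschitz bound $\slopemax$ of the outputs is the delicate point, because the Max case introduces non-integer guards. I would argue that the bisimulation proof of \cref{thm-Lip-TB} goes through for the acyclic pieces, or else simply carry the hypothesis as stated in the subsection. Then process the copies from last to first, applying the induction hypothesis to each with budget $\frac{\eta}{K_\eta+1}$. Each copy becomes acyclic of depth $\Nind{\frac{\eta}{K_\eta+1},n,m}$, and concatenating $K_\eta+1$ of them gives depth $(K_\eta+1)\Nind{\frac{\eta}{K_\eta+1},n,m}=\Nind{\eta,n+1,m}$.

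\textbf{Error accounting.} Replacing the output function of a copy by an approximation that is off by at most $\delta$ in sup-norm changes its value by at most $\delta$, since value is monotone and translation-equivariant in the outputs. The errors along the chain therefore add. The $K_\eta+1$ copies each contribute at most $\frac{\eta}{K_\eta+1}\cdot n$, for a total of at most $\eta n$. Adding the $\eta$ from the unfolding lemma gives $\eta(n+1)$, as required.

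\textbf{Main obstacle.} The hardest part is justifying that the hypotheses needed by \cref{dealing-with-least-weight-state} survive inside each copy, particularly the slope bound once approximate outputs with non-integer guards are involved. The base-case claim that cycles are useless also needs care, and there I would rely on the null cycle weights and the finite-value assumption.
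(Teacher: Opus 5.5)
Your proposal is correct and takes the same route as the paper. It inducts on the number of non-urgent locations and unfolds the all-urgent base case to depth $m$; the induction step applies \cref{dealing-with-least-weight-state} with budget $\eta$, then the induction hypothesis with budget $\eta/(K_\eta+1)$ to each of the $K_\eta+1$ copies, yielding the same depth recurrence and total error $\eta+(K_\eta+1)\cdot\frac{\eta}{K_\eta+1}\cdot n$. Your additive error propagation along the chain (value is monotone and translation-equivariant in the output functions) and your flag on the slope hypothesis for the copies' outputs make explicit two points that the paper's proof leaves implicit.
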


\begin{proof}
By induction on the number of "non-urgent" locations in $\Gcal'$.

In an all-urgent finite-valued SCC with zero internal transition weights, $\Minsf$ has a strategy reaching an output within at most \(m\) steps.
Therefore $\Gcal'$ can be unfolded into an equivalent acyclic \wtg of tree depth $m$.

If there are $n$ "non-urgent" locations, then according to \cref{dealing-with-least-weight-state}, $\Gcal'$ can be unfolded into a string of $
K_\eta +1
$ copies of itself with $n-1$ "non-urgent" locations in each copy, for a value difference at most $\eta$. Fix $\eta'=\frac\eta{K_\eta+1}$. 
Then by induction hypothesis, any copy in the string can be unfolded into an acyclic \wtg of tree depth $\Nind{\eta',n-1,m}$ with value difference $\eta'\cdot (n-1)$.
Thus $\Gcal'$ can be unfolded into an acyclic \wtg of tree depth $\Nind{\eta,n,m}$ with value difference $\eta\cdot n$.\qedhere

\end{proof}

To conclude the proof of \cref{VIA-convergence-rate}, it remains to observe that the
existence of such an acyclic \wtg{} $\Gcal''$ implies the existence of a
$(\eta\cdot n)$-optimal strategy for $\Minsf$ in $\Gcal'$ whose plays have length at
most $\Nind{\eta,n,m}$. Indeed, every strategy in $\Gcal''$ naturally induces a strategy
in $\Gcal'$ by folding the copies back together. This argument relies crucially on the
construction of \cref{dealing-with-least-weight-state}: when the location of minimum
weight belongs to $\Maxsf$, the derived $\eta$-optimal strategy for $\Minsf$ preserves
the length bound on plays. Consequently, unfolding the SCC does not introduce longer
plays, and the bound $\Nind{\eta,n,m}$ carries over to the original game.

%\todo{we might have a SCC problem here}

\subsection{Proof of \cref{unfolding-N}}

\unfoldingN*

\begin{proof}

Observe that $K_\eta +1 \leq \frac{2d\cdot\slopemax}{\eta} +d+1
\leq \frac{2d\cdot\slopemax+d+1}{\eta} 
$ if $\eta<1$.
%Let
%\[
%B(x)=\frac{C}{x}+D,
%\]
%and define
%\[
%f(x,0)=A,
%\qquad
%f(x,n)=B(x)\,f\!\left(\frac{x}{B(x)},n-1\right).
%\]

Introduce the sequence
\[
x_0=\eta,
\qquad
x_{k+1}=\frac{x_k}{K_{x_k}+1}
\geq \frac{x_k^2}{2d\cdot\slopemax+  (d+1)}.
\]

By repeatedly unfolding the recurrence, we obtain
\[
\Nind{\eta,n,n}
=
\prod_{k=0}^{n-1} \left(K_{x_k}+1\right)\cdot
\Nind{x_n,0,n}
=
\prod_{k=0}^{n-1} \left(K_{x_k}+1\right)\cdot
n.
\]

By definition of \(x_{k+1}\),
$
K_{x_k}+1= \frac{x_k}{x_{k+1}}
$.
Therefore,
\[
\prod_{k=0}^{n-1} (K_{x_k}+1)
=
\prod_{k=0}^{n-1}\frac{x_k}{x_{k+1}}
=
\frac{x_0}{x_n}.
\]

Now, defining
$
y_k=\frac{x_0}{x_k}
$,
we have
$
y_0=1
$,
and
$$
y_{k+1} 
\leq y_k^2\cdot \frac{2d\cdot\slopemax+d +1}{\eta}
\;\text{ thus }\;
y_n\leq \left(\frac{2d\cdot\slopemax+d +1}{\eta}\right)^{2^{n}-1}
$$

and therefore
%\[
%\Nind{\eta,n,n}\leq |L|\cdot \left(\frac{2d\cdot\slopemax+d +1}{\eta}\right)^{2^{n}-1}.
%\]
\[
\Nind{\eta,n,n}\leq \left(\frac{2d\cdot\slopemax+d +1}{\eta}\right)^{2^{n}-1}\cdot n.
\]

Thus
\begin{align*}
N'_\epsilon&= \Nind{\epsilon_n, n,n} \cdot n \\
&\leq 
 \left(\frac{2d\cdot\slopemax+d +1}{\epsilon_n}\right)^{2^{ n}-1}
 \cdot n^2 \\
&= \left(\frac{2d\cdot\slopemax+d +1
}{\epsilon}\right)^{2^{ n}-1}
\cdot n^{2^{n+1}}\,.
\end{align*}\qedhere

\end{proof}

\end{document}